\definecolor{grey1}{rgb}{0.5,0.5,0.5}
\definecolor{grau}{rgb}{0.8,0.8,0.8}
\newcommand{\chen}[1]{\color{orange}}
\numberwithin{equation}{section}
\newtheorem{theorem}{Theorem}[section]
\newtheorem{lemma}[theorem]{Lemma}
\theoremstyle{remark}
\newtheorem{definition}[theorem]{Definition}
\newtheorem{assumption}{Assumption}
\newtheorem{result}{Result}[section]
\newtheorem{remark}{Remark}
\DeclareMathOperator*{\argmin}{arg\,min}
\DeclareMathOperator*{\argmax}{arg\,max}
\newcommand{\prob}{{\mathbb{P}}}
\newcommand{\var}{{\mathrm{var}}}
\newcommand{\expect}{\mathbb{E}}
\newcommand{\iidsim}{{\overset{\mathrm{i.i.d.}}{\sim}}}
\newcommand{\transpose}{^{\mathrm{T}}}
\newcommand{\inverse}{^{-1}}
\newcommand{\halfpower}{^{1/2}}
\newcommand{\invhalfpower}{^{-1/2}}
\newcommand{\frobenius}{_{\mathrm{F}}}
\newcommand{\twotoinfinity}{_{2\to\infty}}
\newcommand{\diff}{\mathrm{d}}
\newcommand{\bDelta}{{\bm{\Delta}}}
\newcommand{\bTheta}{{\bm{\Theta}}}
\newcommand{\bSigma}{{\bm{\Sigma}}}
\newcommand{\bgamma}{{\bm{\gamma}}}
\newcommand{\eye}{{\mathbf{I}}}
\newcommand{\one}{{\mathbbm{1}}}
\newcommand{\bmu}{{\bm{\mu}}}
\newcommand{\zero}{{\bm{0}}}
\newcommand{\keywords}[1]{\par\addvspace\baselineskip\noindent\enspace\ignorespaces \textbf{Keywords: }#1}
\author{
Dingbo Wu \thanks{Department of Statistics, Indiana University}
\and 
Fangzheng Xie \thanks{Department of Statistics, Indiana University}
\thanks{Correspondence should be addressed to Fangzheng Xie (fxie@iu.edu)}
}
\title{SANVI: A Fast Spectral-Assisted Network Variational Inference Method with an Extended Surrogate Likelihood Function}
\date{}
\begin{document}
\allowdisplaybreaks

\maketitle

\begin{abstract}
Bayesian inference has been broadly applied to statistical network analysis, but suffers from the expensive computational costs due to the nature of Markov chain Monte Carlo sampling algorithms. This paper proposes a novel and computationally efficient Spectral-Assisted Network Variational Inference (SANVI) method within the framework of the generalized random dot product graph. The key idea is a cleverly designed extended surrogate likelihood function that enjoys two convenient features. Firstly, it decouples the generalized inner product of latent positions in the random graph model. Secondly, it relaxes the complicated domain of the original likelihood function to the entire Euclidean space. Leveraging these features, we design a computationally efficient Gaussian variational inference algorithm via stochastic gradient descent. Furthermore, we show the asymptotic efficiency of the maximum extended surrogate likelihood estimator and the Bernstein-von Mises limit of the variational posterior distribution. 
Through extensive numerical studies, we demonstrate the usefulness of the proposed SANVI algorithm compared to the classical Markov chain Monte Carlo algorithm, including comparable estimation accuracy for the latent positions and less computational costs.
\end{abstract}

\keywords{generalized random dot product graphs, extended surrogate likelihood, variational Bayes, stochastic gradient descent}

\section{Introduction} 
\label{sec:introduction}

Using graphs, a mathematical abstraction of real-world networks, to represent relational data, with the vertices denoting entities and the edges encoding relationships between connected entities, has been attracting attention in a broad range of applications, such as social networks \citep{girvan-newman-2002, wasserman-faust-1994-social, young-scheinerman-2007-rdpg}, biological networks \citep{girvan-newman-2002, tang-ketcha-badea-2019}, and computer networks \citep{neil-uphoff-hash-2013-6623779, rubin-adams-heard-2016-7745482}, among others.
Network analysis also connects to other fields beyond statistics, including computer science, machine learning, probability, and physics.
A variety of network models that are conformable to statistical analyses have been developed, including the renowned stochastic block model \citep{holland-1983-sbm} as well as its offspring \citep{airoldi-blei-fienberg-2008, karrer-newman-2011-sbm, lyzinski-tang-athreya-2017}, the (generalized) random dot product graph model \citep{rubin-cape-tang-2022, young-scheinerman-2007-rdpg}, the latent space model \citep{hoff-raftery-handcock-2002}, exchangeable random graphs \citep{caron-fox-2017, lei-2021-graphroot}, and graphons \citep{lovasz-2012-large}.
Meanwhile, there has also been substantial progress on the subsequent inference tasks for the latent structures of network models, such as community detection \citep{abbe-2018-cdsbm, abbe-bandeira-hall-2016-sbm, lei-rinaldo-2015-sbm, sussman-tang-fishkind-2012}, vertex classification \citep{sussman-tang-priebe-2014, tang-sussman-priebe-2013}, and network hypothesis testing \citep{lei-2016-testsbm, tang-athreya-sussman-2017-sptest, tang-athreya-sussman-2017-nptest}.

In this paper, we focus on the generalized random dot product graph (GRDPG). Informally, GRDPG assigns each vertex a low-dimensional vector called the latent position, and the connection probability between any pair of vertices is given by the generalized inner product of the associated latent positions. We defer the formal definition to Section \ref{subsec:grdpg}. GRDPG has been attracting attention because it not only has a simple low-rank structure but also is versatile as it encompasses several popular network models, such as stochastic block models \citep{holland-1983-sbm}, degree-corrected stochastic block models \citep{karrer-newman-2011-sbm}, mixed membership stochastic block models \citep{airoldi-blei-fienberg-2008}, and degree-corrected mixed membership models \citep{jin-ke-luo-2023}. GRDPG also provides building blocks for approximating general latent position random graphs \citep{lei-2021-graphroot, tang-sussman-priebe-2013}. 

Graph data is usually represented in the form of an adjacency matrix. Due to the low expected rank of the adjacency matrix generated from a GRDPG, spectral methods have been widely applied in statistical analysis of graph data, among which the adjacency spectral embedding (ASE) is a popular one. The random dot product graph (RDPG) community has been developing theory and methods based on ASE. The readers are referred to 
\cite{athreya-priebe-tang-2016, athreya-tang-park-2021, koo-tang-trosset-2023, levin-levina-2025, levin-roosta-tang-2021, li-levina-zhu-2020, lyzinski-sussman-tang-2014, rubin-cape-tang-2022, sengupta-chen-2017, sussman-tang-priebe-2014, sussman-tang-fishkind-2012, tang-athreya-sussman-2017-sptest, tang-athreya-sussman-2017-nptest, tang-priebe-2018, xie-2023-euclidean, xie-2024-spn-bernoulli, xie-wu-2023-eigen, xie-xu-2020-bayes, xie-xu-2023-os, young-scheinerman-2007-rdpg} for an incomplete list of references. 
However, it has also been observed in \cite{xie-xu-2020-bayes} that spectral estimators do not take advantage of the likelihood information of the network adjacency matrix, and likelihood-based methods for (generalized) RPDG are comparatively unexplored. This research theme aims to develop a novel likelihood-based method for learning GRDPG that is computationally efficient, numerically stable for finite-sample problems, and theoretically solid and optimal.

Recently, \cite{xie-xu-2023-os} discovered a striking fact: Spectral estimators are sub-optimal for estimating the latent positions due to the negligence of the graph likelihood structure. Specifically, \cite{xie-xu-2023-os} proposed a one-step estimator (OSE) that absorbs the network likelihood information and established that OSE improves upon ASE. Better estimation of the latent positions is not only interesting by itself but also useful for more effective subsequent inference methods, such as more powerful hypothesis testing of the equality of latent positions \citep{xie-2024-spn-bernoulli} or membership profiles in mixed membership models \citep{fan-fan-han-2022}.

Despite the large-sample optimality, OSE typically requires comparatively large network sizes to outperform ASE \citep{xie-xu-2023-os}. For small-network problems, OSE can be numerically unstable because the estimated Hessian matrix may contain negative eigenvalues. Subsequently, \cite{wu-xie-2022-sl} developed a Bayesian method for RDPG based on a cleverly-designed surrogate likelihood that retains more likelihood information than OSE does.
The Bayes estimate based on the surrogate likelihood is not only asymptotically efficient but also exhibits superior numerical stability compared to OSE and ASE, even for moderately small network sizes.

Nevertheless, the Bayesian methods, although theoretically solid and numerically competitive, are practically inconvenient. This is largely due to the expensive computational cost associated with Markov chain Monte Carlo (MCMC) sampling methods. Compared to classical MCMC methods, variational inference (VI) methods \citep{blei-kucukelbir-mcAuliffe-2017}
have emerged as a popular alternative. Unlike MCMC, VI is optimization-based, which tends to be faster while still having comparable numerical performance. VI methods have been gaining rapid development recently. Readers are referred to \cite{bhattacharya-pati-yang-2025, katsevich-rigollet-2024, zhang-yang-2024-jrsssb, wang-blei-2019, han-yang-2019, hinton-camp-1993, jordan-ghahramani-jaakkola-1998, peterson-anderson-1987} and references therein for the recent advances of VI methods in general. For VI methods in the context of network models, \cite{loyal2024fast} and \cite{JMLR:v25:22-0514} developed structured mean-field VI methods for dynamic networks that were built upon latent space models, and they do not apply to the GRDPG framework. 

In this paper, we propose a computationally efficient spectral-assisted network variational inference (SANVI) method through a pivotal extended surrogate likelihood (ESL) function in the context of GRDPG. Given a fixed vertex, SANVI minimizes the Kullback--Leibler (KL) divergence between a candidate Gaussian distribution and the posterior distribution of the latent position of interest, where the posterior distribution is computed based on the ESL function. Note that the algorithm can be parallelized thanks to the separable structure of the ESL function. We also establish the corresponding large sample theory, including the asymptotic efficiency of the proposed estimator and the Bernstein-von Mises theorem of the variational posterior distribution, thereby generalizing and popularizing the existing framework in \cite{wu-xie-2022-sl}.

The remaining part of this paper is structured as follows. In Section \ref{sec:grdpg-esl}, we review the background of GRDPG and ASE and introduce the ESL function. In Section \ref{sec:gaussian-vb}, we introduce SANVI based on the Gaussian VI by leveraging the ESL function, establish the asymptotic properties of the variational posterior distribution, and discuss the stochastic gradient descent algorithm for the computation of SANVI. In Section \ref{sec:numeric-ex}, we demonstrate the empirical finite-sample performance of SANVI through some simulated examples and the analysis of a real-world network dataset. We conclude the paper with a discussion in Section \ref{sec:discussion}.

\noindent
\textit{Notations}: Most of the notations that we mainly use in this paper are explained in the following.
The notation $[n]$ stands for the set of consecutive integers from $1$ to $n$, that is, $[n]=\{1,\ldots,n\}$.
The symbol $\lesssim$ means an inequality up to a constant, that is, $a\lesssim b$ if $a\leq C b$ for some constant $C>0$.
The constant $C$ can depend on some other constants, of which we use subscripts to denote the dependency, e.g., $C_{\delta,\lambda}$ showing the dependency of $C$ on $\delta$ and $\lambda$. A similar definition also applies to the symbol $\gtrsim$.
The notation $\|\bx\|_2$ denotes the Euclidean norm of a vector $\bx=[x_1,\ldots,x_d]\transpose\in\mathbb{R}^d$, that is, $\|\bx\|_2=(\sum_{k=1}^d x_k^2)\halfpower$.
The $d\times d$ identity matrix is denoted by $\eye_d$. The notation $\mathbb{O}(n,d)=\{\bU\in\mathbb{R}^{n\times d}:\bU\transpose\bU=\bI_d\}$ denotes the set of all orthonormal $d$-frames in $\mathbb{R}^n$, where $d\leq n$, and we write $\mathbb{O}(d) = \mathbb{O}(d,d)$.
For a matrix $\bX=[x_{ik}]_{n\times d}$, $\sigma_k(\bX)$ denotes its $k$th largest singular value.
Matrix norms with following definitions are used: the spectral norm $\|\bX\|_2 = \sigma_1(\bX)$, the Frobenius norm $\|\bX\|\frobenius = (\sum_{i=1}^n\sum_{k=1}^d x_{ik}^2)\halfpower$, the matrix infinity norm $\|\bX\|_\infty=\max_{i\in[n]}\sum_{k=1}^d|x_{ik}|$, and the two-to-infinity norm $\|\bX\|_{2\to\infty}=\max_{i\in[n]}(\sum_{k=1}^d x_{ik}^2)\halfpower$. In particular, these norm notations apply to any Euclidean vector $\bx\in\mathbb{R}^d$ viewed as a $d\times 1$ matrix. Given two symmetric positive semidefinite matrices $\bA,\bB$ of the same dimension, we write $\bA\preceq\bB$ ($\bA\succeq \bB$, respectively) if $\bB - \bA$ ($\bA - \bB$, respectively) is positive semidefinite. 
For a vector $\bx\in\mathbb{R}^d$, the notation $[\bx]_k=x_k$ denotes its $k$th coordinate.
For a matrix $\bX\in\mathbb{R}^{n\times d}$, the notation $\bX_{i*}$ denotes its $i$th row, $\bX_{*j}$ its $j$th column, and $x_{ij}$ its $(i,j)$th entry.
We use $\{\bW_n\}_{n=1}^\infty$ to denote the sequence of orthogonal matrices aligning a sequence of estimators $\{\widehat\bX_n\}_{n=1}^\infty$ and the true value, and we may drop the subscript $n$ for simplicity of notation.

\section{Generalized random dot product graphs and the extended surrogate likelihood}
\label{sec:grdpg-esl}

\subsection{Background on generalized random dot product graphs}
\label{subsec:grdpg}
We begin by briefly reviewing GRDPG and ASE.
\begin{definition}
[Generalized random dot product graph]
Let $n,d\in\mathbb{N}_+$, $n\geq d$, $p,q\in\mathbb{N}$ with $p + q = d$, and $\eye_{p, q} = \mathrm{diag}(1,\ldots,1,-1,\ldots,-1)$ with $p$ positive ones followed by $q$ negative ones on its diagonal.
Given an $n\times d$ matrix $\bX = [\bx_1,\ldots,\bx_n]\transpose$, where $\bx_1,\ldots,\bx_n\in\mathbb{R}^d$, with first $p$ columns orthogonal to last $q$ columns, such that $\bx_i\transpose\eye_{p,q}\bx_j\in [0, 1]$ for all $i, j\in [n]=\{1,\ldots,n\}$, we say that $\bA = [A_{ij}]_{n\times n}$ is the adjacency matrix of a generalized random dot product graph, denoted as $\bA\sim\mathrm{GRDPG}(\bX)$ with signature $(p, q)$ if $A_{ij}\sim\mathrm{Bernoulli}(\bx_i\transpose\eye_{p, q}\bx_j)$ independently for all $i\leq j$, and $A_{ij} = A_{ji}$ if $i > j$.
The matrix $\bX$ is referred to as the latent position matrix, and the $d$-dimensional vector $\bx_i$ is referred to as the latent position of vertex $i$. When $q = 0$, a GRDPG is also called a random dot product graph (RDPG).
\end{definition}

In this paper, we consider the latent positions $\bx_1,\ldots,\bx_n$ to be deterministic parameters to be estimated.
Another slightly different modeling approach is to consider $\bx_1,\ldots,\bx_n$ as independent and identically distributed latent random variables following some distribution $F$ supported on the latent space $\calX$ (see, for example, \citealp{athreya-priebe-tang-2016, tang-athreya-sussman-2017-nptest, tang-priebe-2018}).
This random formulation of the latent positions introduces implicit homogeneity and is connected to the infinite exchangeable random graphs \citep{janson-2008-graph}.
The same homogeneity condition was retained in \cite{xie-xu-2023-os} using a Glivenko--Cantelli type condition when $\bx_1,\ldots,\bx_n$ are deterministic.
The latter Glivenko--Cantelli type condition is also relaxed in the current work, as we only require that $\sigma_d(\bX) > 0$ (see Remark \ref{remark:identifiability} below).

\begin{remark}[Nonidentifiability]
\label{remark:identifiability}
For convenience, in this work, we follow the setup in \cite{xie-2024-spn-bernoulli} and require the first $p$ columns of the latent position matrix $\bX$ to be orthogonal to the last $q$ columns. For GRDPG with more general latent position matrices, please see \cite{rubin-cape-tang-2022}.
The latent position matrix $\bX$ is not uniquely identified in the following two senses. First, any low-rank  connection probability matrix $\bP = \bX\eye_{p,q}\bX\transpose$ can have different factorizations because for any orthogonal matrix with a $(p,q)$ block structure, $\bW=\mathrm{diag}(\bW_p,\bW_q)$, where $\bW\in\mathbb{O}(d)$, $\bW_p\in\mathbb{O}(p)$, $\bW_q\in\mathbb{O}(q)$, we have $\bX\eye_{p,q}\bX\transpose = (\bX\bW)\eye_{p,q}(\bX\bW)\transpose$. Second, for any $d'>d$ and any latent position matrix $\bX\in\mathbb{R}^{n\times d}$, there exists another matrix $\bX'\in\mathbb{R}^{n\times d'}$ such that $\bX\eye_{p,q}\bX\transpose=\bX'\eye_{p+(d' - d),q}(\bX')\transpose$. The latter source of non-identifiability can be removed by requiring that $\sigma_d(\bX)>0$, while the former source is inevitable without further constraints. Thus, any estimator of the latent position matrix $\bX$ can only recover it up to an orthogonal transformation.
\end{remark}

We consider undirected and unweighted graphs, so the adjacency matrices are binary and symmetric. We allow self-loops, so the adjacency matrices may have non-zero diagonal elements. One convenient feature of GRDPG is that the edge probability matrix $\bP=\expect\bA = \bP = \bX\eye_{p,q}\bX\transpose$ is low-rank. This motivates spectral decomposition methods for learning the latent position matrix $\bX$ \citep{rubin-cape-tang-2022,sussman-tang-fishkind-2012}.

\begin{definition}[Adjacency spectral embedding]
Given $\bA\sim\mathrm{GRDPG}(\bX)$, let $\bA$ yield spectral decomposition $\bA = \sum_{i = 1}^n\lambda_i(\bA)\widehat{\bu}_i\widehat{\bu}_i\transpose$, where $|\lambda_1(\bA)|\geq\ldots\geq|\lambda_n(\bA)|$, arranged in decreasing order of absolute value, are the eigenvalues of $\bA$, $\widehat{\bu}_i$ is the eigenvector associated with $\lambda_i(\bA)$, $\widehat{\bu}_i\transpose\widehat{\bu}_j = 0$ for all $i\neq j$, and $\|\bu_i\|_2 = 1$ for all $i\in [n]$.
Pick the first $d$ eigenvalues and the correspongding eigenvectors, and rearrange them in the decreasing order of the eigenvalues as real numbers, $\lambda_{k_1}(\bA)\geq\ldots\geq\lambda_{k_d}(\bA)$.
Then the adjacency spectral embedding of $\bA$ into $\mathbb{R}^{n\times d}$ is defined as $\breve{\bX} = [\breve{\bx}_1,\ldots,\breve{\bx}_n]\transpose = \bU_\bA|\bS_\bA|^{1/2}$, where $\bU_\bA = [\widehat{\bu}_{k_1},\ldots,\widehat{\bu}_{k_d}]$, $\bS_\bA = \mathrm{diag}\{\lambda_{k_1}(\bA),\ldots,\lambda_{k_d}(\bA)\}$, and $|\bS_\bA| = \mathrm{diag}\{|\lambda_{k_1}(\bA)|,\ldots,|\lambda_{k_d}(\bA)|\}$.
Also, the signature-adjusted adjacency spectral embedding of $\bA$ into $\mathbb{R}^{n\times d}$ is defined as $\widetilde{\bX} = [\widetilde{\bx}_1,\ldots,\widetilde{\bx}_n]\transpose = \bU_\bA|\bS_\bA|^{1/2}\mathrm{sgn}(\bS_\bA)$, where $\mathrm{sgn}(\cdot)$ is the sign function and $\mathrm{sgn}(\bS_\bA)$ applies entrywise on the diagonals of $\bS_\bA$.
\end{definition}


\subsection{The extended surrogate likelihood function}
\label{subsec:esl}

We now introduce the extended surrogate likelihood function for GRDPG.
The motivation is that the exact likelihood function has a complicated structure, bringing challenges for developing the theory and computation of maximum likelihood estimation.
The difficulty partially comes from the fact that GRDPG belongs to a curved exponential family, and the theory of the maximum likelihood estimation is much more difficult in curved exponential families than in canonical ones (see, for example, Section 2.3 in \citealp{bickel-doksum-2015}). 

Consider the log-likelihood function of $\bA\sim\mathrm{GRDPG}(\bX)$:
\[
\ell_\bA(\bX) =
\sum_{1\leq i\leq j\leq n}\{A_{ij}\log(\bx_i\transpose\eye_{p,q}\bx_j) + (1-A_{ij})\log(\bx_i\transpose\eye_{p,q}\bx_j)\}.
\]
The parameter space is defined by $\{\bX = [\bx_1,\ldots,\bx_n]\transpose\in\mathbb{R}^{n\times d}:0 < \bx_i\transpose\eye_{p,q}\bx_j < 1\text{ for all }i,j\}$, which is a complicated set whose boundary renders the maximum likelihood estimation intractable, both computationally and analytically. In addition, the log-likelihood function has an unbounded gradient over the boundary.

For the sake of generality, we introduce an $n$-dependent sparsity factor $\rho_n\in(0, 1]$ that governs the average expected degree of GRDPG through the quantity $n\rho_n$. Note that by taking $\rho_n\to 0$ as $n\to\infty$, we allow the modeling of sparse random graphs that are more practical for real-world network data. 
To distinguish a generic latent position $\bx_i\in\mathbb{R}^d$ and its true value associated with the data generating distribution, let $\rho_n^{1/2}\bx_{0i}$ denote the ground truth of $\bx_i$, $i\in[n]$, and $\bX_0 = [\bx_{01},\ldots,\bx_{0n}]\transpose$.
We first consider the log-likelihood function of a single $\bx_i$ when the remaining latent positions $\{\bx_{0j}\}_{j\neq i}$ are accessible:
\begin{equation}
\label{eqn:oracle-loglik}
\begin{aligned}
\ell_{0in}(\bx_i)
&= \sum_{j\neq i}^n\{A_{ij}\log(\rho_n^{1/2}\bx_i\transpose\eye_{p,q}\bx_{0j}) + (1 - A_{ij})\log(1 - \rho_n^{1/2}\bx_i\transpose\eye_{p,q}\bx_{0j})\}\\
&\quad + \{A_{ii}\log(\bx_i\transpose\eye_{p,q}\bx_i) + (1 - A_{ii})\log(1 - \bx_i\transpose\eye_{p,q}\bx_i)\}.
\end{aligned}
\end{equation}
We refer to $\ell_{0in}(\bx_i)$ in \eqref{eqn:oracle-loglik} as the oracle log-likelihood function.
Theorem 2 in \cite{xie-xu-2023-os} established the consistency and asymptotic normality of the maximizer of the oracle log-likelihood function $\ell_{0in}(\bx_i)$ in \eqref{eqn:oracle-loglik}. Nevertheless, the oracle log-likelihood is not computable because $\{\bx_{0j}\}_{j\neq i}$ are not accessible in practice.
Following the idea in \cite{wu-xie-2022-sl}, we replace the unknown latent positions together with the signature by the corresponding rows of the signature-adjusted adjacency spectral embedding. Formally, let $\widetilde{\bx}_j$ be the $j$th row of the signature-adjusted adjacency spectral embedding $\widetilde{\bX}$, $j\in [n]$. Then, we obtain the following approximation to the oracle log-likelihood:
\begin{align}
\label{eqn:loglik-approximation}
\ell_{0in}(\bx_i) \approx \sum_{j = 1}^n\{A_{ij}\log(\bx_i\transpose\widetilde{\bx}_j) + (1 - A_{ij})\log(1 - \bx_i\transpose\widetilde{\bx}_j)\}.
\end{align}
Note that the last term in $\ell_{0in}$ is replaced by $A_{ii}\log(\bx_i\transpose\widetilde{\bx}_i) + (1 - A_{ii})\log(1 - \bx_i\transpose\widetilde{\bx}_i)$ for convenience, which is asymptotically unimportant. The above approximation can be made precise by the uniform consistency of the adjacency spectral embedding: There exists a $d\times d$ orthogonal $\bW$ such that $\|\widetilde{\bX}\bW - \rho_n^{1/2}\bX_0\|_{2\to\infty} = O\{\sqrt{(\log n)/n}\}$ with high probability \citep{lyzinski-sussman-tang-2014, xie-2024-spn-bernoulli}.

The complication of the oracle log-likelihood function primarily comes from the constraint that $\bx_i\transpose\bx_j\in(0,1)$ for all $i,j\in[n]$. Nonetheless, the approximation step \eqref{eqn:loglik-approximation} does not fully resolve it. 
To address this technical challenge, \cite{wu-xie-2022-sl} proposed a surrogate likelihood method for learning the latent position matrix $\bX$ for RDPG 
by applying a Taylor's expansion of the term $\log(\bx_i\transpose\widetilde{\bx}_j)$ (note that in RDPG, $\eye_{p,q} = \eye_p$), such that the constraint $\bx_i\transpose\bx_j\in (0, 1)$ is relaxed to $\|\bx_i\|_2\leq 1$ due to the uniform consistency of ASE. The surrogate log-likelihood function for the entire graph is formed by taking the sum of the individual surrogate log-likelihood functions for each vertex. In particular, by doing so, the surrogate log-likelihood has a separable structure for each individual latent position $\bx_i$ and provides immediate convenience for both theoretical analysis and practical computation. 

\cite{wu-xie-2022-sl} observed that Bayesian methods are typically comparable and sometimes outperform the frequentist point estimator for RPDG, such as ASE and OSE. Additionally, Bayesian methods offer a natural and principled approach for uncertainty quantification. One disadvantage of the Bayesian method proposed in \cite{wu-xie-2022-sl} is the computational expense due to the nature of MCMC. This practical inconvenience motivates us to develop a computationally efficient VI method for GRDPG and to provide it with the necessary theoretical guarantee. 

The surrogate likelihood proposed \cite{wu-xie-2022-sl} is only defined on a compact subset $\prod_{i = 1}^n\{\bx_i\in\mathbb{R}^d:\|\bx_i\|_2\leq 1\}$ of the Euclidean space. In this work, we adopt the Gaussian VI and take the variational distribution family to be the space of all Gaussian distributions. The detailed formulation is deferred to Section \ref{sec:gaussian-vb}, but one requirement of Gaussian VI is that the target posterior distribution needs to be supported over $\mathbb{R}^d$. Hence, extending the feasible set of the surrogate likelihood to the entire Euclidean space is necessary.
In addition, the surrogate likelihood derivation relies on Taylor's expansion argument, which is necessary to drop the constraint that $\bx_i\transpose\widetilde{\bx}_j > 0$ for all $j\in [n]$.
However, this approximation step could still be rough in moderate and small network problems.
The extended surrogate likelihood to be developed will address the above issues while preserving the attractive features of the surrogate likelihood, including separability and log-concavity. 

For $\bA\sim\mathrm{GRDPG}(\bX)$, the local log-likelihood for a single latent position $\bx_i$ is
\[
\ell_{in}(\bx_i;\{\bx_j\}_{j\neq i}) = \sum_{j = 1}^n\{A_{ij}\log(\bx_i\transpose\eye_{p,q}\bx_j) + (1 - A_{ij})\log(1 - \bx_i\transpose\eye_{p,q}\bx_j)\}.
\]
Rather than using the $j$th row of ASE $\breve{\bx}_j$ directly, we use the signature-adjusted adjacency spectral embedding, $\widetilde{\bx}_j=\mathrm{sgn}(\bS_\bA)\breve{\bx}_j$, to replace $\eye_{p, q}\bx_j$, since $(p, q)$ may be unknown in practice.
Next, observe that the functions $\log(t)$ and $\log(1 - t)$ are both well defined over $[\tau, 1 - \tau]$ for a small threshold $\tau > 0$ , but not the entire $\mathbb{R}$.
It is desirable that these functions can be extended beyond this interval while certain regularities, such as differentiability and smoothness, are preserved.
For this purpose, let $\tau_n$ be a small positive number that may depend on $n$, and we define $\psi_n(t)$ with the following properties: $\psi_n(t) = \log(t)$ on $[\tau_n,1]$; $\psi_n$ is a quadratic function for $t<\tau_n$ and $t>1$; $\psi_n$ is twice continuously differentiable over $\mathbb{R}$.
Formally,
\begin{align*}
\psi_n(t) &=
\left\{\begin{aligned}
&\log(t),&\quad&\text{if }\tau_n\leq t \leq 1,\\
&-{t^2}/{(2\tau_n^2)} + {2t}/{\tau_n} + (\log \tau_n - 3/2),&\quad&\text{if }t \leq \tau_n,\\
&-{t^2}/{2} + 2t - {3}/{2},&\quad&\text{if }t > 1.
\end{aligned}\right.
\end{align*}
With these modifications of $\log(t)$ and $\log(1-t)$, we then define the local extended surrogate log-likelihood (ESL) function for the latent position of a single vertex $\bx_i$ for GRDPG as
\begin{align}\label{eqn:def-esl}
\widehat{\ell}_{in}(\bx_i) = \sum_{j = 1}^n\{A_{ij}\psi_n(\bx_i\transpose\widetilde{\bx}_j) + (1 - A_{ij})\psi_n(1-\bx_i\transpose\widetilde{\bx}_j)\}.
\end{align}
The global ESL function for the entire graph is defined as $\widehat{\ell}_n(\bX) = \sum_{i = 1}^n\widehat{\ell}_{in}(\bx_i)$. Here, the term ``extended'' means that the domain of the target function \eqref{eqn:def-esl} is extended to the entire $\mathbb{R}^d$ without constraint, and the term ``surrogate'' means that the unknown latent positions are replaced by their signature-adjusted ASE.

\section{Spectral-assisted network variational inference}
\label{sec:gaussian-vb}
We now leverage the ESL function to develop SANVI. To begin with, we first consider the posterior distribution of the latent positions $\bx_1,\ldots,\bx_n$ associated with the ESL function \eqref{eqn:def-esl}. Note that the spectral assistance occurs directly in the formulation of \eqref{eqn:def-esl} since ($\widetilde{\bx}_1,\ldots,\widetilde{\bx}_n$) are spectral estimators. 

It should be noted that we do not use the exact likelihood for the entire graph, as it is difficult to analyze and not separable in $i\in[n]$. Additionally, we do not use the oracle likelihood, since the true latent positions are unknown.
Instead, we substitute the signature-adjusted ASE for the unknown latent positions together with the signature in the oracle likelihood (hence the term ``surrogate'').
This idea of using a general statistical criterion function to replace the likelihood in the Bayes formula when the exact likelihood function is not available or intractable for analysis or computation is not new, and among the literature, an influential work is \cite{chernozhukov-hong-2003}. 

Formally, given independent prior distributions with densities $\pi_i(\bx_i)$ over $\mathbb{R}^d$, $i\in [n]$ for the latent positions $\bx_1,\ldots,\bx_n$, the posterior distribution of the latent position $\bx_i$ of a single vertex $i$ associated with the ESL function \eqref{eqn:def-esl} has the following density function up to a normalizing constant:
\begin{align}\label{eqn:esl-posterior}
\pi_{in}(\bx_i\mid\bA) \propto \exp\{\widehat{\ell}_{in}(\bx_i)\}\pi_i(\bx_i).
\end{align}
The joint posterior density of the latent position matrix $\bX$ of the entire graph takes the product form $\pi_n(\bX\mid\bA) = \prod_{i = 1}^n\pi_{in}(\bx_i\mid\bA)$ thanks to the separable structure of $\widehat{\ell}_n(\bX)$.
The exact computation of the posterior distribution in \eqref{eqn:esl-posterior} typically relies on MCMC and is generally inconvenient, even though the separable structure permits parallelization. Instead, we resort to VI methods and focus on the Gaussian VI. 

The goal of VI is to find a distribution $q(\bX)\in\calQ$ for the latent position matrix $\bX$, where $\calQ$ is a collection of candidate distributions over $\bX$ that are tractable to compute, such that the Kullback-Leibler (KL) divergence between $q(\bX)$ and the posterior distribution $\pi_n(\bX\mid\bA)$ is minimized. Formally, VI solves $\min_{q\in\calQ}D_{\mathrm{KL}}(q(\cdot)\|\pi_n(\cdot\mid\bA))$, where $D_{\mathrm{KL}}$ denotes the KL divergence. Since $\pi_n(\bX\mid\bA)$ factorizes as $\prod_i\pi_{in}(\bx_i\mid\bA)$, it is also reasonable to require that $\calQ$  reduces to the class of all product distributions of $\bx_i$'s: $\calQ = \{\prod_{i = 1}^nq_i(\bx_i):q_i\in\calP\}$, where $\calP$ is some distribution class for $\bx_i$. 

Specialized to the Gaussian VI, we take $\calP$ as the class of all $d$-dimensional (non-degenerate) multivariate Gaussian distributions, namely,
\[
\calQ = \left\{\prod_{i = 1}^n\calN(\bx_i\mid\bmu_i, \bSigma_i):\bmu_i\in\mathbb{R}^d, \bSigma_i\in\mathbb{M}_+(d),i\in [n]\right\},
\]
where $\calN(\bx_i\mid\bmu_i, \bSigma_i)$ denotes the multivariate Gaussian distribution of $\bx_i$ with mean $\bmu_i$ and covariance matrix $\bSigma_i$, and $\mathbb{M}_+(d)$ denotes the class of all $d\times d$ symmetric positive definite matrices.
Notationally, we use $\phi_d(\bx_i\mid\bmu_i, \bSigma_i)$ to denote the density function of the Gaussian distribution $\calN(\bx_i\mid\bmu_i, \bSigma_i)$.
It then follows that the Gaussian VI for the posterior distribution of vertex $i$ associated with the ESL function solves the following optimization problem:
\begin{align}\label{eqn:gvb-kl-div}
\min_{\bmu_i\in\mathbb{R}^d,\bSigma_i\in\mathbb{M}_+(d)}
{D}_{\mathrm{KL}}\left(\phi_d(\bx_i\mid\bmu_i, n^{-1}\bL_i\bL_i\transpose)\|\pi_{in}(\bx_i\mid\bA)\right),\quad i \in [n].
\end{align}
We call the Gaussian distribution with parameters being the solution to the above optimization problem the variational posterior distribution, and we call its mean parameter the variational inference estimator. We next introduce the computation and the theory of SANVI.

\subsection{Computation algorithm}
\label{subsec:computation}

Following the idea in \cite{xu-campbell-2023} and \cite{kucukelbir-tran-ranganath-2017}, we reparameterize the covariance matrix $\bSigma_i$ of $\calN(\bx_i\mid\bmu_i, \bSigma_i)$ using the Cholesky factorization $\bSigma_i = (1/n)\bL_i\bL_i\transpose$, where $\bL_i$ is a lower triangular matrix with positive diagonal entries.
With the change of variable $\bx_i = \bmu_i + n^{-1/2}\bL_i\bz_i$ where $\bz_i\sim\mathrm{N}_d(\zero_d, \eye_d)$, a simple algebra shows that the objective function of VI is 
\begin{align*}
&D(\phi_d(\bx_i\mid\bmu_i, n^{-1}\bL_i\bL_i\transpose)\|\pi_{in}(\bx_i\mid\bA))
\\
&= - \log\det(\bL_i) - \frac{d}{2}\log(2\pi) - \expect_{\bz_i}\bigg(\frac{1}{2}\|\bz_i\|_2^2\bigg) \\
&\quad - \expect_{\bz_i}\left\{\widehat{\ell}_{in}\left(\bmu_i + \frac{1}{\sqrt{n}}\bL_i\bz_i\right) + \log\pi_i\left(\bmu_i + \frac{1}{\sqrt{n}}\bL_i\bz_i\right)\right\} + \log(d_{in}),
\end{align*}
where $d_{in}=\int_{\mathbb{R}^d}\exp\{\widehat{\ell}_{in}(\bx_i)\}\pi_i(\bx_i)\mathrm{d}\bx_i$ is the marginal density of the data matrix $\bA$. Dropping the terms that do not depend on $\bmu_i$ and $\bL_i$, we define the Gaussian VI objective function
\begin{align}
\label{eqn:vb-objective-population}
F_{in}(\bmu_i,\bSigma_i) =
- \log\det(\bL_i) - \expect_{\bz_i}\left\{\widehat{\ell}_{in}\left(\bmu_i + \frac{1}{\sqrt{n}}\bL_i\bz_i\right) + \log\pi_i\left(\bmu_i + \frac{1}{\sqrt{n}}\bL_i\bz_i\right)\right\},
\end{align}
where $\widehat{\ell}_{in}(\bx_i)$ is the ESL function defined in \eqref{eqn:def-esl}, and $\calL$ is the class of all $d\times d$ lower-triangular matrices with positive diagonals.
Then the optimization problem \eqref{eqn:gvb-kl-div} is equivalent to
\begin{align}
\label{eqn:vb-objective-optimization}
\min_{\bmu_i\in\mathbb{R}^d, \bL_i\in\calL} F_{in}(\bmu_i,\bSigma_i).
\end{align}
Denote the noisy version of the objective function in  \eqref{eqn:vb-objective-population} by
\begin{align}
\label{eqn:vb-objective-sample}
f_{in}(\bmu_i, \bL_i, \bz_i) = - \log\det(\bL_i) - \widehat{\ell}_{in}\left(\bmu_i + \frac{1}{\sqrt{n}}\bL_i\bz_i\right) - \log\pi_i\left(\bmu_i + \frac{1}{\sqrt{n}}\bL_i\bz_i\right).
\end{align}
Then, a simple algebra shows that
\begin{equation}
\label{eqn:vb-objective-gradient}
\begin{aligned}
\frac{\partial f_{in}}{\partial\bmu_i}(\bmu_i, \bL_i, \bz_i)
&= \left.-\frac{\partial\widehat{\ell}_{in}}{\partial\bx_i}\left(\bx_i\right) - \frac{\partial}{\partial\bx_i}\log\pi_{i}\left(\bx_i\right)\right|_{\bx_i=\bmu_i + \frac{1}{\sqrt{n}}\bL_i\bz_i},\\
\frac{\partial f_{in}}{\partial\bL_i}(\bmu_i, \bL_i, \bz_i)
&= -\,\mathrm{diag}(\bL_i)^{-1} - \frac{1}{\sqrt{n}}\mathrm{tril}\left\{\frac{\partial\widehat{\ell}_{in}}{\partial\bx_i}\left(\bx_i\right)\bz_i\transpose\right\} \\
&\quad \left. - \frac{1}{\sqrt{n}}\mathrm{tril}\left\{\frac{\partial}{\partial\bx_i}\log\pi_{i}\left(\bx_i\right)\bz_i\transpose\right\}\right|_{\bx_i=\bmu_i + \frac{1}{\sqrt{n}}\bL_i\bz_i},
\end{aligned}
\end{equation}
where $\mathrm{tril}(\bB)$ replaces the upper triangular entries (excluding diagonals) of a $d\times d$ matrix $\bB$ with zeros.

Below, 
Theorem \ref{thm:vb-objective-fun} establishes the strong convexity of $F_{in}(\bmu_i,\bL_i)$.

\begin{theorem}\label{thm:vb-objective-fun}
Suppose Assumption \ref{assumption:esl-grdpg} and Assumption \ref{assumption:prior} hold.
Then $F_{in}(\bmu_i,\bL_i)$ viewed as a function from $\mathbb{R}^d\times\calL_{d\times d}$ to $\mathbb{R}$ is strongly convex with probability at least $1-n^{-c}$ for all $n\geq N_{c,\delta,\lambda}$ depending on $c,\delta,\lambda$, and 
$\nabla_{\bmu_i,\bL_i}F_{in}(\bmu_i,\bL_i) = \expect_\bz\left[\nabla_{\bmu_i,\bL_i}f_{in}(\bmu_i,\bL_i, \bz)\right]$.
\end{theorem}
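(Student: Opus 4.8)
The plan is to handle the two assertions independently. The strong convexity is a deterministic consequence of a pointwise curvature bound on $\psi_n$, provided $\widetilde\bX$ has full column rank; randomness enters only through the scalar $\sigma_d(\widetilde\bX)$. The gradient identity is a differentiation-under-the-integral-sign statement that I would settle by domination, and it in fact holds for the realized data without any probabilistic qualifier. I would begin with the curvature of the smoothed logarithm. Checking the three branches of $\psi_n$ gives $\psi_n''(t)\le -1$ for every $t\in\mathbb{R}$: on $[\tau_n,1]$ one has $\psi_n''(t)=-1/t^2\le -1$, and the two quadratic extensions give $\psi_n''\equiv -1/\tau_n^2\le -1$ and $\psi_n''\equiv -1$. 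Differentiating the ESL function \eqref{eqn:def-esl} twice and using $A_{ij}\in\{0,1\}$ then yields $\nabla^2\widehat\ell_{in}(\bx_i)\preceq -\sum_{j=1}^n\widetilde\bx_j\widetilde\bx_j\transpose=-\widetilde\bX\transpose\widetilde\bX$ for all $\bx_i\in\mathbb{R}^d$; together with the log-concavity in Assumption \ref{assumption:prior}, the map $g:=-\widehat\ell_{in}-\log\pi_i$ obeys $\nabla^2 g\succeq\sigma_d(\widetilde\bX)^2\eye_d$ globally.

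Next I would push this curvature through the reparametrization $\bx_i=\bmu_i+n^{-1/2}\bL_i\bz$. Setting $m=\sigma_d(\widetilde\bX)^2$ and splitting $g=\tfrac m2\|\cdot\|_2^2+\tilde g$ with $\tilde g$ convex, for each fixed $\bz$ the map $(\bmu_i,\bL_i)\mapsto g(\bmu_i+n^{-1/2}\bL_i\bz)$ is convex, as a convex function precomposed with an affine map. This affine map is rank-deficient on $\mathbb{R}^d\times\calL_{d\times d}$, so for a fixed $\bz$ there is no curvature along directions $(\bdelta_\mu,\bDelta_L)$ with $\bdelta_\mu=-n^{-1/2}\bDelta_L\bz$; curvature in the $\bL_i$-block is recovered only after averaging. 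Indeed, applying $\expect_\bz$ with $\expect[\bz]=\zero$ and $\expect[\bz\bz\transpose]=\eye_d$ collapses the quadratic part to $\tfrac m2(\|\bmu_i\|_2^2+n^{-1}\|\bL_i\|\frobenius^2)$, while $\expect_\bz[\tilde g(\cdots)]$ remains convex. Since $-\log\det(\bL_i)=-\sum_k\log[\bL_i]_{kk}$ is convex on $\calL_{d\times d}$, subtracting $\tfrac{m}{2n}(\|\bmu_i\|_2^2+\|\bL_i\|\frobenius^2)$ from $F_{in}$ in \eqref{eqn:vb-objective-population} leaves a sum of convex pieces, so $F_{in}$ is strongly convex with modulus at least $m/n=\sigma_d(\widetilde\bX)^2/n$. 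None of this requires the interchange of the second assertion.

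The main obstacle is then to certify $m>0$ — quantitatively $\sigma_d(\widetilde\bX)^2\gtrsim n\rho_n$ — on an event of probability at least $1-n^{-c}$, since this is the sole stochastic ingredient and is what fixes the threshold $N_{c,\delta,\lambda}$. I would invoke the uniform consistency of the signature-adjusted ASE \citep{lyzinski-sussman-tang-2014, xie-2024-spn-bernoulli}: with probability at least $1-n^{-c}$ there is an orthogonal $\bW$ with $\|\widetilde\bX\bW-\rho_n^{1/2}\bX_0\|\twotoinfinity=O\{\sqrt{(\log n)/n}\}$, hence $\|\widetilde\bX\bW-\rho_n^{1/2}\bX_0\|_2\le\sqrt n\,\|\widetilde\bX\bW-\rho_n^{1/2}\bX_0\|\twotoinfinity=O(\sqrt{\log n})$. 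Weyl's inequality then gives $\sigma_d(\widetilde\bX)\ge\rho_n^{1/2}\sigma_d(\bX_0)-O(\sqrt{\log n})$, and the eigenvalue lower bound in Assumption \ref{assumption:esl-grdpg} (encoded by $\lambda$, say $\sigma_d(\bX_0)^2\ge\lambda n$) makes the leading term $\sqrt{\lambda n\rho_n}$ dominate for all $n\ge N_{c,\delta,\lambda}$ under the accompanying sparsity condition, yielding $m/n\gtrsim\lambda\rho_n>0$. The constant $\delta$ collects the remaining regularity (latent-norm and edge-probability bounds) that underwrites the ASE estimate.

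Finally, since $F_{in}=\expect_\bz f_{in}$ by \eqref{eqn:vb-objective-population} and \eqref{eqn:vb-objective-sample}, the identity $\nabla_{\bmu_i,\bL_i}F_{in}=\expect_\bz[\nabla_{\bmu_i,\bL_i}f_{in}]$ is differentiation under the expectation, which I would justify by local domination. Because $|\psi_n''|\le\tau_n^{-2}$, the score $\psi_n'$ grows at most linearly, so $\nabla_\bx\widehat\ell_{in}(\bmu_i+n^{-1/2}\bL_i\bz)$ is bounded by $C(1+\|\bz\|_2)$ locally uniformly in $(\bmu_i,\bL_i)$; the $\bL_i$-gradient in \eqref{eqn:vb-objective-gradient} carries an extra $\bz\transpose$ factor together with the locally bounded $-\mathrm{diag}(\bL_i)^{-1}$, giving a local bound of order $1+\|\bz\|_2^2$, and Assumption \ref{assumption:prior} supplies the analogous polynomial bound for the prior score. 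Since these dominating functions have finite Gaussian moments, the dominated-convergence form of the Leibniz rule applies and yields the interchange. I expect this step and the convexity bookkeeping to be routine; the probabilistic control of $\sigma_d(\widetilde\bX)$ is the genuine crux.
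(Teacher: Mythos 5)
Your proposal is correct, and its skeleton matches the paper's proof: the global Hessian bound $\nabla^2\widehat\ell_{in}(\bx_i)\preceq -\widetilde\bX\transpose\widetilde\bX$ from $\psi_n''\le -1$, the Xu--Campbell device of splitting off a quadratic and letting $\expect_\bz$ collapse it to $\tfrac{m}{2}\bigl(\|\bmu_i\|_2^2+n^{-1}\|\bL_i\|\frobenius^2\bigr)$ while the averaged remainder stays convex, and a domination argument (polynomial growth of $\psi_n'$ and of the prior score, finite Gaussian moments) for the gradient--expectation interchange. The one substantive difference is how the high-probability curvature constant is certified. You lower-bound $\sigma_d(\widetilde\bX)$ by combining the $2\to\infty$ consistency of the signature-adjusted ASE with Weyl's inequality, paying a $\sqrt{n}$ factor to convert the row-wise bound into a spectral-norm bound; this works under the sparsity assumption but invokes the entrywise embedding theorem. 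The paper instead exploits the exact algebraic identity $\widetilde\bX\transpose\widetilde\bX=|\bS_\bA|$, so that $\lambda_d\bigl(n^{-1}\widetilde\bX\transpose\widetilde\bX\bigr)=\sigma_d\bigl(n^{-1}\bA\bigr)$, and then needs only the adjacency spectral concentration $\|\bA-\bP\|_2\lesssim_c\sqrt{n\rho_n}$ (Lei--Rinaldo) plus Weyl --- a more elementary route for this step, and one that yields the cleaner constant $\tfrac12\lambda\rho_n$ directly. Both certifications give $m/n\gtrsim_{\delta,\lambda}\rho_n$ on an event of probability $1-n^{-c}$, so your modulus is of the same order as the paper's. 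A minor merit of your write-up is the explicit observation that, for fixed $\bz$, the affine map $(\bmu_i,\bL_i)\mapsto\bmu_i+n^{-1/2}\bL_i\bz$ is rank-deficient, so strong convexity in the $\bL_i$-block genuinely requires the averaging over $\bz$; the paper uses this fact but does not spell out why the expectation is indispensable.
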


The derivative of $\log\det\bL_{i}$ with respect to $\bL_{i}$ is $\mathrm{diag}(L_{11}^{-1},\ldots,L_{dd}^{-1})$, and  $L_{kk}^{-1}$ is unbounded as $L_{kk}$ approaches 0 from the right. In practical implementation, to avoid unbounded gradients and improve the numerical stability of our gradient-based algorithm, we borrow the idea in \cite{xu-campbell-2023} to modify the gradient of $\bL_i$ as follows. Let $c_n$ be a positive number that depends on $n$, and consider the function
\[
\widetilde{h}_{n}(x)
= \left\{\begin{aligned}
& \frac{c_n}{c_nx+1}  \quad&\mbox{if }x > 0,\\
& -c_n^2x + c_n &\mbox{if }x \leq 0.
\end{aligned}\right.
\]
The function $\widetilde{h}_n(x)$ has a continuous derivative at $x=0$ and  asymptotically equals $\frac{1}{x}$ as $x$ goes to positive infinity. With this modification, the scaled gradient of $\log\det\bL_{i}$ with respect to $\bL_{i}$ is defined as the $d\times d$ diagonal matrix whose $k$th diagonal element is $\widetilde{h}_n(L_{kk})$. 
Then the scaled gradient of $f_{in}$ with respect to $\bL_i$ is defined as the $d\times d$ matrix 
\begin{align*}
\widetilde{\nabla}_{\bL_i}f_{in}(\bmu_i, \bL_i, \bz_i)
&= -\,\widetilde{h}_{n}(\mathrm{diag}(\bL_i)) - \frac{1}{\sqrt{n}}\mathrm{tril}
\left.\left\{\frac{\partial\widehat{\ell}_{in}(\bx_i)}{\partial\bx_i}\bz_i\transpose
+  \frac{\partial\log\pi_{i}\left(\bx_i\right)}{\partial\bx_i}\bz_i\transpose\right\}\right|_{\bx_i=\bmu_i + \frac{1}{\sqrt{n}}\bL_i\bz_i}.
\end{align*}
We adopt the Adam scheme in \cite{kingma-ba-2017-adam} to define the update step for stochastic gradient descent. See Algorithm \ref{alg:vbsl} for the detailed SANVI computation algorithm.

\begin{algorithm}[htbp]
\caption{Stochastic gradient descent for SANVI}
\label{alg:vbsl}
\begin{algorithmic}[1]
\State \textbf{Input:} The adjacency matrix $\bA = [A_{ij}]_{n\times n}$ and the embedding dimension $d$.

\State \textbf{Set:} $\tau\in(0,\frac{1}{2})$, batch size $1\leq{}s\leq{}n$, step size $\alpha_0>0$, exponential decay rates for the moments of gradients $\beta_1,\beta_2\in[0,1)$, constant $\epsilon_0=10^{-8}$.

\State Compute the spectral decomposition
$\bA = \sum_{i=1}^n \widehat\lambda_i \widehat\bu_i \widehat\bu_j\transpose$,
where $|\widehat{\lambda}_1| \geq |\widehat{\lambda}_2| \geq \ldots \geq |\widehat{\lambda}_n|$, and $\widehat{\bu}_i\transpose \widehat{\bu}_j = \mathbbm{1}(i=j)$ for all $i, j \in[n]$.

\State Compute the signature-adjusted ASE:
\[
  \widetilde{\bX} = [\widetilde{\bx}_1, \ldots, \widetilde{\bx}_n]\transpose = \left[\mathrm{sign}(\widehat{\lambda}_1)|\widehat{\lambda}_1|^{1/2}\widehat{\bu}_1, \ldots, \mathrm{sign}(\widehat{\lambda}_d)|\widehat{\lambda}_d|^{1/2}\widehat{\bu}_d\right].
\]
Let $\widetilde{p}_{ij} = \widetilde{\bx}_i\transpose\mathrm{diag}(\mathrm{sign}(\widehat{\lambda}_1), \ldots, \mathrm{sign}(\widehat{\lambda}_d))\widetilde{\bx}_j$ for all $i,j\in[n]$.


\State For $i = 1,2,\ldots,n$

\State \quad Compute the Cholesky decomposition $(\sum_{j = 1}^n\widetilde{\bx}_j\widetilde{\bx}_j\transpose/\{n\widetilde{p}_{ij}(1 - \widetilde{p}_{ij})\})^{-1}=\widetilde{\bL}_{i}\widetilde{\bL}_{i}\transpose$.

\State \quad  Set the iteration counter $t = 0$.

\State \quad Initialize gradient moments $m_{\bmu_i,1}^{(0)}=\bm{0}_d$, $m_{\bmu_i,2}^{(0)}=\bm{0}_d$, $m_{\bL_i,1}^{(0)}=\bm{0}_{d\times{}d}$, $m_{\bL_i,2}^{(0)}=\bm{0}_{d\times{}d}$.

\State \quad Initialize $\bmu_i^{(0)} = \widetilde{\bx}_i$ and $\bL_i^{(0)} = \widetilde{\bL}_{i}$.

\State \quad  While not converge

\State \quad\quad Set $t \longleftarrow t+1$.

\State \quad  \quad Sample $\bz_1^{(t-1)}, \bz_2^{(t-1)}, \ldots, \bz_{s}^{(t-1)}\sim\mathrm{N}_d(\zero_d, \eye_d)$ independently.

\State \quad\quad Compute
\begin{align*}
m_{\bmu_i,1}^{(t)} &= \beta_1 m_{\bmu_i,1}^{(t-1)} + (1-\beta_1) \frac{1}{s\sqrt{n}}\sum_{k = 1}^s\frac{\partial f_{in}}{\partial\bmu_i}(\bmu_i^{(t-1)}, \bL_i^{(t-1)}, \bz_s^{(t-1)}),\\
m_{\bL_i,1}^{(t)} &= \beta_1 m_{\bL_i,1}^{(t-1)} + (1-\beta_1) \frac{1}{s\sqrt{n}}\sum_{k = 1}^s\widetilde{\nabla}_{\bL_i}f_{in}(\bmu_i^{(t-1)}, \bL_i^{(t-1)}, \bz_s^{(t-1)}),\\
m_{\bmu_i,2}^{(t)} &= \beta_2 m_{\bmu_i,2}^{(t-1)} + (1-\beta_2) \frac{1}{s\sqrt{n}}\sum_{k = 1}^s\left\{\frac{\partial f_{in}}{\partial\bmu_i}(\bmu_i^{(t-1)}, \bL_i^{(t-1)}, \bz_s^{(t-1)})\right\}^{\odot2},\\
m_{\bL_i,2}^{(t)} &= \beta_2 m_{\bL_i,2}^{(t-1)} + (1-\beta_2) \frac{1}{s\sqrt{n}}\sum_{k = 1}^s\left\{\widetilde{\nabla}_{\bL_i}f_{in}(\bmu_i^{(t-1)}, \bL_i^{(t-1)}, \bz_s^{(t-1)})\right\}^{\odot2},
\end{align*}
\quad\quad{}where $\odot2$ denotes the entry-wise square of a vector or a matrix.


\State \quad\quad Update
\begin{align*}
\bmu_i^{(t)} &= \bmu_i^{(t-1)} - \frac{\alpha_0m_{\bmu_i,1}^{(t)} / (1-\beta_1^t)}{\sqrt{m_{\bmu_i,2}^{(t)} / (1-\beta_2^t)} + \epsilon_0},\quad
\bL_i^{(t)} = \bL_i^{(t-1)} - \frac{\alpha_0m_{\bL_i,1}^{(t)} / (1-\beta_1^t)}{\sqrt{m_{\bL_i,2}^{(t)} / (1-\beta_2^t)} + \epsilon_0},
\end{align*}
\quad\quad{}where the division and square root are computed entry-wise for vectors or matrices.

\State \quad End While

\State \quad Set $\widehat{\bx}_i = \bmu_i^{(t)}$ and $\widehat{\bG}_i = (\bL_i\bL_i\transpose)^{-1}$.

\State End For

\State \textbf{Output: } $\widehat{\bX} = [\widehat{\bx}_1, \ldots, \widehat{\bx}_n]\transpose$ and $(\widehat{\bG}_i)_{i = 1}^n$.

\end{algorithmic}
\end{algorithm}

\subsection{Theoretical properties}
\label{subsec:theory}
We now introduce and establish the asymptotic properties of the variational posterior distribution whose parameters solve the optimization problem \eqref{eqn:vb-objective-population}. Several assumptions are necessary before we state the main results. 

\begin{assumption}
\label{assumption:esl-grdpg}
The following conditions hold:
\begin{enumerate}[(a)]
\item $d$, $p$, and $q$ are constant integers with $d\geq{}1$, $p\geq{}1$, $q\geq{}0$, and $d=p+q$.
\item $\|\bx_{0i}\|_2 \in [\sqrt{\delta}, \sqrt{1-\delta}]$ for all $i\in[n]$, and $\bx_{0i}\transpose\eye_{p,q}\bx_{0j} \in [\delta, 1-\delta]$ for all $i,j\in[n]$, for a constant $\delta\in(0,1/2)$.
\item The eigenvalues of $(1/n)\sum_{i=1}^n\bx_{0i}\bx_{0i}\transpose$, $\sigma_{0,1}\geq\sigma_{0,2}\geq\ldots\geq\sigma_{0,d}$, satisfy either $\sigma_{0,k}=\sigma_{0,k+1}$ or $\sigma_{0,k}-\sigma_{0,k+1}>\lambda$ where $1\leq k\leq d-1$, and $\sigma_{0,d}>\lambda$, for a positive constant $\lambda$ for all $n\geq{}d$.
\item $\rho_n\in(0,1]$ for all $n$, $\lim_{n\to\infty}\rho_n$ exists with $(\log n)/(n\rho_n)\to 0$ as $n\to\infty$.
\item $\delta^2 < \tau_n/\rho_n < \delta/2$ for all $n$.
\item The first $p$ columns of $\bX_0$ are orthogonal to the last $q$ columns of $\bX_0$, where $\bX_0 = [\bx_{01}, \ldots, \bx_{0n}]\transpose$.
\item $\bA\sim\mathrm{GRDPG}(\rho_n^{1/2}\bX_0,\,\eye_{p,q})$.
\end{enumerate}
\end{assumption}
In Assumption \ref{assumption:esl-grdpg} above, items (b) and (c) are standard, and item (d) is a weak requirement on the network sparsity (also see \citealp{xie-2024-spn-bernoulli}). Item (f) can be made without loss of generality by Sylvester's law of inertia. Item (e) guarantees that the true values of $\expect_0A_{ij}$'s stay inside the truncated interval $[\tau_n, 1 - \tau_n]$ and requires that the truncation level $\tau_n$ is not too small. 

\begin{assumption}
\label{assumption:prior}
The prior densities $\pi_i(\bx_i)$, $i\in[n]$, which are independent, satisfy the following conditions, where $C,c>0$ are absolute constants:
\begin{enumerate}[(a)]
\item $0<\pi_i(\bx_i)\leq C$, for all $\bx_i\in\mathbb{R}^d$, and $\pi_i(\rho_n\halfpower\bW\bx_{0i})\geq c$, for all $i\in[n]$;
\item $\|\partial\log\pi_i/\partial\bx_i(\rho_n\halfpower\bW\bx_{0i})\|_2 \leq C$, for all $i\in[n]$;
\item $\log\pi_i(\bx_i)$ is concave in $\bx_i$, and $\|\partial^2\log\pi_i/\partial\bx_i\partial\bx_i\transpose(\bx_i)\|_2 \leq C$, for all $\bx_i\in\mathbb{R}^d$, for all $i\in[n]$.
\end{enumerate}
\end{assumption}
Assumption \ref{assumption:prior} above lists several standard requirements for the prior distribution, such as the prior thickness in a neighborhood of the truth and log-concavity. In particular, $\pi_i(\bx_i)$ can be taken as a multivariate Gaussian distribution with a bounded mean vector and a covariance matrix (in spectra). 

We now establish the asymptotic properties of the maximum extended surrogate likelihood estimator (MESLE). The MESLE provides the theoretical foundations for the Gaussian VI, and it is also theoretically appealing by itself. For convenience, denote by 
$\bG_{in}(\bx_i) = (1/n)\sum_{j = 1}^n{\rho_n\bx_{0j}\bx_{0j}\transpose}/\{\rho_n\halfpower\bx_i\transpose\bx_{0j}(1 - \rho_n\halfpower\bx_i\transpose\bx_{0j})\}$
and let $\bG_{0in} = \bG_{in}(\rho_n\halfpower\bx_{0i})$. Note that $\bG_{0in}$ is precisely the Fisher information matrix for $\bx_i$ at $\rho_n\halfpower\bx_{0i}$. 

\begin{theorem}\label{thm:mesle}
Suppose Assumption \ref{assumption:esl-grdpg} holds.
For each $i\in[n]$, let $\widehat\bx_i=\argmax_{\bx_i\in\mathbb{R}^d}\widehat\ell_{in}(\bx_i)$ be the maximizer of the ESL function.
Then, there exists an orthogonal matrix $\bW\in\mathbb{O}(d)$ depending on $n$, and for any $c>0$, there exist a constant integer $N_{c,\delta,\lambda}$ and a constant $C_{c,\delta,\lambda}$ depending on $c,\,\delta,\,\lambda$, such that for all $n\geq N_{c,\delta,\lambda}$,
\begin{align*}
\prob\left(
\widehat\bx_i{\;exists\;and\;is\;unique\;for\;all\;} i\in[n]
\right)> 1-n^{-c},\\
\prob\left\{\max_{i\in[n]}
\left\Vert
\bW\transpose\widehat\bx_i - \rho_n\halfpower\bx_{0i}
\right\Vert_2
< C_{c,\delta,\lambda}\sqrt{\frac{\log n}{n}}
\right\}
> 1-n^{-c},
\end{align*}
and
$\sqrt{n}\bG_{0in}\halfpower(\bW\transpose\widehat\bx_i-\rho_n\halfpower\bx_{0i}) \overset{\calL}{\to} \mathrm{N}_d(\zero_d,\,\eye_d)$ as $n\to\infty$.
If furthermore $(\log n)^4/(n\rho_n)\to 0$ as $n\to\infty$, then 
$\sum_{i=1}^n\Vert \bW\transpose\widehat\bx_i-\rho_n\halfpower\bx_{0i}\Vert_2^2
- \sum_{i=1}^n\mathrm{tr}(\bG_{0in}\inverse)/n \overset{\prob}{\to} 0$ as $n\to\infty$.
\end{theorem}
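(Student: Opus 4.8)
The plan is to upgrade the per-vertex expansion that already underlies the asymptotic normality assertion into an averaged (law-of-large-numbers) statement, and then to control the resulting sum of weakly dependent quadratic forms. Throughout, fix the aligning rotation $\bW$, write $\bx_i^\star=\rho_n\halfpower\bx_{0i}$ for the rotated truth, and let $\bV_{in}=n^{-1/2}(\partial\widehat\ell_{in}/\partial\bx_i)(\bx_i^\star)$ be the normalized score at the truth. The Taylor expansion of the estimating equation $(\partial\widehat\ell_{in}/\partial\bx_i)(\widehat\bx_i)=\zero$ that proves the normality gives, for each $i$,
\[
\sqrt{n}(\bW\transpose\widehat\bx_i-\bx_i^\star)=\bG_{0in}\inverse\bV_{in}+\br_{in},
\]
where the leading term comes from inverting the Hessian (which concentrates at $-n\bG_{0in}$) and $\br_{in}$ collects the second-order Taylor remainder together with the error of substituting the signature-adjusted embedding $\widetilde{\bx}_j$ for $\rho_n\halfpower\bx_{0j}$. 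Squaring, dividing by $n$, and summing,
\begin{align*}
\sum_{i=1}^n\|\bW\transpose\widehat\bx_i-\bx_i^\star\|_2^2
&=\frac{1}{n}\sum_{i=1}^n\bV_{in}\transpose\bG_{0in}^{-2}\bV_{in}\\
&\quad+\frac{1}{n}\sum_{i=1}^n\left(2\langle\bG_{0in}\inverse\bV_{in},\br_{in}\rangle+\|\br_{in}\|_2^2\right),
\end{align*}
so the proof reduces to (i) showing the cross and remainder terms are $o_\prob(1)$, and (ii) showing the leading average concentrates at $n^{-1}\sum_i\mathrm{tr}(\bG_{0in}\inverse)$.

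For (i), I would use the uniform consistency bound $\max_i\|\bW\transpose\widehat\bx_i-\bx_i^\star\|_2=O\{\sqrt{(\log n)/n}\}$ established above, together with the bounds on $\psi_n''$ and $\psi_n'''$ implied by Assumption \ref{assumption:esl-grdpg}(b) and (e), to estimate $\br_{in}$. The dominant contributions to $\br_{in}$ are the cubic Taylor term, of order $\rho_n^{-1/2}\|\bW\transpose\widehat\bx_i-\bx_i^\star\|_2^2$, and the embedding-substitution error governed by $\|\widetilde{\bX}\bW-\rho_n\halfpower\bX_0\|\twotoinfinity$; both carry an extra factor $\rho_n^{-1/2}$ relative to the dense case. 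Combined with $\max_i\|\bV_{in}\|_2=O(\sqrt{\log n})$ (itself a consequence of the uniform consistency), each summand in the remainder is of order a small power of $\log n$ divided by $n\rho_n$, so the averages vanish precisely under the strengthened sparsity condition $(\log n)^4/(n\rho_n)\to 0$. This is exactly why the extra hypothesis is needed for the second-moment conclusion but not for the marginal normality.

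For (ii), I would first replace $\bV_{in}$ by the oracle score $\bV_{0in}$ obtained by using the true $\rho_n\halfpower\bx_{0j}$ in place of $\widetilde{\bx}_j$, absorbing the difference into the remainder bookkeeping of (i). The advantage is that $\bV_{0in}$ depends only on the $i$th row $\{A_{ij}\}_{j\in[n]}$ of $\bA$, and the information identity gives $\mathrm{Cov}(\bV_{0in})=\bG_{0in}(1+o(1))$ with $\expect\bV_{0in}=o(1)$, whence $\expect(\bV_{0in}\transpose\bG_{0in}^{-2}\bV_{0in})=\mathrm{tr}(\bG_{0in}\inverse)(1+o(1))$. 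It then remains to show $n^{-1}\sum_i\{\bV_{0in}\transpose\bG_{0in}^{-2}\bV_{0in}-\mathrm{tr}(\bG_{0in}\inverse)\}\overset{\prob}{\to}0$, which I would obtain by a variance (Chebyshev) argument: each summand has variance $O(1)$, and two summands for indices $i\neq i'$ depend on the data only through the single shared entry $A_{ii'}=A_{i'i}$, so an Efron--Stein/bounded-difference estimate shows their covariance is $O(1/n)$; hence the variance of the average is $O(1/n)\to 0$.

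The main obstacle is the decoupling in step (ii): unlike the marginal normality, the averaged statement is sensitive to the dependence that the global spectral embedding $\widetilde{\bX}$ induces across all vertices, since each $\widetilde{\bx}_j$ is a function of the entire adjacency matrix. The crux is therefore to show that replacing $\widetilde{\bx}_j$ by the oracle positions costs only an $o_\prob(1)$ error \emph{after} summing over all $n$ vertices, not merely for a single fixed vertex; it is this uniform, summable control of the substitution error that forces both the strengthened rate $(\log n)^4/(n\rho_n)\to 0$ and the weak-dependence variance bound, whereas the clean quadratic-form concentration for the oracle scores is comparatively routine.
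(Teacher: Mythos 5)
Your proposal addresses only the last of the theorem's four assertions. Existence and uniqueness, the uniform rate $\max_{i\in[n]}\|\bW\transpose\widehat\bx_i-\rho_n\halfpower\bx_{0i}\|_2\lesssim\sqrt{(\log n)/n}$, and the asymptotic normality are all presupposed (``the per-vertex expansion that already underlies the asymptotic normality assertion,'' ``the uniform consistency bound \ldots established above''), yet they are part of the statement and constitute the bulk of the paper's proof: existence/uniqueness comes from strong concavity of $\widehat\ell_{in}$ via the spectral-norm concentration of $\bA-\bP$ and Weyl's inequality (see \eqref{eqn:hessian-strongly-convex-bound}); consistency comes from comparing $\widehat\ell_{in}$ with its population counterpart $M_{in}$ through the empirical-process gradient bound of Lemma~\ref{lemma:concentration-gradient}; and normality is obtained not by expanding the score at the truth but by showing $\|\widehat\bx_i-\widehat\bx_i^{\mathrm{(OS)}}\|_2\lesssim \log n/(n\rho_n\halfpower)$ and importing the linearization of the one-step estimator (Theorem~\ref{thm:ose-asymptotic-normality}). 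The linearization you take as your starting point, with remainder uniformly of size $(\log n)^2/\sqrt{n\rho_n}$ (in your $\sqrt{n}$ scaling) on an event of probability $1-n^{-c}$, is exactly the output of that machinery, so your argument for the averaged claim rests on the unproven three quarters of the theorem.

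Within the global-error claim itself, the step you yourself call ``the main obstacle'' --- replacing $\bV_{in}$, which depends on all of $\bA$ through $\widetilde\bX$, by the oracle score $\bV_{0in}$ at a total cost of $o_{\prob}(1)$ --- is named but not solved, and it would not go through by the ``remainder bookkeeping'' you describe. Crude bounds fail: writing $\bV_{in}-\bV_{0in}$ as a sum over $j$ and using only $\max_{j}\|\bW\transpose\widetilde\bx_j-\rho_n\halfpower\eye_{p,q}\bx_{0j}\|_2\lesssim\sqrt{(\log n)/n}$ together with $\|\bA\|_\infty\lesssim n\rho_n$ gives a per-vertex bound of order $\sqrt{\log n}$ --- the same order as $\|\bV_{in}\|_2$ itself, hence useless after squaring and averaging. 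The needed cancellation requires the entrywise expansion $\widetilde\bX\bW-\rho_n\halfpower\bX_0\eye_{p,q}=\rho_n^{-1/2}(\bA-\bP)\bX_0(\bX_0\transpose\bX_0)\inverse+\bR_{\widetilde\bX}$ of Theorem~\ref{thm:signed-ase} and a quadratic-form concentration analysis of the interaction between the $(\bA-\bP)$-linear part of the embedding error and the $(\bA-\bP)$-linear score; this is precisely what Theorem~\ref{thm:ose-asymptotic-normality} packages, and it is why the paper pivots through the one-step estimator rather than ever bounding $\bV_{in}-\bV_{0in}$ directly. Separately, your cross-term accounting has a quantitative slip: bounding each summand by $\max_i\|\bV_{in}\|_2\cdot\max_i\|\br_{in}\|_2\lesssim\sqrt{\log n}\cdot(\log n)^2/\sqrt{n\rho_n}$ gives $(\log n)^{5/2}/\sqrt{n\rho_n}$, which vanishes only if $(\log n)^5/(n\rho_n)\to 0$, strictly stronger than the assumed condition; the paper instead applies Cauchy--Schwarz across the sum over $i$, pairing $\sum_i\|\bG_{0in}\inverse\bV_{in}\|_2^2=O_{\prob}(1)$ with $\sum_i\|\br_{in}\|_2^2\lesssim(\log n)^4/(n\rho_n)$, which closes under exactly the stated hypothesis.

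The one place where you genuinely and soundly depart from the paper is the concentration of the oracle quadratic form: your Chebyshev argument --- variance $O(1)$ per summand, covariance $O(1/n)$ for $i\neq i'$ because the two rows share only the single entry $A_{ii'}$ --- is a correct, more elementary alternative to the paper's Lemma~\ref{lemma:grdpg-inid-wlln}, which uses the de la Pe\~na--Montgomery-Smith decoupling inequality (Lemma~\ref{lemma:decoupling-u-statistics}) plus Bernstein's inequality. The second-moment route yields only an in-probability conclusion rather than $1-n^{-c}$ tails, but that is all the final claim requires; had the rest of the proof been in place, this substitution would have been a legitimate simplification.
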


Theorem \ref{thm:bernstein-von-mises} below is a Bernstein-von-Mises theorem for the posterior distribution \eqref{eqn:esl-posterior} associated with the ESL function. It says that the posterior distribution converges in total variation distance to a normal distribution centered at the MESLE with covariance being the inverse Fisher information matrix scaled by $1/{n}$. For technical considerations, we impose the condition $\rho_n=1$, although it is possible to relax it and let $\rho_n\to 0$ as $n\to\infty$.

\begin{theorem}\label{thm:bernstein-von-mises}
Suppose Assumption \ref{assumption:esl-grdpg} and Assumption \ref{assumption:prior} hold, and assume $\rho_n=1$ for all $n$.
For each $i\in[n]$, let $\widehat\bx_i=\argmax_{\bx_i\in\mathbb{R}^d}\widehat\ell_{in}(\bx_i)$ be the maximizer of the ESL function (MESLE), and let $\bW$ be the orthogonal alignment matrix in Theorem \ref{thm:mesle}. Then, with probability at least $1-n^{-c}$,
\begin{align*}
\int_{\mathbb{R}^d}\left|
\pi_{in}(\bx_i\mid\bA)
 - \phi_d(\bx_i\mid\widehat{\bx}_i, (n\bW\bG_{0in}\bW\transpose)^{-1})
\right|\mathrm{d}\bx_i
\lesssim_{c,\delta,\lambda} \frac{1}{\log n}.
\end{align*}
\end{theorem}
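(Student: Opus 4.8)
The plan is to prove this via a quasi-Bayesian Bernstein--von Mises argument of Laplace type, treating $\exp\{\widehat\ell_{in}(\bx_i)\}\pi_i(\bx_i)$ as an unnormalized posterior. Since total variation is invariant under the bijective linear reparametrization $\bh=\sqrt n(\bx_i-\widehat\bx_i)$, I would first pass to the $\bh$ scale, in which the posterior density is proportional to $\exp\{\widehat\ell_{in}(\widehat\bx_i+\bh/\sqrt n)-\widehat\ell_{in}(\widehat\bx_i)\}\,\pi_i(\widehat\bx_i+\bh/\sqrt n)$ and the target Gaussian becomes the centered normal with precision $\bW\bG_{0in}\bW\transpose$. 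Because $\widehat\bx_i$ maximizes the ESL over all of $\mathbb R^d$ by Theorem \ref{thm:mesle}, the gradient of $\widehat\ell_{in}$ vanishes there, so a second-order Taylor expansion gives $\widehat\ell_{in}(\widehat\bx_i+\bh/\sqrt n)-\widehat\ell_{in}(\widehat\bx_i)=-\tfrac12\bh\transpose\widehat\bG_{in}\bh+r_{in}(\bh)$, where $\widehat\bG_{in}=-(1/n)\nabla^2\widehat\ell_{in}(\widehat\bx_i)$ is the normalized negative Hessian and $r_{in}$ is a remainder to be controlled.

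The second step is Hessian concentration: I would show that $\widehat\bG_{in}$ is close to $\bW\bG_{0in}\bW\transpose$ with probability at least $1-n^{-c}$. Writing $\nabla^2\widehat\ell_{in}(\bx_i)$ as a weighted Gram matrix $\sum_{j=1}^n w_{ij}(\bx_i)\widetilde\bx_j\widetilde\bx_j\transpose$, I would substitute the MESLE consistency bound $\|\bW\transpose\widehat\bx_i-\bx_{0i}\|_2\lesssim\sqrt{(\log n)/n}$ (Theorem \ref{thm:mesle}, with $\rho_n=1$) together with the uniform ASE bound $\|\widetilde\bX\bW-\bX_0\|_{2\to\infty}\lesssim\sqrt{(\log n)/n}$, and compare the resulting matrix to the Fisher information $\bG_{0in}$. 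The prior is then absorbed using Assumption \ref{assumption:prior}: bounded gradient at the truth and bounded Hessian make $\pi_i(\widehat\bx_i+\bh/\sqrt n)/\pi_i(\widehat\bx_i)=1+O(\|\bh\|_2/\sqrt n)$ on the local region, contributing only lower-order terms to the normalized density.

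A technical subtlety is that $\psi_n$ is only twice continuously differentiable, so no third derivative is available for the remainder. Instead I would exploit the Lipschitz continuity of $\psi_n''$: on $[\tau_n,1]$ one has $\psi_n''(t)=-1/t^2$ with derivative bounded by $2/\tau_n^3$, while $\psi_n''$ is constant on the two quadratic arms, so $\psi_n''$ is globally Lipschitz with an $O(1/\tau_n^3)=O(1)$ constant under Assumption \ref{assumption:esl-grdpg}(e). This yields $\|\nabla^2\widehat\ell_{in}(\bx_i)-\nabla^2\widehat\ell_{in}(\widehat\bx_i)\|_2\lesssim n\|\bx_i-\widehat\bx_i\|_2$ and hence, by a second-order Taylor expansion with integral remainder, $|r_{in}(\bh)|\lesssim\|\bh\|_2^3/\sqrt n$ provided the argument remains in a neighborhood of the truth.

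Finally, to convert these pointwise estimates into the total-variation bound I would split the $\bh$-space into a local ball $\{\|\bh\|_2\le r_n\}$ with $r_n\asymp\sqrt{\log n}$ and its complement. On the ball, a standard comparison lemma bounds the $L^1$ distance of the two normalized densities by the integral against the Gaussian of the exponentiated discrepancy $|r_{in}(\bh)|$ plus the Hessian-concentration error, both of which I expect to integrate to a quantity at most of the stated order. Off the ball I would invoke the strong concavity of the ESL established in the proof of Theorem \ref{thm:vb-objective-fun}, equivalently the log-concavity of $\pi_{in}(\cdot\mid\bA)$ with curvature bounded below by a positive multiple of $n\eye_d$, to show that both measures place negligible mass there. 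The main obstacle I anticipate is precisely this tail analysis together with the uniform control over $i\in[n]$: because the extended quadratic arms of $\psi_n$ carry curvatures $1/\tau_n^2$ and $1$ that differ from the logarithmic central region, one must verify that the global lower bound on the Hessian forces Gaussian-type tail decay strong enough to bound both the normalizing-constant ratio and the posterior tail mass at the required rate, simultaneously for all $i\in[n]$ on the event of probability at least $1-n^{-c}$.
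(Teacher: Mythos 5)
Your proposal is correct and takes essentially the same route as the paper's proof: both run a Laplace-type argument that splits $\mathbb{R}^d$ at radius $\asymp\sqrt{(\log n)/n}$ around $\widehat\bx_i$, control the tail region via the global strong-concavity lower bound on $-\nabla^2\widehat\ell_{in}$, and handle the local ball by a second-order Taylor expansion (the gradient vanishing at the MESLE) combined with uniform Hessian concentration toward $\bG_{0in}$ (Lemma \ref{lemma:concentration-hessian}) and the prior regularity of Assumption \ref{assumption:prior}. The only cosmetic differences are that the paper expands with the Hessian at an intermediate point and compares \emph{unnormalized} densities against $\pi_i(\rho_n\halfpower\bW\bx_{0i})\exp\{-\tfrac{n}{2}(\bx_i-\widehat\bx_i)\transpose\bW\bG_{0in}\bW\transpose(\bx_i-\widehat\bx_i)\}$ before normalizing, whereas you expand at $\widehat\bx_i$ with a Lipschitz-Hessian remainder (which is exactly the paper's Lemma \ref{lemma:lipschitz-hessian}) and work with normalized densities throughout.
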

Since the variational posterior distribution is a minimizer of the KL divergence from the family of Gaussian distributions to the true posterior distribution, intuitively, the distance between the variational posterior distribution (\emph{i.e.}, solution to the problem \eqref{eqn:gvb-kl-div}) and the posterior distribution defined in \eqref{eqn:esl-posterior} should be small. With Theorem \ref{thm:bernstein-von-mises}, we can make this intuition precise and establish the following Bernstein-von Mises theorem for VI.
\begin{theorem}\label{thm:vb-posterior}
Suppose the conditions in Theorem \ref{thm:bernstein-von-mises} hold. Let
\[
q_{in}^*(\bx_i)=\argmin_{q\in\calQ_d}D_{\mathrm{KL}}(q(\bx_i)\|\pi_{in}(\bx_i|\bA))
\]
be the variational posterior distribution, where $\calQ_d$ denotes the family of all $d$-dimensional Gaussian distributions. Then
\[
\int_{\mathbb{R}^d} \left|q_{in}^*(\bx_i) - \phi_d(\bx_i\mid\widehat{\bx}_i, (n\bW\bG_{0in}\bW\transpose)^{-1})\right| \diff\bx_i
\lesssim_{c,\delta,\lambda} \sqrt{\frac{1}{\log n}}
\]
with probability at least $1-n^{-c}$.
\end{theorem}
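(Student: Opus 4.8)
The plan is to exploit the variational optimality of $q_{in}^*$ together with the posterior Bernstein--von Mises result already established in Theorem \ref{thm:bernstein-von-mises}. Write $\phi_d^* = \phi_d(\bx_i\mid\widehat{\bx}_i,(n\bW\bG_{0in}\bW\transpose)^{-1})$ for the target Gaussian. Since $\phi_d^*\in\calQ_d$ is itself an admissible variational candidate and $q_{in}^*$ minimizes $D_{\mathrm{KL}}(\,\cdot\,\|\pi_{in})$ over $\calQ_d$, we immediately obtain $D_{\mathrm{KL}}(q_{in}^*\|\pi_{in})\le D_{\mathrm{KL}}(\phi_d^*\|\pi_{in})$. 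Combining Pinsker's inequality $\|q_{in}^*-\pi_{in}\|_{\mathrm{TV}}\le\{D_{\mathrm{KL}}(q_{in}^*\|\pi_{in})/2\}^{1/2}$ with the triangle inequality and the $1/\log n$ total-variation bound of Theorem \ref{thm:bernstein-von-mises}, the entire statement reduces to the single estimate $D_{\mathrm{KL}}(\phi_d^*\|\pi_{in})\lesssim_{c,\delta,\lambda}1/\log n$. The square root in Pinsker is exactly what turns a $1/\log n$ rate into the claimed $\sqrt{1/\log n}$, so no sharper input is needed.

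To bound $D_{\mathrm{KL}}(\phi_d^*\|\pi_{in})$ I would run a Laplace-type expansion of $g_{in}(\bx_i):=\widehat{\ell}_{in}(\bx_i)+\log\pi_i(\bx_i)$ about the mode $\widehat{\bx}_i$, writing $\pi_{in}=e^{g_{in}}/Z_{in}$, so that $D_{\mathrm{KL}}(\phi_d^*\|\pi_{in})=\expect_{\phi_d^*}[\log\phi_d^*-g_{in}]+\log Z_{in}$. Split $\mathbb{R}^d$ into the bulk $B=\{\|\bx_i-\widehat{\bx}_i\|_2\le C\sqrt{(\log n)/n}\}$ and its complement. On $B$, a third-order Taylor expansion of $g_{in}$ shows that $\log\phi_d^*-\log\pi_{in}$ equals the discrepancy between $-\nabla^2\widehat{\ell}_{in}(\widehat{\bx}_i)$ and $n\bW\bG_{0in}\bW\transpose$, plus a cubic remainder, plus the normalization mismatch of $\log Z_{in}$ against the Gaussian normalizer. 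These are precisely the quantities already controlled at rate $1/\log n$ inside the proof of Theorem \ref{thm:bernstein-von-mises}, so integrating them against the $O(1/\sqrt n)$-scale Gaussian $\phi_d^*$ keeps the bulk contribution at order $1/\log n$.

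The tail contribution is where the genuine work lies. The decisive structural fact is that $\psi_n$ is quadratic outside $[\tau_n,1]$ with $\tau_n\asymp 1$ under $\rho_n=1$ and Assumption \ref{assumption:esl-grdpg}(e), so $\psi_n''$ is uniformly bounded and $-\nabla^2 g_{in}(\bx_i)$ lies between $c_1 n\eye_d$ and $c_2 n\eye_d$ for all $\bx_i\in\mathbb{R}^d$ on the high-probability event of Theorem \ref{thm:vb-objective-fun}. Thus $\pi_{in}$ is uniformly strongly log-concave with a precision scale comparable to that of $\phi_d^*$, whose precision $n\bW\bG_{0in}\bW\transpose$ has eigenvalues $\asymp n$ by Assumption \ref{assumption:esl-grdpg}(b)--(c); consequently $\log(\phi_d^*/\pi_{in})$ grows at most quadratically with a controlled coefficient. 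Since $\phi_d^*(B^c)\le n^{-C}$ is exponentially small for the chosen radius, a Cauchy--Schwarz or Gaussian-moment estimate bounds $\int_{B^c}\phi_d^*|\log(\phi_d^*/\pi_{in})|$ by a negligible term. Assembling the bulk, tail, and normalization estimates yields $D_{\mathrm{KL}}(\phi_d^*\|\pi_{in})\lesssim_{c,\delta,\lambda}1/\log n$ on the same event, and feeding this back through the optimality inequality, Pinsker, and the triangle inequality delivers the theorem.

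I expect the main obstacle to be this direct KL upper bound: unlike Theorem \ref{thm:bernstein-von-mises}, which only needs total variation, here the KL divergence weights the region where $\pi_{in}$ is small, so one must verify that the quadratic tails of the extended surrogate likelihood make $\pi_{in}$ at least as heavy-tailed as $\phi_d^*$ and quantify the normalizing-constant discrepancy through a Laplace approximation. Controlling these two ingredients at rate $1/\log n$, uniformly on the high-probability event, is the crux; the variational and Pinsker steps are routine once the KL estimate is in hand.
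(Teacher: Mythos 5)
Your proposal is correct and takes essentially the same route as the paper: it reduces the theorem via variational optimality, Pinsker's inequality, and the triangle inequality to the single bound $D_{\mathrm{KL}}\bigl(\phi_d(\cdot\mid\widehat{\bx}_i,(n\bW\bG_{0in}\bW\transpose)^{-1})\,\big\|\,\pi_{in}(\cdot\mid\bA)\bigr)\lesssim_{c,\delta,\lambda}1/\log n$, which is exactly the paper's Lemma \ref{lemma:KL-subgaussian-error}, proved there by the same Laplace-type expansion around $\widehat{\bx}_i$, the same bulk/tail split at radius of order $\sqrt{(\log n)/n}\cdot\sqrt{\log n}$, the normalizing-constant control drawn from Theorem \ref{thm:bernstein-von-mises}, and Hessian concentration on the bulk. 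The only minor difference is on the tail, where the paper uses the Hessian Lipschitz property (Lemma \ref{lemma:lipschitz-hessian}) to obtain a cubic-growth remainder while you use the global two-sided Hessian sandwich to get a quadratic-growth bound; both are adequate against the exponentially small Gaussian tail mass, so your argument goes through.
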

We also provide the asymptotic normality of the variational posterior mean as a point estimator in Theorem \ref{thm:vb-posterior-mean} below.
\begin{theorem}\label{thm:vb-posterior-mean}
Suppose the conditions in Theorem \ref{thm:bernstein-von-mises} hold. Let $\bx_i^*$ be the variational posterior mean of  $q_{in}^*(\bx_i)$. Then, $\sqrt{n}\bG_{0in}\halfpower(\bW\transpose\bx_i^*-\rho_n\halfpower\bx_{0i}) \overset{\calL}{\to} \mathrm{N}_d(\zero_d,\,\eye_d)$ as $n\to\infty$. 
\end{theorem}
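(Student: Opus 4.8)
I work throughout under the conditions of Theorem \ref{thm:bernstein-von-mises}, so in particular $\rho_n = 1$, and I abbreviate the Bernstein--von Mises Gaussian as $\widehat\Phi_{in} = \phi_d(\cdot\mid\widehat\bx_i,\bSigma_{0,in})$ with $\bSigma_{0,in} = (n\bW\bG_{0in}\bW\transpose)^{-1}$. Since the variational family $\calQ_d$ consists of all non-degenerate $d$-dimensional Gaussians, the variational posterior is itself Gaussian, say $q_{in}^* = \calN(\bx_i^*,\bSigma_i^*)$ for some $\bSigma_i^*\in\mathbb{M}_+(d)$. The starting point is the decomposition
\[
\sqrt{n}\bG_{0in}\halfpower(\bW\transpose\bx_i^*-\bx_{0i})
= \sqrt{n}\bG_{0in}\halfpower(\bW\transpose\widehat\bx_i-\bx_{0i})
+ \sqrt{n}\bG_{0in}\halfpower\bW\transpose(\bx_i^*-\widehat\bx_i).
\]
By Theorem \ref{thm:mesle} (with $\rho_n=1$), the first term converges in distribution to $\mathrm{N}_d(\zero_d,\eye_d)$, so by Slutsky's theorem it suffices to prove that the second term converges to $\zero_d$ in probability.

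The key observation I would exploit is that, because $\bW\in\mathbb{O}(d)$ and $\bG_{0in}$ is symmetric positive definite, $\bSigma_{0,in}^{-1/2} = \sqrt{n}\bW\bG_{0in}\halfpower\bW\transpose$, whence
\[
\big\|\sqrt{n}\bG_{0in}\halfpower\bW\transpose(\bx_i^*-\widehat\bx_i)\big\|_2
= \big\|\bSigma_{0,in}^{-1/2}(\bx_i^*-\widehat\bx_i)\big\|_2,
\]
that is, the quantity I must control is exactly the Mahalanobis distance, in the $\bSigma_{0,in}$ metric, between the means of $q_{in}^*$ and $\widehat\Phi_{in}$. I would then invoke the quantitative two-sided relationship between the total variation distance of two non-degenerate Gaussians and their parameters (the lower bound of Devroye--Mehrabian--Reddad): there is an absolute constant $c_0>0$ such that whenever $d_{\mathrm{TV}}(\calN(\bmu_1,\bGamma_1),\calN(\bmu_2,\bGamma_2))\le c_0$, both the covariance discrepancy $\|\bGamma_2^{-1/2}\bGamma_1\bGamma_2^{-1/2}-\eye_d\|\frobenius$ and the scaled mean gap $\|\bGamma_2^{-1/2}(\bmu_1-\bmu_2)\|_2$ are bounded by a constant multiple of the total variation distance.

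By Theorem \ref{thm:vb-posterior}, on an event $\calE_n$ of probability at least $1-n^{-c}$ we have $d_{\mathrm{TV}}(q_{in}^*,\widehat\Phi_{in})\lesssim_{c,\delta,\lambda}(\log n)^{-1/2}$, which falls below $c_0$ for all large $n$. Applying the above relationship with $\bGamma_2=\bSigma_{0,in}$ then yields, on $\calE_n$, both $\|\bSigma_{0,in}^{-1/2}\bSigma_i^*\bSigma_{0,in}^{-1/2}-\eye_d\|\frobenius\lesssim(\log n)^{-1/2}$ and $\|\bSigma_{0,in}^{-1/2}(\bx_i^*-\widehat\bx_i)\|_2\lesssim(\log n)^{-1/2}$. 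The covariance bound confirms that the $\bSigma_{0,in}$ and $\bSigma_i^*$ scalings are interchangeable up to a factor tending to $1$, so if one prefers to extract the mean gap from the lower bound in the $\bSigma_i^*$ metric, it converts back to the $\bSigma_{0,in}$ metric harmlessly. Since $\prob(\calE_n^c)\le n^{-c}\to 0$, the second term of the displayed decomposition converges to $\zero_d$ in probability, and the conclusion follows from Slutsky's theorem.

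\textbf{Main obstacle.} The crux is that total variation proximity alone does not control an unbounded functional such as the mean; it is the Gaussianity of \emph{both} $q_{in}^*$ and the Bernstein--von Mises limit that rescues the argument, via the comparison between Gaussian total variation distance and the Euclidean/Mahalanobis distance of the natural parameters. The only care required is to carry out this comparison in the correct Fisher-information scaling $\bG_{0in}$, and to ensure that the (itself vanishing) covariance mismatch between $\bSigma_i^*$ and $\bSigma_{0,in}$ does not distort the Mahalanobis metric used to measure the mean gap.
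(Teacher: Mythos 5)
Your proof is correct, and its skeleton matches the paper's: decompose $\sqrt{n}\bG_{0in}\halfpower(\bW\transpose\bx_i^*-\bx_{0i})$ into the MESLE term, which Theorem \ref{thm:mesle} sends to $\mathrm{N}_d(\zero_d,\eye_d)$, plus $\sqrt{n}\bG_{0in}\halfpower\bW\transpose(\bx_i^*-\widehat\bx_i)$, which must vanish in probability by way of Theorem \ref{thm:vb-posterior}, and finish with Slutsky. The genuine difference is in how total-variation proximity of two Gaussians is converted into proximity of their means at scale $n^{-1/2}$. The paper argues through characteristic functions: the $L^1$ bound of Theorem \ref{thm:vb-posterior} dominates the sup-norm difference of the characteristic functions of $q_{in}^*$ and of the Bernstein--von Mises Gaussian; since both are explicit Gaussian characteristic functions, a triangle inequality first isolates the covariance factors and then the mean phases, and evaluating at $\bt=\sqrt{n}\bu$ --- where the prefactor $\exp\{-\frac{1}{2}\bu\transpose(\bW\bG_{0in}\bW\transpose)\inverse\bu\}$ stays bounded below thanks to \eqref{eqn:fisher-information-order} --- yields $\sqrt{n}(\bx_i^*-\widehat\bx_i)=o_{\prob}(1)$. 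You instead invoke a quantitative two-sided comparison between Gaussian total-variation distance and parameter distances (a Devroye--Mehrabian--Reddad-type lower bound), which delivers the Mahalanobis mean gap $\|(n\bW\bG_{0in}\bW\transpose)\halfpower(\bx_i^*-\widehat\bx_i)\|_2\lesssim(\log n)^{-1/2}$ and the covariance consistency (needed to pass between the variational covariance and the inverse-information covariance as whitening metrics) in one stroke. Your route is shorter and sidesteps a subtlety the characteristic-function route must handle: $|e^{ia}-e^{ib}|$ being small only pins $a-b$ near a multiple of $2\pi$, so one needs the bound along the whole segment $\bt=s\sqrt{n}\bu$, $s\in[0,1]$, together with continuity of the phase in $s$, to select the branch through $0$. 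The cost is reliance on an external result whose precise statement (which covariance is used for whitening, the smallness threshold on the total-variation distance, and the fact that the paper's bound is an $L^1$ bound, i.e.\ twice the total-variation distance) you should verify; if self-containment is desired, the needed lower bound follows by projecting both Gaussians onto the direction of the scaled mean difference and using the explicit one-dimensional total-variation formula. Both arguments crucially exploit, as you note, that \emph{both} distributions are Gaussian; your only loose end is citation precision, not mathematics.
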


\section{Numerical examples}
\label{sec:numeric-ex}
In this section, we study the finite-sample numerical performance of SANVI in several simulated examples of GRDPG. For comparison, we implement the following competing estimates: ASE, OSE developed by \cite{xie-xu-2023-os}, Bayes estimate (BE) as the posterior mean from MCMC with the ESL function, and SANVI. For both BE and SANVI, the improper uniform prior distribution over the Euclidean space is used. We evaluate the performance of an estimator $\widehat{\bX}$ by computing the sum of squared errors (the global error) between the aligned estimated latent positions and the true value counterparts, defined as $\mathrm{SSE}(\widehat{\bX},\,\bX_0) = \inf_{\bW\in\mathbb{O}(d)}\|\widehat{\bX}\bW-\bX_0\|_{\mathrm{F}}^2$. Besides the simulated examples, we also apply the proposed method to a real-world graph dataset. We then discuss briefly the computation time of the proposed algorithm.

We consider four examples of GRDPG to investigate the numerical performance of the proposed estimate in various scenarios: a rank-two stochastic block model, a rank-two degree-corrected stochastic block model, a generic rank-two RDPG, and a generic rank-three GRDPG.
For each example, several sample sizes are considered: $n=$ 1000, 3000, 5000, 7000, 9000, and 10000. We take the truncation parameter in the ESL function to be $\tau=\mathrm{min}(0.001, e^{1.5}/n)$, where $n$ denotes the number of vertices in the graph. For the parameters in the stochastic gradient descent for SANVI, we take the batch size $s=2$, the step size $\alpha_0=0.01$, decay rates for the moments of gradients $\beta_1=0.01$, $\beta_2=0.95$, and the maximum number of iterations to be $1000$. For the MCMC sampling, we use the Metropolis-Hastings algorithm with a Gaussian random walk. The total length of the chain is set to be $3000$, with a thinning of 2 and then a burn-in of 500, giving a set of $1000$ draws to compute the posterior mean. The covariance of the random walk is tuned so that the acceptance rates of the chains for most vertices among the $n$ vertices of a graph lie between $20\%$ and $30\%$. The experiments are repeated for $100$ times for each simulated scenario.

\subsection{A stochastic block model example}
\label{subsec:numeric-sbm}
As the first simulated example, consider a rank-two stochastic block model in the context of random dot product graphs with five blocks with latent positions $\bv_1=[0.3,0.3]$, $\bv_2=[0.5,0.5]$, $\bv_3=[0.7,0.7]$, $\bv_4=[0.3,0.7]$, $\bv_5=[0.7,0.3]$. Each vertex is randomly assigned to one of the five blocks with equal probability. Arrange the five latent positions as the rows of a $5\times2$ matrix $\bB$, and let $\bZ$ be an $n\times{}5$ matrix whose $i$th row $\bz_i\transpose$ encodes the block membership of vertex $i$, i.e., the $k$th entry of $\bz_i$ is $1$ if vertex $i$ belongs to block $k$ and $0$ otherwise. Conditional on the block assignments of all the vertices, we have $A_{ij}\iidsim\mathrm{Bernoulli}(\bv_{k_i}\transpose\bv_{k_j})$ for all $i,j\in[n]$, and $\bA\sim\mathrm{RDPG}(\bZ\bB)$.

The sums of squared errors of the four estimates are summarized in Table \ref{table:sse-sbm}. While OSE is numerically unstable and has sums of squared errors larger than all other estimates due to the latent position $\bv_3=[0.7,\,0.7]$ being close to the unit circle (two vertices in this block have an edge probability of $0.98$), BE and SANVI are nevertheless numerically stable and perform better than ASE. The paired two-sample $t$-tests between the sums of squared errors of ASE and those of BE, and those of SANVI, respectively, are performed, and the $p$-values are listed in Table \ref{table:ttest-sbm}, from which we can see that the two estimates that are based on the ESL function indeed have smaller sums of squared errors.

\begin{table}[t]
  \centering
  \begin{tabular}{l | c c c c}
     \hline
  Estimate & ASE & OSE & BE & SANVI \\
  \hline
    \multirow{2}{*}{$n=1000$} & 8.352   & 19.611   & 7.658   & 7.739 \\
                              & (0.367) & (11.116) & (0.358) & (0.359) \\
        \hline
    \multirow{2}{*}{$n=3000$} & 7.880   & 41.459   & 7.535   & 7.564 \\
                              & (0.210) & (28.124) & (0.199) & (0.203) \\
        \hline
    \multirow{2}{*}{$n=5000$} & 7.834   & 39.936   & 7.588   & 7.593 \\
                              & (0.145) & (29.470) & (0.142) & (0.141) \\
        \hline
    \multirow{2}{*}{$n=7000$} & 7.913   & 39.742   & 7.727   & 7.721 \\
                              & (0.129) & (38.149) & (0.130) & (0.130) \\
        \hline
    \multirow{2}{*}{$n=9000$} & 7.833   & 23.114   & 7.690   & 7.685 \\
                              & (0.115) & (18.278) & (0.116) & (0.116) \\
        \hline
    \multirow{2}{*}{$n=10000$}& 7.774   & 19.486   & 7.650   & 7.643 \\
                              & (0.094) & (17.361) & (0.092) & (0.093) \\
     \hline
  \end{tabular}
  \caption{The sums of squared errors (and their standard errors over 100 repetitions, in parenthesis) of ASE, OSE, BE, and SANVI, respectively, in the example of stochastic block model.}
  \label{table:sse-sbm}
\end{table}

\begin{table}[htbp]
  \centering
  \begin{tabular}{l c c c c}
     \hline
     $n$ & $n=1000$ & $n=3000$ & $n=5000$ & $n=7000$ \\
     \hline
   ASE vs BE  & $2.7\times10^{-82}$ & $7.1\times10^{-87}$ & $5.5\times10^{-81}$ & $3.7\times10^{-84}$  \\
   ASE vs SANVI & $2.2\times10^{-76}$ & $4.9\times10^{-81}$ & $6.9\times10^{-75}$ & $8.1\times10^{-85}$  \\
     \hline
       & $n=9000$ & $n=10000$\\
       \hline
     ASE vs BE  & $7.3\times10^{-84}$ & $3.4\times10^{-78}$\\
     ASE vs SANVI & $4.4\times10^{-76}$ & $1.1\times10^{-76}$\\
     \hline
  \end{tabular}
  \caption{The $p$-values of paired two-sample $t$-tests of the sums of squared errors of ASE with BE and SANVI, respectively, in the example of the stochastic block model.}
  \label{table:ttest-sbm}
\end{table}

\subsection{A degree-corrected stochastic block model example}
\label{subsec:numeric-dcsbm}
In this example, we consider a rank-two degree-corrected stochastic block model in the context of RDPG with two blocks. Specifically, let $\bv_1=[{3\sqrt{10}}/{10}, {\sqrt{10}}/{10}]$ and $\bv_2=[{\sqrt{10}}/{10}, {3\sqrt{10}}/{10}]$. Each vertex is randomly assigned to a block with equal probability and then assigned a degree-corrected parameter (weight) $\theta_i$ that follows $\mathrm{Uniform}(0.05,0.95)$. Arrange the latent positions of the two blocks as the rows of a $2\times2$ matrix $\bB$, let $\bZ$ be an $n\times{}2$ matrix whose $i$th row $\bz_i\transpose$ encodes the block membership of vertex $i$, i.e., the $k$th entry of $\bz_i$ is $1$ if vertex $i$ belongs to block $k$ and $0$ otherwise, and let $\bTheta$ be an $n\times{}n$ diagonal matrix whose $(i,i)$th entry $\theta_i$ is the degree-corrected parameter of vertex $i$. Then, conditional on the block assignments and the degree corrections of all the vertices, we have the edge indicator $A_{ij}\iidsim\mathrm{Bernoulli}(\theta_{i}\theta_{j}\bv_{k_i}\transpose\bv_{k_j})$ for all $i,j\in[n]$, and $\bA\sim\mathrm{RDPG}(\bTheta\bZ\bB)$.

The sums of squared errors of the four estimates are summarized in Table \ref{table:sse-dcsbm}. We can see that for large samples, the three likelihood-based estimates (OSE, BE, and SANVI) all have smaller sums of squared errors than ASE does, and in the case of $n=1000$, the two estimates based on the ESL function still perform well. This phenomenon empirically validates the statement that the likelihood-based estimates improve upon spectral-based ASE, since the latter does not incorporate likelihood information in the graph. The $p$-values given by the paired $t$-tests on the sums of squared errors of ASE and the other three estimates are listed in Table \ref{table:ttest-dcsbm}, which quantitatively verifies the smaller errors given by the likelihood-based estimates.

\begin{table}[t]
  \centering
  \begin{tabular}{l | c c c c}
     \hline
  Estimate & ASE & OSE & BE & SANVI \\
  \hline
    \multirow{2}{*}{$n=1000$} & 3.323   & 3.307   & 3.097   & 3.082 \\
                              & (0.133) & (0.443) & (0.137) & (0.138) \\
        \hline
    \multirow{2}{*}{$n=3000$} & 3.340   & 3.157   & 3.161   & 3.157 \\
                              & (0.074) & (0.068) & (0.069) & (0.069) \\
        \hline
    \multirow{2}{*}{$n=5000$} & 3.317   & 3.140   & 3.148   & 3.146 \\
                              & (0.059) & (0.058) & (0.058) & (0.058) \\
        \hline
    \multirow{2}{*}{$n=7000$} & 3.302   & 3.124   & 3.134   & 3.133 \\
                              & (0.049) & (0.046) & (0.046) & (0.046) \\
        \hline
    \multirow{2}{*}{$n=9000$} & 3.330   & 3.149   & 3.160   & 3.160 \\
                              & (0.040) & (0.038) & (0.038) & (0.038) \\
        \hline
    \multirow{2}{*}{$n=10000$}& 3.291   & 3.117   & 3.128   & 3.128 \\
                              & (0.042) & (0.040) & (0.041) & (0.041) \\
     \hline
  \end{tabular}
  \caption{The sums of squared errors (and their standard errors over 100 repetitions, in parenthesis) of ASE, OSE, BE, and SANVI, respectively, in the example of degree-corrected stochastic block model.}
  \label{table:sse-dcsbm}
\end{table}

\begin{table}[htbp]
  \centering
  \begin{tabular}{l c c c c}
     \hline
     & $n=1000$ & $n=3000$ & $n=5000$ & $n=7000$ \\
     \hline
   ASE vs OSE & $0.7              $ & $3.9\times10^{-87}$ & $1.4\times10^{-101}$& $1.1\times10^{-105}$ \\
   ASE vs BE  & $1.1\times10^{-65}$ & $1.2\times10^{-85}$ & $3.8\times10^{-98}$ & $1.8\times10^{-102}$  \\
   ASE vs SANVI & $9.8\times10^{-66}$ & $2.1\times10^{-86}$ & $9.6\times10^{-99}$ & $2.6\times10^{-102}$  \\
     \hline
       & $n=9000$ & $n=10000$\\
       \hline
     ASE vs OSE & $4.2\times10^{-115}$ & $8.1\times10^{-111}$ \\
     ASE vs BE & $1.3\times10^{-108}$ & $1.7\times10^{-106}$\\
     ASE vs SANVI & $3.2\times10^{-110}$ & $3.5\times10^{-107}$\\
     \hline
     \
  \end{tabular}
  \caption{The $p$-values of paired two-sample $t$-tests of the sums of squared errors of ASE with OSE, BE, and SANVI, respectively, in the example of the degree-corrected stochastic block model.}
  \label{table:ttest-dcsbm}
\end{table}

\subsection{A two-dimensional latent curve example}
\label{subsec:numeric-2drdpg}
Now we consider a rank-two generic RDPG whose latent positions are drawn from a latent curve in $\mathbb{R}^2$, parameterized as $[0.15\sin(\pi{}t)+0.6,\,0.15\cos(\pi{}t)+0.6]\transpose$, for $0<t\leq{}1$, where the $n$ latent positions $\bx_i$ for $i\in[n]$ are then obtained by taking $t={i}/{n}$ for $i\in[n]$. Then, we take $A_{ij}\iidsim{}\mathrm{Bernoulli}(\bx_{0i}\transpose\bx_{0j})$ for all $i,j\in[n]$. Writing $\bX_0=[\bx_1,\,\ldots,\,\bx_n]\transpose$, we then have $\bA\sim\mathrm{RDPG}(\bX_0)$.

The sums of squared errors of the four estimates are summarized in Table \ref{table:sse-2drdpg}. We can see that the errors are obviously larger than those in the previous simulated examples, due to the complex nature of the latent positions obtained from a curve, in comparison to the relatively simple nature of the latent positions in a stochastic block model. Similar to the case in the stochastic block model example, while OSE is numerically unstable in the presence of latent positions close to the unit circle, the two estimates based on the ESL function are nevertheless numerically stable and perform relatively well compared to ASE. From Table \ref{subsec:numeric-2drdpg}, we observe that the sums of squared errors of BE and SANVI are approximately $5\%$ to $15\%$ less than those of ASE. This indicates that the likelihood information in the ESL function facilitates the estimation of latent positions with complex structures. The $p$-values given by the paired two-sample $t$-tests between the sums of squared errors of ASE and those of the two ESL-based estimates are listed in Table \ref{table:ttest-2drdpg}, which quantitatively supports this observation.

\begin{table}[t]
  \centering
  \begin{tabular}{l | c c c c}
     \hline
  Estimate & ASE & OSE & BE & SANVI \\
  \hline
    \multirow{2}{*}{$n=1000$} & 31.800  & 41.128   & 28.713  & 28.655 \\
                              & (0.381) & (11.628) & (0.490) & (0.546)  \\
        \hline
    \multirow{2}{*}{$n=3000$} & 64.216  & 86.080   & 60.203  & 60.119 \\
                              & (0.371) & (19.760) & (0.521) & (0.556)  \\
        \hline
    \multirow{2}{*}{$n=5000$} & 85.716  & 114.552  & 80.924  & 80.711 \\
                              & (16.277)& (31.286) & (16.267)& (16.077) \\
        \hline
    \multirow{2}{*}{$n=7000$} & 28.950  & 76.529   & 25.113  & 25.420 \\
                              & (0.995) & (51.900) & (0.848) & (0.872)  \\
        \hline
    \multirow{2}{*}{$n=9000$} & 25.198  & 73.932   & 21.978  & 22.260 \\
                              & (0.644) & (52.166) & (0.559) & (0.572)  \\
        \hline
    \multirow{2}{*}{$n=10000$}& 24.153  & 71.077   & 21.112  & 21.383 \\
                              & (0.517) & (51.425) & (0.441) & (0.452) \\
     \hline
  \end{tabular}
  \caption{The sums of squared errors (and their standard errors over 100 repetitions, in parenthesis) of ASE, OSE, BE, and SANVI, respectively, in the example of rank-two latent curve.}
  \label{table:sse-2drdpg}
\end{table}

\begin{table}[htbp]
  \centering
  \begin{tabular}{l c c c c}
     \hline
     & $n=1000$ & $n=3000$ & $n=5000$ & $n=7000$ \\
     \hline
   ASE vs BE  & $4.4\times10^{-91}$ & $1.2\times10^{-99}$ & $1.9\times10^{-105}$ & $1.1\times10^{-127}$\\
   ASE vs SANVI & $9.2\times10^{-83}$ & $1.8\times10^{-95}$ & $6.0\times10^{-101}$ & $1.5\times10^{-125}$ \\
     \hline
      & $n=9000$ & $n=10000$\\
      \hline
      ASE vs BE & $1.7\times10^{-136}$ & $2.2\times10^{-139}$ \\
      ASE vs SANVI & $8.0\times10^{-133}$ & $9.6\times10^{-135}$\\
      \hline
  \end{tabular}
  \caption{The $p$-values of paired two-sample $t$-tests of the sums of squared errors of ASE with BE and SANVI, respectively, in the example of the rank-two latent curve.}
  \label{table:ttest-2drdpg}
\end{table}

\subsection{A three-dimensional latent curve example}
\label{subsec:numeric-3drdpg}
In this example, we consider a rank-three generic GRDPG, with signature $(2,1)$, whose latent positions are drawn from a latent curve in $\mathbb{R}^3$, parameterized as $[0.15\sin(2\pi{}t)+0.6,\,0.15\cos(2\pi{}t)+0.6,\,0.15\cos(4\pi{}t)]\transpose$, for $0<t\leq{}1$, where the $n$ latent positions are then obtained by taking $t={i}/{n}$ for $i\in[n]$. In particular, the resulting edge probability matrix $\bP=\bX_0\eye_{2,1}\bX_0\transpose$, where $\bX_0=[\bx_1,\,\ldots,\,\bx_n]\transpose$ and $\eye_{2,1}=\mathrm{diag}(1,\,1,\,-1)$, is an indefinite matrix. We then take $A_{ij}\iidsim{}\mathrm{Bernoulli}(\bx_{0i}\transpose\eye_{2,1}\bx_{0j})$ for all $i,j\in[n]$, and $\bA\sim\mathrm{GRDPG}(\bX_0)$ with signature $(2, 1)$.

The sums of squared errors of the four estimates are summarized in Table \ref{table:sse-3drdpg}. As in the previous example of the rank-two latent curve, the errors are relatively large due to the complex nature of the latent positions obtained from a curve. BE and SANVI still give relatively smaller sums of squared errors compared to ASE, and the $p$-values from the paired two-sample $t$-tests of their sums of squared errors are listed in Table \ref{table:ttest-3drdpg}.

\begin{table}[t]
  \centering
  \begin{tabular}{l | c c c c}
     \hline
  Estimate & ASE & OSE & BE & SANVI \\
  \hline
    \multirow{2}{*}{$n=1000$}& 74.147 & 87.363  & 69.920 & 70.058 \\
                             & (2.581)  & (12.521)  & (2.559)  & (2.589)  \\
        \hline
    \multirow{2}{*}{$n=3000$}& 58.578 & 107.439 & 55.517 & 55.835 \\
                             & (8.160)  & (37.719)  & (7.940)  & (7.954)  \\
        \hline
    \multirow{2}{*}{$n=5000$}& 46.232 & 107.456 & 44.070 & 44.260 \\
                             & (1.039)  & (38.946)  & (0.993)  & (0.996)  \\
        \hline
    \multirow{2}{*}{$n=7000$}& 42.373 & 112.582 & 40.593 & 40.747 \\
                             & (0.590)  & (51.598)  & (0.558)  & (0.565)  \\
        \hline
    \multirow{2}{*}{$n=9000$}& 40.291 & 123.629 & 38.719 & 38.860 \\
                             & (0.448)  & (63.700)  & (0.413)  & (0.424)  \\
        \hline
    \multirow{2}{*}{$n=10000$}& 39.572 & 132.974 & 38.079 & 38.205 \\
                              & (0.369)  & (75.372)  & (0.354)  & (0.358) \\
     \hline
  \end{tabular}
  \caption{The sums of squared errors (and their standard errors over 100 repetitions, in parenthesis) of ASE, OSE, BE, and SANVI, respectively, in the example of rank-three latent curve.}
  \label{table:sse-3drdpg}
\end{table}

\begin{table}[htbp]
  \centering
  \begin{tabular}{l c c c c}
     \hline
     & $n=1000$ & $n=3000$ & $n=5000$ & $n=7000$ \\
     \hline
   ASE vs BE  & $3.0\times10^{-95}$ & $1.6\times10^{-100}$ & $2.1\times10^{-112}$ & $1.7\times10^{-118}$ \\
   ASE vs SANVI & $5.0\times10^{-87}$ & $1.5\times10^{-95}$  & $1.1\times10^{-105}$ & $2.3\times10^{-106}$  \\
     \hline
      & $n=9000$ & $n=10000$\\
      \hline
      ASE vs BE & $1.3\times10^{-115}$ & $9.7\times10^{-124}$ \\
      ASE vs SANVI & $5.3\times10^{-107}$ & $2.4\times10^{-114}$\\
      \hline
  \end{tabular}
  \caption{The $p$-values of paired two-sample $t$-tests of the sums of squared errors of ASE with BE and SANVI, respectively, in the example of rank-three latent curve.}
  \label{table:ttest-3drdpg}
\end{table}

\subsection{Analysis of a real-world graph dataset}
\label{subsec:numeric-real}
We finally apply the proposed algorithm on a real-world network of political blogs \citep{adamic-glance-2005-pb}. The network corresponds to the hyperlinks of blogs regarding U.S. politics after the 2004 presidential election. These blogs are manually classified as either liberal or conservative, which we use as the true value of community labels. After following the rule of thumb by extracting the largest connected component and converting the resulting network with undirected edges, we obtain an $1224\times{}1224$ adjacency matrix with 33430 entries being 1 and others being 0. We apply MCMC and stochastic gradient Algorithm in \ref{alg:vbsl} to compute the BE and SANVI estimates, together with ASE and OSE as the competitors. We choose the embedding dimension to be $d=2$ since there are two true communities in the network. These latent position estimates are then applied to the Gaussian-mixture-model-based clustering to obtain the set of estimated community labels, which we compare against the true community labels via the adjusted Rand index (ARI). The results are listed in Table \ref{table:ari-polblogs}, along with the corresponding computation time. Clearly, the two estimates that are based on the ESL function, i.e., BE and SANVI, are more accurate in terms of recovering the liberal-versus-conservative community structure in the network of these political blogs.

\begin{table}[htbp]
  \centering
  \begin{tabular}{c c c c c}
    \hline
    Estimate                & ASE & OSE & BE & SANVI \\
    \hline
    Adjusted Rand Index     & 0.1321 & 0.0416 & 0.4374 & 0.3117 \\
    Computation Time (seconds)   & 0.05  & 0.05  & 45.20  & 16.64 \\
    \hline
  \end{tabular}
  \caption{The adjusted Rand indices computed from the four estimates and the corresponding computation time for Section \ref{subsec:numeric-real}.}
  \label{table:ari-polblogs}
\end{table}

\subsection{Discussion on computation time}
\label{subsec:compute-time}
A graph that has $n$ vertices gives an $n\times{}n$ adjacency matrix that has $n^2$ entries. Estimating GRDPG involves finding a $d$-dimensional representation for each vertex, resulting in an $n\times{}d$ latent position matrix. We decompose this into $n$ subproblems, each of which finds a $d$-dimensional latent position. Each subproblem is $O(n)$ in time, so the entire problem is $O(n^2)$ in time.

Among the four estimates considered above, ASE and OSE require little time in computation, since the former is just the spectral decomposition truncated at the first $d$ dimensions of the adjacency matrix, and the latter is just a one-step update of the former. Methods based on the likelihood function typically require optimization and/or sampling, which are often computationally intensive. MCMC sampling is useful and often yields good results in various statistical problems; however, it is known to suffer from a long mixing time in some high-dimensional cases. Identifying a stopping criterion for an MCMC sampler is also nontrivial. VI tries to deal with this issue by turning the sampling problem into an optimization problem that requires relatively less computation time.

Specialized to the simulated examples above, while both BE and SANVI perform relatively as well as each other, VI requires less computation time than MCMC sampling. The relationship of computation time and sample size in the example of the stochastic block is given in Figure \ref{fig:time-plot-sbm}, with two quadratic curves fitted for the points corresponding to the MCMC sampler for BE and the stochastic gradient descent algorithm for SANVI, respectively. We can see that although both algorithms are $O(n^2)$ in time, the optimization-based stochastic gradient descent requires around only $20\%$ of that of the sampling-based MCMC algorithm.

\begin{figure}[htbp]
\centering
\includegraphics[width = 0.75\textwidth]{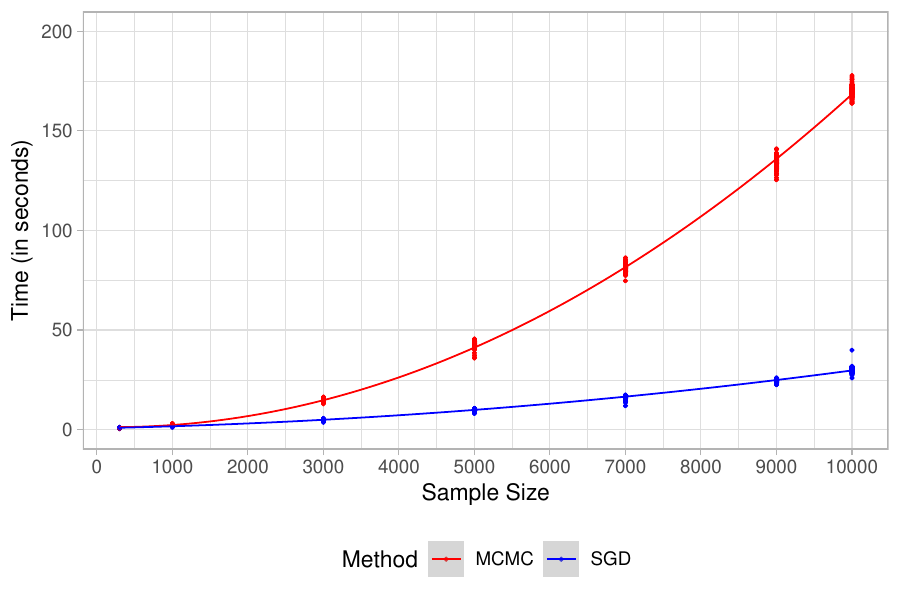}
\caption{The relationship of running time and sample size, in the example of stochastic block model in Section \ref{subsec:numeric-sbm}, where the sample sizes are $n=300$, $1000$, $3000$, $5000$, $7000$, $9000$, and $10000$, with 100 repetitions. Here, two quadratic curves are fitted using the points corresponding to the MCMC for BE and the stochastic gradient descent algorithm (SGD) for SANVI, respectively.}
\label{fig:time-plot-sbm}
\end{figure}

\section{Discussion}
\label{sec:discussion}
In this paper, we propose an ESL function for GRDPG and leverage it to develop a computationally efficient spectral-assisted network variational inference method (SANVI). 
We establish the asymptotic properties of the point estimator, namely, MESLE, including its existence and uniqueness, its consistency at the rate of $\sqrt{n}$ with high probability, its asymptotic normality and efficiency, and, globally for all vertices, the consistency of the global error. The MESLE is theoretically interesting and computationally appealing in its own right. For SANVI, we establish the Bernstein-von Mises theorem of the variational posterior distribution and the asymptotic normality and efficiency of the variation inference estimator. 

We also provide a stochastic gradient descent algorithm for implementing the computation of SANVI. Numerical study shows that, measured in terms of the global error, the point estimate of SANVI (variational posterior mean) is numerically comparable to BE, the latter of which is computed via a classical MCMC sampler. The computation time of the stochastic gradient descent algorithm for SANVI, although still in $O(n^2)$ and is the same as the MCMC sampler due to the intrinsic properties of our setting, is indeed much less than the computation time of the classical MCMC algorithm.

\appendix

\section{Preliminary Results}

This section contains some preliminary results that will be used in the proofs of the main results.

\begin{theorem}
\label{thm:signed-ase}
Suppose Assumption \ref{assumption:esl-grdpg} holds. Let $\widetilde\bX$ denote the signature-adjusted adjacency spectral embedding. Then
\[
\widetilde\bX\bW - \rho_n^{1/2}\bX_0\eye_{p,q} = \rho_n^{-1/2} (\bA-\bP)\bX_0 (\bX_0\transpose\bX_0)\inverse + \bR_{\widetilde\bX},
\]
where, for any $c > 0$, there exists a constant $N_{c,\delta,\lambda}\in\mathbb{N}_+$ depending on $c,\delta,\lambda$, such that for all $n \geq N_{c,\delta,\lambda}$,
\begin{align*}
\|\widetilde\bX\bW - \rho_n^{1/2}\bX_0\eye_{p,q}\|\twotoinfinity &\lesssim_{c,\delta,\lambda} \sqrt{\frac{\log n}{n}}, \quad
\|\bR_{\widetilde\bX}\|\twotoinfinity \lesssim_{c,\delta,\lambda} \frac{\log n}{n\rho_n\halfpower},
\end{align*}
with probability at least $1 - n^{-c}$.
\end{theorem}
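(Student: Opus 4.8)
The plan is to treat $\widetilde{\bX}$ as a perturbation of the population embedding and to control the resulting error row-wise. Write $\bP = \expect\bA = \rho_n\bX_0\eye_{p,q}\bX_0\transpose$, which has rank $d$ with exactly $p$ positive and $q$ negative nonzero eigenvalues, and let $\bP = \bU_\bP\bS_\bP\bU_\bP\transpose$ be its reduced eigendecomposition, where $\bU_\bP\in\mathbb{O}(n,d)$ and $\bS_\bP$ collects the $d$ nonzero eigenvalues. The first step is a spectral-norm concentration bound: under Assumption \ref{assumption:esl-grdpg}(d), a matrix-Bernstein / Bandeira--van Handel type inequality gives $\|\bA - \bP\|_2 \lesssim_c \sqrt{n\rho_n}$ with probability at least $1 - n^{-c}$, using that the entries of $\bA - \bP$ are independent, mean-zero, bounded, with variances of order $\rho_n$.

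Second, I would establish eigenvalue separation and correct sign recovery. Assumption \ref{assumption:esl-grdpg}(c) forces the nonzero eigenvalues of $\bP$ to have magnitude of order $n\rho_n$, while the remaining eigenvalues vanish. Combining this with the spectral bound and Weyl's inequality, the top $d$ eigenvalues of $\bA$ (in absolute value) stay of order $n\rho_n$ and the rest are of order $\sqrt{n\rho_n}$; since $\sqrt{n\rho_n}=o(n\rho_n)$ by Assumption \ref{assumption:esl-grdpg}(d), the top-$d$ spectrum of $\bA$ is well separated and inherits the correct signs, so $\mathrm{sgn}(\bS_\bA)$ matches the signature $\eye_{p,q}$ on the high-probability event. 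A Davis--Kahan argument then produces an orthogonal $\bW\in\mathbb{O}(d)$ aligning $\bU_\bA$ with $\bU_\bP$, which automatically respects the split into positive and negative eigenspaces and controls $\|\bU_\bA\bW - \bU_\bP\|_2$ up to the spectral error.

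Third, I would derive the first-order expansion. Starting from $\bA\bU_\bA = \bU_\bA\bS_\bA$, I rewrite $\bU_\bA = \bP\bU_\bA\bS_\bA\inverse + (\bA-\bP)\bU_\bA\bS_\bA\inverse$ and iterate once, replacing $\bU_\bA\bS_\bA\inverse$ in the leading noise term by its aligned population counterpart. Translating from $\bU_\bP,\bS_\bP$ to $\bX_0$ through the factorization $\bX_0 = \bU_\bP\bM$ for an invertible $d\times d$ matrix $\bM$ (valid since $\sigma_d(\bX_0)>0$) and forming $\widetilde{\bX}=\bU_\bA|\bS_\bA|\halfpower\mathrm{sgn}(\bS_\bA)$, the leading noise term reduces to $\rho_n\invhalfpower(\bA-\bP)\bX_0(\bX_0\transpose\bX_0)\inverse$ up to rotational and higher-order factors absorbed into $\bR_{\widetilde{\bX}}$, while the deterministic part collapses to the center $\rho_n\halfpower\bX_0\eye_{p,q}$. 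A scale check confirms the rates: each row of the leading term has norm of order $\rho_n\invhalfpower\sqrt{n\rho_n}/n=\sqrt{1/n}$, matching the claimed $\sqrt{(\log n)/n}$ bound after the high-probability logarithmic inflation, and $\bR_{\widetilde{\bX}}$ is lower order since $\{(\log n)/(n\rho_n\halfpower)\}/\sqrt{(\log n)/n}=\sqrt{(\log n)/(n\rho_n)}\to 0$ by Assumption \ref{assumption:esl-grdpg}(d).

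The hard part will be the two-to-infinity control of the residual $\bR_{\widetilde{\bX}}$, because the empirical eigenvectors $\bU_\bA$ depend on the entire noise matrix $\bA-\bP$, so naive bounds on products such as $(\bA-\bP)(\bU_\bA\bW - \bU_\bP)$ lose the row-wise structure. I would handle this with a leave-one-out decoupling: for each row $i$, introduce the auxiliary matrix $\bA^{(i)}$ that replaces the $i$th row and column of $\bA$ by their expectations, whose eigenvectors $\bU_{\bA^{(i)}}$ are independent of the $i$th row of $\bA-\bP$. One then bounds $\|(\bA-\bP)\bU_\bP\|\twotoinfinity$ and the per-row quantity $\|[(\bA-\bP)(\bU_{\bA^{(i)}}\bW^{(i)}-\bU_\bP)]_{i*}\|_2$ by Bernstein's inequality conditionally, and controls the eigenvector stability $\|\bU_\bA\bW - \bU_{\bA^{(i)}}\bW^{(i)}\|_2$ by a separate perturbation argument. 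Summing the higher-order contributions together with the sign-recovery and alignment discrepancies yields the $O\{(\log n)/(n\rho_n\halfpower)\}$ residual rate. Since the statement is explicitly an adaptation of the two-to-infinity expansions in \cite{lyzinski-sussman-tang-2014} and \cite{xie-2024-spn-bernoulli}, I would reuse their leave-one-out machinery, the only genuinely new bookkeeping being the indefinite signature $(p,q)$ and the sign-recovery step, which the separation established above makes routine.
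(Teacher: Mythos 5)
Your proposal is correct and follows essentially the same route as the paper: spectral-norm concentration of $\bA-\bP$, blockwise alignment respecting the positive/negative eigenvalue split, the first-order expansion with leading term $\rho_n\invhalfpower(\bA-\bP)\bX_0(\bX_0\transpose\bX_0)\inverse$, and leave-one-out decoupling for the two-to-infinity residual. The only difference is one of packaging — the paper verifies the assumptions of \cite{xie-2024-spn-bernoulli} and invokes its Lemma S2.1 and Theorem 3.2 rather than re-deriving the leave-one-out machinery, which is exactly the machinery you propose to reuse.
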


\begin{remark}
Theorem \ref{thm:signed-ase} is a generalization of Corollary 4.1 in \cite{xie-2024-spn-bernoulli} to generalized random dot product graphs in our settings. The proof is mostly identical to its original version, with slight modifications such as the presence of the signature matrix $\eye_{p,q}$ and the different definition of the orthogonal alignment matrix $\bW$. We provide a proof here. For more theory on the entrywise limit theorems for eigenvectors of signal-plus-noise matrix models with weak signals, please refer to \cite{xie-2024-spn-bernoulli}.
\end{remark}

\begin{proof}
We clarify some notations first. Let $\bX_{0+}$ denote the first $p$ columns of $\bX_0$ (those corresponding to the positive part of the signature), and $\bX_{0-}$ the last $q$ columns of $\bX_0$ (those corresponding to the negative part of the signature), that is, $\bX_0=[\bX_{0+},\;\bX_{0-}]$. Define $\Delta_n=(1/n)\bX_0\transpose\bX_0$, then by the assumption that $\bX_{0+}$ is orthogonal to $\bX_{0-}$, we have
\[
\bDelta_n = \begin{bmatrix} \frac{1}{n}\bX_{0+}\transpose\bX_{0+} & \\ & \frac{1}{n}\bX_{0-}\transpose\bX_{0-} \end{bmatrix}.
\]
Perform the eigendecomposition of $\bP=\rho_n\bX_0\eye_{p,q}\bX_0\transpose$, we have $\bP=\bU_\bP \bS_\bP \bU_\bP\transpose$. Group by positive eigenvalues and negative eigenvalues, we have
\[
\bU_\bP = [\bU_{\bP+},\; \bU_{\bP-}],
\quad
\bS_\bP = \begin{bmatrix} \bS_{\bP+} & \\ & \bS_{\bP-} \end{bmatrix}.
\]
Let $\bW_{\bX_{0+}}\in\mathbb{O}(p)$ and $\bW_{\bX_{0-}}\in\mathbb{O}(q)$ be the orthogonal matrices such that $\rho_n\halfpower\bX_{0+}=\bU_{\bP+}\bS_{\bP+}\halfpower\bW_{\bX_{0+}}$ and $\rho_n\halfpower\bX_{0-}=\bU_{\bP-}(-\bS_{\bP-})\halfpower\bW_{\bX_{0-}}$. It is easy to see that $\rho_n\bX_0\eye_{p,q}\bX_0\transpose = \bU_\bP \bS_\bP \bU_\bP\transpose$. 
Let $\hat\lambda_1\geq\hat\lambda_2\geq\ldots\geq\hat\lambda_d$ be the first $d$ eigenvalues of $\bA$ that are largest in absolute value, and let $\hat\sigma_1\geq\hat\sigma_2\geq\ldots\geq\hat\sigma_d$ be the first $d$ singular values of $\bA$. Note that the numbering for $\hat\lambda_k$ and $\hat\sigma_k$ are different in order. Recall the definition of the adjacency spectral embedding $\breve\bX = \bU_\bA |\bS_\bA|\halfpower$, and the definition of the signature-adjusted adjacency spectral embedding $\widetilde\bX = \bU_\bA |\bS_\bA|\halfpower \mathrm{sgn}(\bS_\bA)$, where $\bS_\bA$ is the diagonal matrix with $\hat\lambda_k$, $k\in[d]$, arranged in the order of real numbers, $\mathrm{sgn}(\bS_\bA)$ is the diagonal matrix with the signs (+1 and -1) of the corresponding eigenvalues, and $\bU_\bA$ is the matrix with the corresponding eigenvectors as columns. By grouping the positive eigenvalues and negative eigenvalues respectively, we can write
\[
\bU_\bA = [\bU_{\bA+},\; \bU_{\bA-}],
\quad
\bS_\bA = \begin{bmatrix} \bS_{\bA+} & \\ & \bS_{\bA-} \end{bmatrix}.
\]
Define $\breve\bX_+ = \bU_{\bA+}\bS_{\bA+}\halfpower$ and $\breve\bX_- = \bU_{\bA-}(-\bS_{\bA-})\halfpower$, and $\widetilde\bX_+ = \bU_{\bA+}\bS_{\bA+}\halfpower$ and $\widetilde\bX_- = -\bU_{\bA-}(-\bS_{\bA-})\halfpower$, we have
\[
\breve\bX = [\breve\bX_+,\; \breve\bX_-]
\quad \mathrm{and} \quad
\widetilde\bX = [\widetilde\bX_+,\; \widetilde\bX_-].
\]
Note that $\widetilde\bX=\breve\bX\eye_{p,q}$.
Let $\bU_{\bP+}\transpose\bU_{\bA+}$ and $\bU_{\bP-}\transpose\bU_{\bA-}$ yield the singular value decompositions $\bW_{1+}\bSigma_+\bW_{2+}\transpose$ and $\bW_{1-}\bSigma_-\bW_{2-}\transpose$, respectively, and define $\bW_+^*=\bW_{1+}\bW_{2+}\transpose$ and $\bW_-^*=\bW_{1-}\bW_{2-}\transpose$, and let $\bW^*=\mathrm{diag}(\bW_+^*,\;\bW_-^*)$. Then the orthogonal alignment matrix between $\breve\bX_+$ and $\rho_n\halfpower\bX_{0+}$ is selected as $(\bW_+^*)\transpose\bW_{\bX_{0+}}$, and the same for that between $\widetilde\bX_+$ and $\rho_n\halfpower\bX_{0+}$; the orthogonal alignment matrix between $\breve\bX_-$ and $\rho_n\halfpower\bX_{0-}$ is selected as $(\bW_-^*)\transpose\bW_{\bX_{0-}}$, and that between $\widetilde\bX_-$ and $\rho_n\halfpower\bX_{0-}$ is selected as $-(\bW_-^*)\transpose\bW_{\bX_{0-}}$. So the orthogonal alignment matrix between the adjacency spectral embedding $\breve\bX$ and $\rho_n\halfpower\bX_0$ is the block diagonal matrix $\mathrm{diag}((\bW_+^*)\transpose\bW_{\bX_{0+}},\;(\bW_-^*)\transpose\bW_{\bX_{0-}})$, and that between $\widetilde\bX$ and $\rho_n\halfpower\bX_0$ is $\mathrm{diag}((\bW_+^*)\transpose\bW_{\bX_{0+}},\;-(\bW_-^*)\transpose\bW_{\bX_{0-}})$. The $\bW$ in the statement of this lemma is $\mathrm{diag}((\bW_+^*)\transpose\bW_{\bX_{0+}},\;(\bW_-^*)\transpose\bW_{\bX_{0-}})$ because we are aligning $\widetilde\bX$ and $\rho_n\halfpower\bX_0\eye_{p,q}$.

\noindent To prove the theorem, we follow the proofs in \cite{xie-2024-spn-bernoulli}. We first present a useful result for random graphs.
\begin{result}
\label{result:spec-norm-concentration}
Suppose Assumption \ref{assumption:esl-grdpg} holds. Let $\bP=\rho_n\bX_0\eye_{p,q}\bX_0\transpose$. Then for any $c>0$, there exists some constant $K_c>$ depending on $c$, such that $\|\bA-\bP\|_2 \leq K_c(n\rho_n)\halfpower$ with probability at least $1-n^{-c}$. This follows exactly from Theorem 5.2 in \cite{lei-rinaldo-2015-sbm}.
\end{result}

\noindent We need to verify the Assumptions 1-5 in \cite{xie-2024-spn-bernoulli} for our setup. By our definition of generalized random dot product graphs, Assumptions 1-3 in \cite{xie-2024-spn-bernoulli} automatically holds. We now verify Assumptions 4 in \cite{xie-2024-spn-bernoulli}. Fix an arbitrary constant $c\geq 1$. Write $\bA=\bP+\bE$. Let $\be_i$ denote the unit basis vector whose $i$th coordinate is one and the rest of coordinates are zeros. Let $\bE^{(m)}$ denote the matrix constructed by replacing the $m$th row and $m$th column of $\bE$ by their expected values which are zeros. Define the function $\phi(x)=(2+\beta_c)(\max(\log(1/x),\,1))\inverse\lambda_d(\bDelta_n)\inverse$ for a constant $\beta_c>0$ that satisfies $\beta_c n\rho_n \geq (c+2)\log{n}$. By Lemma S6.1 in \cite{xie-2024-spn-bernoulli}, for any deterministic $\bV\in\mathbb{R}^{n\times d}$, we have
\begin{align*}
&\prob\left\{ \|\be_i\transpose\bE\bV\|_2 \leq n\rho_n\lambda_d(\bDelta_n)\|\bV\|\twotoinfinity\phi\left(\frac{\|\bV\|\frobenius}{\sqrt{n}\|\bV\|\twotoinfinity}\right) \right\}
\geq 1 - c_0 n^{-(1+\xi)}
\end{align*}
where $\xi=c\geq 1$ and $c_0=2$. To show that the same concentration bound also holds for $\|\be_i\transpose\bE^{(m)}\bV\|_2$, we simply observe that $[\bE^{(m)}]_{im}$ can be viewed as a centered Bernoulli random variable whose success probability is zero. Then applying Lemma S6.1 in \cite{xie-2024-spn-bernoulli} leads to
\[
\prob\left\{ \|\be_i\transpose\bE\bV\|_2 \leq n\rho_n\lambda_d(\bDelta_n)\|\bV\|\twotoinfinity\phi\left(\frac{\|\bV\|\frobenius}{\sqrt{n}\|\bV\|\twotoinfinity}\right) \right\}
\geq
1 - c_0 n^{-(1+\xi)}
\]
where $\xi=c\geq 1$ and $c_0=2$. We now verify Assumptions 5 in \cite{xie-2024-spn-bernoulli}. By Result \ref{result:spec-norm-concentration}, there exists a constant $K_c\geq 1$ that depends on $c$ such that $\prob(\|\bE\|_2 \leq K_c(n\rho_n)\halfpower) \geq 1-n^{-c}$. Let $\kappa(\bDelta_n)=\lambda_1(\bDelta_n)/\lambda_d(\bDelta_n)$ be the condition number of $\bDelta_n$. Then with
\[
\gamma = \frac{\max(3K_c,\,\|\bX_0\|\twotoinfinity^2)}{(n\rho_n)\halfpower\lambda_d(\bDelta_n)}
= \frac{3K_c}{(n\rho_n)\halfpower\lambda_d(\bDelta_n)},
\]
we immediately see that
\[
32\kappa(\bDelta_n)\max(\gamma,\,\phi(\gamma))
\lesssim_c
\frac{\kappa(\bDelta_n)}{\lambda_d(\bDelta_n)}\max\left\{ \frac{1}{(n\rho_n)\halfpower},\,\frac{1}{\log(n\rho_n\lambda_d(\bDelta_n)^2)} \right\}
\to 0,
\]
which shows that the Assumption 5 in \cite{xie-2024-spn-bernoulli} holds with $\zeta=c\geq 1$ and $c_0=1$. The five Assumptions in \cite{xie-2024-spn-bernoulli} are thus verified for our setting.

\noindent Write
\begin{align*}
\widetilde\bX\bW - \rho_n^{1/2}\bX_0\eye_{p,q}
&= \rho_n^{-1/2} (\bA-\bP)\bX_0 (\bX_0\transpose\bX_0)\inverse + \bR_{\widetilde\bX},
\end{align*}
which can be view as a sum of two terms. Then by Lemma S2.1 in \cite{xie-2024-spn-bernoulli} we have
\begin{align*}
\|\rho_n^{-1/2} (\bA-\bP)\bX_0 (\bX_0\transpose\bX_0)\inverse\|\twotoinfinity
\leq
\frac{(\log{n})\halfpower}{\lambda_d(\bDelta_n)\halfpower} \|\bU_\bP\|\twotoinfinity
\end{align*}
for all $n\geq N_c$ that depends on $c$ with probability at least $1-n^{-c}$, and by Theorem 3.2 in \cite{xie-2024-spn-bernoulli} we have
\begin{align*}
\|\bR_{\widetilde\bX}\|\twotoinfinity
&\lesssim_c 
\frac{\log n\|\bU_\bP\|\twotoinfinity}{(n\rho_n)^{1/2}\lambda_d(\bDelta_n)^{2}}
\end{align*}
for all $n\geq N_c$ that depends on $c$ with probability at least $1-n^{-c}$, and we also have
\[
\|\bU_\bP\|\twotoinfinity
\leq \|\rho\halfpower\bX_0\|\twotoinfinity \||\bS_\bP|\invhalfpower\|_2
\leq \sqrt{\frac{\rho_n}{n\rho_n\lambda_d(\bDelta_n)}}
\leq \frac{1}{\sqrt{n\lambda}}.
\]
So, with the assumption that $(\log n)/(n\rho_n)\to 0$, we have
\begin{align*}
\|\widetilde\bX\bW - \rho_n^{1/2}\bX_0\eye_{p,q}\|\twotoinfinity
&\lesssim_c 
 \sqrt{\frac{\log n}{n}},\quad
\|\bR_{\widetilde\bX}\|\twotoinfinity
\lesssim_{c,\lambda} \frac{\log n}{n\rho_n\halfpower},
\end{align*}
for all $n\geq N_c$ that depends on $c,\lambda$ with probability at least $1-n^{-c}$.
\end{proof}

\begin{lemma}[Some frequently used results]
\label{lemma:frequently-used-results}
Suppose Assumption \ref{assumption:esl-grdpg} holds.
Let $\breve\bX$ denote the adjacency spectral embedding, and $\widetilde\bX$ the signature-adjusted adjacency spectral embedding.
Let $p_{0ij} = \rho_n\bx_{0i}\transpose\eye_{p,q}\bx_{0j}$, and $\widetilde{p}_{ij} = \breve\bx_i\transpose\widetilde{\bx}_j$, $i,j\in [n]$.
Then for any $c > 0$, there exists a constant $N_{c,\delta,\lambda}\in\mathbb{N}_+$ depending on $c,\delta,\lambda$, such that for all $n \geq N_{c,\delta,\lambda}$, the following hold with probability at least $1 - n^{-c}$:
\begin{align*}
(a)\;& \frac{\delta}{2}\rho_n\halfpower \leq \min_{j\in[n]}\|\widetilde\bx_j\|_2 \leq \max_{j\in[n]}\|\widetilde\bx_j\|_2 \leq (1-\frac{\delta}{2})\rho_n^{1/2},\\
(b)\;& \max_{i,j\in[n]}|\widetilde p_{ij} - p_{0ij}| \lesssim_{c,\delta,\lambda} \rho_n\halfpower\sqrt{\frac{\log n}{n}},\\
(c)\;& \frac{\delta}{2}\rho_n \leq \min_{i,j\in[n]}\widetilde p_{ij} \leq \max_{i,j\in[n]}\widetilde p_{ij} \leq (1-\frac{\delta}{2})\rho_n,\\
(d)\;& \max_{j\in[n]}\|\bW\transpose\widetilde\bx_j\widetilde\bx_j\transpose\bW - \rho_n\eye_{p,q}\bx_{0j}\bx_{0j}\transpose\eye_{p,q}\|_2 \lesssim_{c,\delta,\lambda} \rho_n\halfpower\sqrt{\frac{\log n}{n}},\\
(e)\;& \max_{i,j\in[n]}\sup\left\{ |\bx_i\transpose\widetilde\bx_j - p_{0ij}|\middle|\,||\bW\transpose\bx_i-\rho_n\halfpower\bx_{0i}||_2\leq C_{c,\delta,\lambda}\sqrt{\frac{\log n}{n}}\right\} \lesssim_{c,\delta,\lambda} \rho_n\halfpower\sqrt{\frac{\log n}{n}},\\
(f)\;& \min_{i,j\in[n]}\inf\left\{\bx_i\transpose\widetilde\bx_j\middle|\,||\bW\transpose\bx_i-\rho_n\halfpower\bx_{0i}||_2\leq C_{c,\delta,\lambda}\sqrt{\frac{\log n}{n}}\right\} \geq \frac{\delta}{2}\rho_n \\
(g)\;& \max_{i,j\in[n]}\sup\left\{\bx_i\transpose\widetilde\bx_j\middle|\,||\bW\transpose\bx_i-\rho_n\halfpower\bx_{0i}||_2\leq C_{c,\delta,\lambda}\sqrt{\frac{\log n}{n}}\right\} \leq (1-\frac{\delta}{2})\rho_n,\\
\end{align*}
\end{lemma}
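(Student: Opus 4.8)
The plan is to deduce all seven bounds from the single entrywise expansion in Theorem \ref{thm:signed-ase}, working throughout on the event $\calE$ of probability at least $1-n^{-c}$ on which that theorem's conclusions hold; since every estimate below is a deterministic consequence of the row-wise control furnished by $\calE$, no further union bound is needed and (a)--(g) all hold simultaneously on $\calE$. Writing $\bw_j := \rho_n\halfpower\eye_{p,q}\bx_{0j}$ and $\bd_j := \bW\transpose\widetilde\bx_j - \bw_j$, Theorem \ref{thm:signed-ase} gives $\max_{j\in[n]}\|\bd_j\|_2 = \|\widetilde\bX\bW - \rho_n\halfpower\bX_0\eye_{p,q}\|\twotoinfinity \lesssim_{c,\delta,\lambda}\sqrt{(\log n)/n}$. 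Three elementary facts are used repeatedly: $\eye_{p,q}$ only flips signs of coordinates, so $\|\eye_{p,q}\bv\|_2=\|\bv\|_2$ and $\|\eye_{p,q}\|_2=1$; the alignment matrix $\bW$ constructed in the proof of Theorem \ref{thm:signed-ase} is block diagonal with blocks in $\mathbb{O}(p)$ and $\mathbb{O}(q)$, hence commutes with $\eye_{p,q}$, giving $\bW\transpose\eye_{p,q}\bW=\eye_{p,q}$; and Assumption \ref{assumption:esl-grdpg}(d) yields $\sqrt{(\log n)/n}=\rho_n\halfpower\sqrt{(\log n)/(n\rho_n)}=o(\rho_n\halfpower)$, so the perturbation $\|\bd_j\|_2$ is of strictly smaller order than $\rho_n\halfpower$. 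For (a), orthogonality of $\bW$ gives $\|\widetilde\bx_j\|_2=\|\bw_j+\bd_j\|_2$, and $\|\bw_j\|_2=\rho_n\halfpower\|\bx_{0j}\|_2\in[\rho_n\halfpower\sqrt\delta,\rho_n\halfpower\sqrt{1-\delta}]$ by Assumption \ref{assumption:esl-grdpg}(b); the triangle inequality together with $\|\bd_j\|_2=o(\rho_n\halfpower)$ and the strict inequalities $\sqrt\delta>\delta/2$, $\sqrt{1-\delta}<1-\delta/2$ (valid for $\delta\in(0,1/2)$) delivers (a) once $n$ exceeds a threshold depending on $c,\delta,\lambda$.

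The estimates (b) and (d) follow from a bilinear/rank-one splitting. Since $\widetilde\bX=\breve\bX\eye_{p,q}$, we have $\breve\bx_i=\eye_{p,q}\widetilde\bx_i$, so $\widetilde p_{ij}=\breve\bx_i\transpose\widetilde\bx_j=\widetilde\bx_i\transpose\eye_{p,q}\widetilde\bx_j=(\bW\transpose\widetilde\bx_i)\transpose\eye_{p,q}(\bW\transpose\widetilde\bx_j)$ by $\bW\transpose\eye_{p,q}\bW=\eye_{p,q}$, while $\bw_i\transpose\eye_{p,q}\bw_j=\rho_n\bx_{0i}\transpose\eye_{p,q}\bx_{0j}=p_{0ij}$. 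Thus $\widetilde p_{ij}-p_{0ij}=\bd_i\transpose\eye_{p,q}(\bw_j+\bd_j)+\bw_i\transpose\eye_{p,q}\bd_j$, and bounding each term using $\|\eye_{p,q}\|_2=1$, the norm control from (a), and $\max_j\|\bd_j\|_2\lesssim\sqrt{(\log n)/n}$ gives (b). Identically, $\bW\transpose\widetilde\bx_j\widetilde\bx_j\transpose\bW-\rho_n\eye_{p,q}\bx_{0j}\bx_{0j}\transpose\eye_{p,q}=(\bw_j+\bd_j)(\bw_j+\bd_j)\transpose-\bw_j\bw_j\transpose=\bd_j(\bw_j+\bd_j)\transpose+\bw_j\bd_j\transpose$, and $\|\ba\bb\transpose\|_2=\|\ba\|_2\|\bb\|_2$ for column vectors gives (d). Then (c) is immediate: $p_{0ij}\in[\rho_n\delta,\rho_n(1-\delta)]$ by Assumption \ref{assumption:esl-grdpg}(b), whereas (b) shows $|\widetilde p_{ij}-p_{0ij}|\lesssim\rho_n\halfpower\sqrt{(\log n)/n}=o(\rho_n)$, so $[\delta\rho_n/2,(1-\delta/2)\rho_n]$ absorbs the perturbation for $n$ large.

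For (e)--(g) the only new ingredient is the neighborhood constraint. Writing $\bu_i:=\bW\transpose\bx_i$ and $\bv_i:=\rho_n\halfpower\bx_{0i}$, the constraint reads $\|\bu_i-\bv_i\|_2\leq C_{c,\delta,\lambda}\sqrt{(\log n)/n}$, and $\bx_i\transpose\widetilde\bx_j=\bu_i\transpose(\bw_j+\bd_j)$ with $\bv_i\transpose\bw_j=\rho_n\bx_{0i}\transpose\eye_{p,q}\bx_{0j}=p_{0ij}$, so that $\bx_i\transpose\widetilde\bx_j-p_{0ij}=(\bu_i-\bv_i)\transpose(\bw_j+\bd_j)+\bv_i\transpose\bd_j$. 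Bounding via $\|\bu_i-\bv_i\|_2\leq C_{c,\delta,\lambda}\sqrt{(\log n)/n}$, $\|\bw_j+\bd_j\|_2=\|\widetilde\bx_j\|_2\leq(1-\delta/2)\rho_n\halfpower$, $\|\bv_i\|_2\leq\rho_n\halfpower\sqrt{1-\delta}$, and $\|\bd_j\|_2\lesssim\sqrt{(\log n)/n}$ yields (e); crucially the Cauchy--Schwarz bound holds uniformly over the entire ball, so the supremum over $\bx_i$ requires no covering argument. Finally (f) and (g) follow from (e) exactly as (c) followed from (b): the $o(\rho_n)$ perturbation keeps $\bx_i\transpose\widetilde\bx_j$ inside $[\delta\rho_n/2,(1-\delta/2)\rho_n]$ uniformly over the neighborhood for $n$ large.

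The main difficulty is not conceptual but one of bookkeeping: one must track the two scales $\rho_n\halfpower\sqrt{(\log n)/n}$ and $\rho_n$ and invoke Assumption \ref{assumption:esl-grdpg}(d) at each point where an approximate identity is promoted to a two-sided constant bound, and one must verify the two invariance identities $\|\eye_{p,q}\bv\|_2=\|\bv\|_2$ and $\bW\transpose\eye_{p,q}\bW=\eye_{p,q}$ that make the signature drop out cleanly. All probabilistic content is inherited from Theorem \ref{thm:signed-ase}, so the remainder is deterministic linear algebra.
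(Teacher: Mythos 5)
Your proposal is correct and follows essentially the same route as the paper: every bound is deduced from the row-wise (two-to-infinity) control in Theorem \ref{thm:signed-ase} via the triangle and Cauchy--Schwarz inequalities, with the margins $\sqrt{\delta}-\delta/2$, $1-\delta/2-\sqrt{1-\delta}$, and Assumption \ref{assumption:esl-grdpg}(d) used to absorb the $o(\rho_n\halfpower)$ perturbations, exactly as in the paper's proof. Your two refinements --- spelling out the commutation identities $\bW\transpose\eye_{p,q}\bW=\eye_{p,q}$ and $\breve\bx_i=\eye_{p,q}\widetilde\bx_i$ that the paper's one-line Cauchy--Schwarz bounds gloss over, and observing that all of (a)--(g) hold deterministically on the single event from Theorem \ref{thm:signed-ase} so the paper's union bounds over $i$ are redundant --- are welcome clarifications but do not change the argument.
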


\begin{proof}
We prove the results one by one.
For simplicity of notation, in the proof of this lemma, the results are stated to hold with probability at least $1 - n^{-c}$ for all $n\geq{}{N}_{c,\delta,\lambda}$ for some large constant integer ${N}_{c,\delta,\lambda}$ that depends on $c,\delta,\lambda$, where $c>0$ is an arbitrary positive constant. Also, the results that hold for a single $i\in[n]$ with probability at least $1-n^{-c}$ can be strengthened to hold for all $i\in[n]$ by taking a union bound over $i\in[n]$.

\noindent{}For (a), by assumption we have $(\log n)/(n\rho_n)\to 0$, so we can pick an ${N}_{c,\delta,\lambda}$ large enough such that for all $n\geq{}{N}_{c,\delta,\lambda}$, we have $C_{c,\delta,\lambda}(\log n)/(n\rho_n) < 1-\delta/2-\sqrt{1-\delta}$ (this is because $(1-\delta/2)^2 = 1-\delta+\delta^2/4 > 1-\delta$ and recall that $\delta\in(0,1/2)$), and we also have $C_{c,\delta,\lambda}(\log n)/(n\rho_n) < \sqrt{\delta}-\delta/2$ (this is because
$1-\delta/2-\sqrt{1-\delta}
= 1-\sqrt{1-\delta} -\sqrt{\delta} + \sqrt{\delta} - \delta/2
= 1 - \sqrt{(\sqrt{1-\delta}+\sqrt{\delta})^2} + \sqrt{\delta} - \delta/2
= 1 - \sqrt{1+2\sqrt{\delta(1-\delta)}} + \sqrt{\delta} - \delta/2
\leq \sqrt{\delta} - \delta/2$ and recall that $\delta\in(0,1/2)$).
Then by triangle inequality and Theorem \ref{thm:signed-ase},
\begin{align*}
\min_{j\in[n]} \|\widetilde\bx_j\|_2
&\geq \min_{i,j\in[n]}\|\rho_n\halfpower\bx_{0j}\|_2 - \max_{j\in[n]}\| \bW\transpose\widetilde\bx_j-\rho_n\halfpower\bx_{0j} \|_2
\geq 
\frac{\delta}{2}\rho_n\halfpower,\\
\max_{j\in[n]} \|\widetilde\bx_j\|_2
&\leq \max_{i,j\in[n]}\|\rho_n\halfpower\bx_{0j}\|_2 + \max_{j\in[n]}\| \bW\transpose\widetilde\bx_j-\rho_n\halfpower\bx_{0j} \|_2
\leq 
(1-\frac{\delta}{2})\rho_n\halfpower.
\end{align*}

\noindent For (b), by triangle inequality, Cauchy--Schwarz inequality, and Theorem \ref{thm:signed-ase},
\begin{align*}
\max_{i,j\in[n]}|\widetilde p_{ij} - p_{0ij}|
&\leq (\max_{j\in[n]}\|\widetilde\bx_j\|_2 + \rho_n\halfpower) \|\widetilde\bX\bW-\rho_n\halfpower\bX_0\|\twotoinfinity
\lesssim_{c,\lambda} \rho_n\halfpower\sqrt{\frac{\log n}{n}}.
\end{align*}

\noindent{}For (c), similar to (a), we can pick an ${N}_{c,\delta,\lambda}$ such that for all $n\geq{}{N}_{c,\delta,\lambda}$, we have\\ $C_{c,\delta,\lambda}(\log n)/(n\rho_n) < \delta/2$. Then by triangle inequality and the previous result,
\begin{align*}
\min_{i,j\in[n]}\widetilde p_{ij}
&\geq \min_{i,j\in[n]}p_{0ij} - \max_{i,j\in[n]}|\widetilde p_{ij} - p_{0ij}|
\geq \delta\rho_n - \frac{\delta}{2}\rho_n
= \frac{\delta}{2}\rho_n,\\
\max_{i,j\in[n]}\widetilde p_{ij}
&\leq \max_{i,j\in[n]}p_{0ij} + \max_{i,j\in[n]}|\widetilde p_{ij} - p_{0ij}|
\leq (1-\delta)\rho_n + \frac{\delta}{2}\rho_n
= (1-\frac{\delta}{2})\rho_n.
\end{align*}

\noindent{}For (d), by triangle inequality, Cauchy--Schwarz inequality, and Theorem \ref{thm:signed-ase},
\begin{align*}
\max_{j\in[n]}\|\bW\transpose\widetilde\bx_j\widetilde\bx_j\transpose\bW - \rho_n\eye_{p,q}\bx_{0j}\bx_{0j}\transpose\eye_{p,q}\|_2
\lesssim_{c,\lambda} \rho_n\halfpower\sqrt{\frac{\log n}{n}}.
\end{align*}

\noindent{}For (e), by triangle inequality, Cauchy--Schwarz inequality, and Theorem \ref{thm:signed-ase},
\begin{align*}
\max_{j\in[n]}|\bx_i\transpose\widetilde\bx_j - p_{0ij}|
&\lesssim_{c,\delta,\lambda} 
\rho_n\halfpower\left(\sqrt{\frac{\log n}{n}} + \|\bW\transpose\bx_i-\rho_n\halfpower\bx_{0i}\|_2\right),
\end{align*}
so
\[
\max_{j\in[n]}\sup_{\bx_i\in B(\rho_n\halfpower\bW\bx_{0i}, C_{c,\delta,\lambda}\sqrt{(\log n)/(n)})}|\bx_i\transpose\widetilde\bx_j - p_{0ij}|
\lesssim_{c,\delta,\lambda} \rho_n\halfpower\sqrt{\frac{\log n}{n}}.
\]
With a union bound, the result holds with maximum over $i\in[n]$.

\noindent{}For (f), similar to (a), we can pick an ${N}_{c,\delta,\lambda}$ such that for all $n\geq{}{N}_{c,\delta,\lambda}$, we have\\
 $C_{c,\delta,\lambda}(\log n)/(n\rho_n) < \delta/2$. Then by triangle inequality and the previous result,
\begin{align*}
\min_{j\in[n]}\inf_{\bx_i\in B(\rho_n\halfpower\bW\bx_{0i}, C_{c,\delta,\lambda}\sqrt{\frac{\log n}{n}})}\bx_i\transpose\widetilde\bx_j
&\geq \delta\rho_n - \frac{\delta}{2}\rho_n
=\frac{\delta}{2}\rho_n.
\end{align*}
With a union bound, the result holds with minimum over $i\in[n]$.

\noindent{}For (g), similar to (a), we can pick an ${N}_{c,\delta,\lambda}$ such that for all $n\geq{}{N}_{c,\delta,\lambda}$, we have\\
 $C_{c,\delta,\lambda}(\log n)/(n\rho_n) < \delta/2$. Then by triangle inequality and the previous result,
\begin{align*}
\max_{j\in[n]}\sup_{\bx_i\in B(\rho_n\halfpower\bW\bx_{0i}, C_{c,\delta,\lambda}\sqrt{\frac{\log n}{n}})}\bx_i\transpose\widetilde\bx_j
&\leq (1-\delta)\rho_n + \frac{\delta}{2}\rho_n
=(1-\frac{\delta}{2})\rho_n.
\end{align*}
With a union bound, the result holds with maximum over $i\in[n]$.
\end{proof}

\begin{lemma}[Some results with $A_{ij}$]
\label{lemma:results-with-Aij}
Suppose Assumption \ref{assumption:esl-grdpg} holds.
Denote by $p_{0ij} = \rho_n\bx_{0i}\transpose\eye_{p,q}\bx_{0j}$. Let $\alpha_{ijn}$ be a two-dimensional array of real numbers such that $|\alpha_{ijn}| \leq C_\delta\alpha_n$ for all $n$ where $C_\delta$ is a constant that depends on $\delta$ and $\alpha_n$ is a function of $n$.
Then for any $c > 0$, there exists a constant $N_{c,\delta,\lambda}\in\mathbb{N}_+$ depending on $c,\delta,\lambda$, such that for all $n \geq N_{c,\delta,\lambda}$, the following hold with probability at least $1 - n^{-c}$:
\begin{align*}
(a)\;& \bigg\vert \frac{1}{n}\sum_{j=1}^n(A_{ij}-p_{0ij})\alpha_{ijn}\bigg\vert \lesssim_{c,\delta} \alpha_n\rho_n\sqrt{\frac{\log n}{n\rho_n}}, \quad
(b)\; \Vert\bA\Vert_{\infty} \lesssim_{c,\delta} n\rho_n.\\
(c)\;& \Vert\bA-\rho_n\bX_0\eye_{p,q}\bX_0\transpose\Vert_{\infty} \lesssim_{c,\delta} n\rho_n,\quad
(d)\; \bigg\Vert \frac{1}{n}\sum_{j=1}^n(A_{ij}-p_{0ij})\alpha_{ijn}\bx_{0j}\bigg\Vert_2 \lesssim_{c,\delta,\lambda} \alpha_n\rho_n\sqrt{\frac{\log n}{n\rho_n}}, \\
(e)\;& \bigg\Vert \frac{1}{n}\sum_{j=1}^n(A_{ij}-p_{0ij})\alpha_{ijn}\bx_{0j}\bx_{0j}\transpose\bigg\Vert_2 \lesssim_{c,\delta,\lambda} \alpha_n\rho_n\sqrt{\frac{\log n}{n\rho_n}},  \quad
(f)\; \frac{1}{n}\sum_{j=1}^n A_{ij}\bx_{0j}\bx_{0j}\transpose \succeq \frac{1}{2}\delta\lambda\rho_n\eye_d,  \\
(g)\;& \frac{1}{n}\sum_{j=1}^n A_{ij}\widetilde\bx_j\widetilde\bx_j\transpose \succeq \frac{1}{4}\delta\lambda\rho_n^2\eye_d,
\end{align*}
With a union bound over $i\in[n]$, the results above hold with maximum over $i\in[n]$.
\end{lemma}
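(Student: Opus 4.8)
The plan is to transfer the positive-definite lower bound on $(1/n)\sum_{j} A_{ij}\bx_{0j}\bx_{0j}\transpose$ already established in part (f) onto the spectral surrogate $\widetilde\bx_j\widetilde\bx_j\transpose$, using the entrywise closeness from part (d) of Lemma \ref{lemma:frequently-used-results}. Since the target inequality is invariant under orthogonal conjugation, I would first record that
\begin{align*}
\frac{1}{n}\sum_{j=1}^n A_{ij}\widetilde\bx_j\widetilde\bx_j\transpose
= \bW\left(\frac{1}{n}\sum_{j=1}^n A_{ij}\bW\transpose\widetilde\bx_j\widetilde\bx_j\transpose\bW\right)\bW\transpose,
\end{align*}
so it suffices to lower-bound the $A_{ij}$-weighted average of the conjugated matrices $\bW\transpose\widetilde\bx_j\widetilde\bx_j\transpose\bW$. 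Writing $\bE_j = \bW\transpose\widetilde\bx_j\widetilde\bx_j\transpose\bW - \rho_n\eye_{p,q}\bx_{0j}\bx_{0j}\transpose\eye_{p,q}$, part (d) of Lemma \ref{lemma:frequently-used-results} gives $\max_j\|\bE_j\|_2 \lesssim_{c,\delta,\lambda}\rho_n\halfpower\sqrt{(\log n)/n}$, which splits the average into a main term and an error term.

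For the main term, I would exploit that $\eye_{p,q}$ is orthogonal (as $\eye_{p,q}^2=\eye_d$), so conjugation by $\eye_{p,q}$ preserves the semidefinite order; applying part (f) of the present lemma then yields
\begin{align*}
\rho_n\eye_{p,q}\left(\frac{1}{n}\sum_{j=1}^n A_{ij}\bx_{0j}\bx_{0j}\transpose\right)\eye_{p,q}
\succeq \frac{1}{2}\delta\lambda\rho_n\cdot\rho_n\eye_{p,q}\eye_{p,q}
= \frac{1}{2}\delta\lambda\rho_n^2\eye_d.
\end{align*}
For the error term, since $A_{ij}\in\{0,1\}$ and $\sum_{j} A_{ij}\leq\|\bA\|_\infty\lesssim_{c,\delta}n\rho_n$ by part (b), I would bound
\begin{align*}
\left\|\frac{1}{n}\sum_{j=1}^n A_{ij}\bE_j\right\|_2
\leq \frac{1}{n}\left(\sum_{j=1}^n A_{ij}\right)\max_j\|\bE_j\|_2
\lesssim_{c,\delta,\lambda} \rho_n^2\sqrt{\frac{\log n}{n\rho_n}}.
\end{align*}
Under Assumption \ref{assumption:esl-grdpg}(d), $(\log n)/(n\rho_n)\to 0$, so this error is $o(\rho_n^2)$; hence for all $n\geq N_{c,\delta,\lambda}$ its spectral norm is at most $\frac{1}{4}\delta\lambda\rho_n^2$, which forces the error contribution to satisfy $\frac{1}{n}\sum_j A_{ij}\bE_j\succeq -\frac{1}{4}\delta\lambda\rho_n^2\eye_d$.

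Combining the two estimates gives $(1/n)\sum_{j} A_{ij}\bW\transpose\widetilde\bx_j\widetilde\bx_j\transpose\bW \succeq \frac{1}{4}\delta\lambda\rho_n^2\eye_d$, and conjugating back by the orthogonal $\bW$ delivers the claim, with the final maximum over $i\in[n]$ following from a union bound exactly as in the preceding parts. I do not expect a genuine obstacle here, since the argument is essentially an assembly of parts (b), (d), and (f); the one point demanding care is confirming that the weighted error $\frac{1}{n}\sum_j A_{ij}\bE_j$ truly sits at strictly lower order than the main $\rho_n^2$ scale, and this is precisely where the sparsity condition $(\log n)/(n\rho_n)\to 0$ must be invoked to absorb the extra factor $\rho_n^{-1/2}\sqrt{(\log n)/n}$.
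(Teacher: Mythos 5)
Your argument for part (g) is correct, and it is essentially the paper's own proof of that part: decompose $\frac{1}{n}\sum_{j=1}^n A_{ij}\bW\transpose\widetilde\bx_j\widetilde\bx_j\transpose\bW$ into the main term $\frac{1}{n}\sum_{j=1}^n A_{ij}\rho_n\eye_{p,q}\bx_{0j}\bx_{0j}\transpose\eye_{p,q}$, lower-bounded by $\frac{1}{2}\delta\lambda\rho_n^2\eye_d$ through part (f) and orthogonal conjugation, plus an error term controlled by $\frac{1}{n}\Vert\bA\Vert_\infty\max_j\Vert\bE_j\Vert_2 \lesssim_{c,\delta,\lambda} \rho_n^2\sqrt{(\log n)/(n\rho_n)}$ via part (b) and Lemma \ref{lemma:frequently-used-results}(d), absorbed for large $n$ using $(\log n)/(n\rho_n)\to 0$. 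Your scaling arithmetic ($\rho_n^{3/2}\sqrt{(\log n)/n} = \rho_n^2\sqrt{(\log n)/(n\rho_n)}$), the threshold $\frac{1}{4}\delta\lambda\rho_n^2$, and the closing union bound all match the paper.

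The genuine gap is coverage: the statement is the full lemma, parts (a) through (g), and your proposal establishes only (g) while explicitly taking (b) and (f) as ``already established.'' In the paper these are not free. Part (a) is the engine of the whole lemma and is proven by Bernstein's inequality applied to $\frac{1}{n}\sum_{j=1}^n(A_{ij}-p_{0ij})\alpha_{ijn}$ with the choice $t \asymp \alpha_n\rho_n\halfpower\sqrt{(\log n)/n}$; parts (b) and (c) follow from (a) and the triangle inequality (using $\sum_j p_{0ij}\leq n\rho_n$); parts (d) and (e) follow by applying (a) coordinatewise over the $d$ coordinates; and part (f) combines (e) (with $\alpha_{ijn}=1$) with Assumption \ref{assumption:esl-grdpg}(b)--(c), via $\frac{1}{n}\sum_{j=1}^n p_{0ij}\bx_{0j}\bx_{0j}\transpose \succeq \delta\lambda\rho_n\eye_d$ and absorption of the centered fluctuation for large $n$. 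Without these steps --- in particular the Bernstein-type concentration behind (a), (d), (e) --- your proof of (g) rests on unproven claims, so the proposal as written does not prove the lemma; what you did write is sound, but it is the last and easiest link in the paper's chain of arguments rather than the chain itself.
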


\begin{proof}
For simplicity of notation, in the proof of this lemma, the results are stated to hold with probability at least $1 - n^{-c}$ for all $n\geq{}{N}_{c,\delta,\lambda}$ for some large constant integer ${N}_{c,\delta,\lambda}$ that depends on $c,\delta,\lambda$, where $c>0$ is an arbitrary positive constant.
\noindent{}For (a), by Bernstein's inequality,
\begin{align*}
\prob\bigg( \bigg\vert \frac{1}{n}\sum_{j=1}^n(A_{ij}-p_{0ij})\alpha_{ijn}\bigg\vert \geq t \bigg)
&\leq 2\exp\bigg( - \frac{3 n^2 t^2}{6C_\delta^2\alpha_n^2n\rho_n + 2C_\delta\alpha_nnt} \bigg).
\end{align*}
Let $c>0$ be given, and by the assumption that $(\log n)/(n\rho_n) \to 0$, we have $\sqrt{\log n}\leq C_2\sqrt{n\rho_n}$ for a constant $C_2$ for all sufficiently large $n$. Take $t=C_1C_\delta\alpha_n\rho_n\halfpower\sqrt{(\log n)/n}$, where $C_1$ is a constant that depends on $c$ and $C_2$ that satisfies $-3C_1^2/(6+2C_1C_2)<-(\log 2)/(\log n)-c$ for all sufficiently large $n$, we have
\begin{align*}
\prob\bigg( \bigg\vert \frac{1}{n}\sum_{j=1}^n(A_{ij}-p_{0ij})\alpha_{ijn}\bigg\vert \geq t \bigg)
&\leq 2\exp\bigg( - \frac{3C_1^2 n\rho_n\log n}{6n\rho_n + 2C_1C_2n\rho_n} \bigg)
\leq 2n^{-(\log 2)/(\log n)-c} = n^{-c},
\end{align*}
so 
$\vert (1/n)\sum_{j=1}^n(A_{ij}-p_{0ij})\alpha_{ijn}\vert \lesssim_{c,\delta} \alpha_n\rho_n\sqrt{{\log n}/(n\rho_n)}$
for all $n \geq N_{c}$ with probability at least $1 - n^{-c}$. With a union bound, we can take maximum over $i\in[n]$ and the bound still holds.

\noindent For (b), by triangle inequality, the previous result, and the assumption that $(\log n)/(n\rho_n)\to 0$ as $n\to\infty$,
\begin{align*}
||\bA||_{\infty}
&= \max_{i\in[n]}\bigg\vert \sum_{j=1}^n A_{ij}\bigg\vert 
\leq \max_{i\in[n]}\bigg\vert \sum_{j=1}^n(A_{ij}-p_{0ij})\bigg\vert + \max_{i\in[n]}\bigg\vert \sum_{j=1}^n p_{0ij}\bigg\vert 
\lesssim_{c,\delta} 
n\rho_n.
\end{align*}

\noindent{}For (c),
\[
\left\Vert\bA-\rho_n\bX_0\eye_{p,q}\bX_0\transpose\right\Vert_{\infty}
\leq \left\Vert\bA\right\Vert_{\infty} + \left\Vert\rho_n\bX_0\eye_{p,q}\bX_0\transpose\right\Vert_{\infty}
\lesssim_{c,\delta} n\rho_n.
\]

\noindent{}For (d), by the assumption that $\|\bx_{0j}\|_2\in[\sqrt{\delta},\sqrt{1-\delta}]$ for all $j\in[n]$, and the previous result (a), we have
\begin{align*}
\bigg\Vert \frac{1}{n}\sum_{j=1}^n(A_{ij}-p_{0ij})\alpha_{ijn}\bx_{0j}\bigg\Vert_2
&\leq \sum_{k=1}^d \bigg\vert \frac{1}{n}\sum_{j=1}^n(A_{ij}-p_{0ij})\alpha_{ijn}x_{0jk}\bigg\vert 
\lesssim_{c,\delta} 
\alpha_n\rho_n\sqrt{\frac{\log n}{n\rho_n}},
\end{align*}
in which we note that the chosen embedding dimension $d$ implicitly depends on $\lambda$.

\noindent{}For (e), similar to (c), we have
\begin{align*}
\bigg\Vert \frac{1}{n}\sum_{j=1}^n(A_{ij}-p_{0ij})\alpha_{ijn}\bx_{0j}\bx_{0j}\transpose\bigg\Vert_2
\lesssim_{c,\delta}
\alpha_n\rho_n\sqrt{\frac{\log n}{n\rho_n}}.
\end{align*}

\noindent{}For (f), we have
\begin{align*}
\frac{1}{n}\sum_{j=1}^n A_{ij}\bx_{0j}\bx_{0j}\transpose 
&\succeq \delta\lambda\rho_n\eye_d + \frac{1}{n}\sum_{j=1}^n(A_{ij}-p_{0ij})\bx_{0j}\bx_{0j}\transpose 
\succeq \frac{1}{2}\delta\lambda\rho_n\eye_d
\end{align*}
for all $n$ large enough such that $C_{c,\delta,\lambda}\sqrt{(\log n)/(n\rho_n)}<(1/2)\delta\lambda$, by assumption and by the previous result (e).

\noindent{}For (g), with the previous result (f), we have
\begin{align*}
&
\frac{1}{n}\sum_{j=1}^n A_{ij}\widetilde\bx_j\widetilde\bx_j\transpose  
\succeq \frac{1}{2}\delta\lambda\rho_n^2\eye_d + \frac{1}{n}\sum_{j=1}^n A_{ij}(\widetilde\bx_j\widetilde\bx_j\transpose - \rho_n\bW\eye_{p,q}\bx_{0j}\bx_{0j}\transpose\eye_{p,q}\bW\transpose),
\end{align*}
where
\begin{align*}
&
\bigg\|\frac{1}{n}\sum_{j=1}^n A_{ij}\left(\widetilde\bx_j\widetilde\bx_j\transpose - \rho_n\bW\eye_{p,q}\bx_{0j}\bx_{0j}\transpose\eye_{p,q}\bW\transpose\right)\bigg\|_2 
\lesssim_{c,\delta,\lambda} \rho_n^2\sqrt{\frac{\log n}{n\rho_n}}
\end{align*}
by the previous result (b) and Lemma \ref{lemma:frequently-used-results}, so
$(1/n)\sum_{j=1}^n A_{ij}\widetilde\bx_j\widetilde\bx_j\transpose
\succeq (1/4)\delta\lambda\rho_n^2\eye_d$
for all $n$ large enough such that $C_{c,\delta,\lambda}\sqrt{(\log n)/(n\rho_n)}<(1/4)\delta\lambda$.
\end{proof}

\begin{lemma}[Concentration of the gradient]
\label{lemma:concentration-gradient}
Suppose Assumption \ref{assumption:esl-grdpg} holds.
Then for any $c > 0$, there exists a constant $N_{c,\delta,\lambda}\in\mathbb{N}_+$ depending on $c,\delta,\lambda$, such that for all $n \geq N_{c,\delta,\lambda}$,
\[
\max_{i\in[n]}
\sup_{\|\bx_i\|_2\leq \rho_n\halfpower}
\left\Vert
\frac{1}{n}\bW\transpose\frac{\partial\widehat\ell_{in}}{\partial\bx_i}(\bW\bx_i) -
\frac{\partial M_{in}}{\partial\bx_i}(\bx_i)
\right\Vert_2
\lesssim_{c,\delta,\lambda}
\sqrt{\frac{\log n}{n}}
\]
with probability at least $1 - n^{-c}$, where
\begin{align*}
M_{in}(\bx_i) = \frac{1}{n}\sum_{j = 1}^n\{\rho_n\bx_{0i}\transpose\eye_{p,q}\bx_{0j}\psi'_n(\rho_n^{1/2}\bx_i\transpose\eye_{p, q}\bx_{0j})
 - (1 - \rho_n\bx_{0i}\transpose\eye_{p, q}\bx_{0j})\psi_n'(1 - \rho_n^{1/2}\bx_i\transpose\eye_{p, q}\bx_{0j})\}
\end{align*}
and $\widetilde\bX$ denotes the signature-adjusted adjacency spectral embedding.
\end{lemma}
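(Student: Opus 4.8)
The plan is to show that the aligned, normalized empirical gradient concentrates uniformly around its population analogue. Differentiating $\widehat{\ell}_{in}$ and writing $g(a,t)=a\psi_n'(t)-(1-a)\psi_n'(1-t)$, the chain rule gives
\[
\frac{1}{n}\bW\transpose\frac{\partial\widehat{\ell}_{in}}{\partial\bx_i}(\bW\bx_i)=\frac{1}{n}\sum_{j=1}^n g\big(A_{ij},\,\bx_i\transpose\bW\transpose\widetilde\bx_j\big)\,\bW\transpose\widetilde\bx_j,
\]
while $\partial M_{in}/\partial\bx_i(\bx_i)$ is the population analogue obtained by substituting $p_{0ij}=\rho_n\bx_{0i}\transpose\eye_{p,q}\bx_{0j}$ for $A_{ij}$ and $\rho_n\halfpower\eye_{p,q}\bx_{0j}$ for $\bW\transpose\widetilde\bx_j$. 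Writing $s_{ij}=\rho_n\halfpower\bx_i\transpose\eye_{p,q}\bx_{0j}$, I split the difference into three parts: (I) replaces the inner argument $\bx_i\transpose\bW\transpose\widetilde\bx_j$ by $s_{ij}$ inside $g$; (II) replaces the outer vector $\bW\transpose\widetilde\bx_j$ by $\rho_n\halfpower\eye_{p,q}\bx_{0j}$; (III) replaces $A_{ij}$ by $p_{0ij}$. I then bound each part at rate $\sqrt{(\log n)/n}$, uniformly over $\{\|\bx_i\|_2\le\rho_n\halfpower\}$ and, via a union bound, over $i\in[n]$.

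\textbf{The deterministic errors (I) and (II).} On the event of Theorem \ref{thm:signed-ase}, $\|\bW\transpose\widetilde\bx_j-\rho_n\halfpower\eye_{p,q}\bx_{0j}\|_2\le\|\widetilde\bX\bW-\rho_n\halfpower\bX_0\eye_{p,q}\|\twotoinfinity\lesssim\sqrt{(\log n)/n}$ uniformly in $j$, and $\|\widetilde\bx_j\|_2\lesssim\rho_n\halfpower$ by Lemma \ref{lemma:frequently-used-results}(a). For $\|\bx_i\|_2\le\rho_n\halfpower$ the Cauchy--Schwarz inequality gives $|s_{ij}|\le\rho_n\sqrt{1-\delta}<\rho_n$; since $\tau_n\asymp\rho_n$ under Assumption \ref{assumption:esl-grdpg}(e), one checks $|\psi_n'(s_{ij})|\lesssim_\delta\rho_n^{-1}$ and $|\psi_n''(s_{ij})|\lesssim_\delta\rho_n^{-2}$ on this range, whereas $\psi_n'(1-s_{ij})$ and $\psi_n''(1-s_{ij})$ are $O(1)$ because $1-s_{ij}$ is bounded away from $0$. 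Applying these bounds to all $n$ summands would lose powers of $\rho_n$; the essential device is to split every sum according to $A_{ij}\in\{0,1\}$ and use $\#\{j:A_{ij}=1\}\le\|\bA\|_\infty\lesssim n\rho_n$ from Lemma \ref{lemma:results-with-Aij}(b). For the indices with $A_{ij}=1$ the factors $\rho_n^{-1}$ or $\rho_n^{-2}$ are exactly absorbed by $n\rho_n$ together with $\|\widetilde\bx_j\|_2\lesssim\rho_n\halfpower$, while for $A_{ij}=0$ the relevant derivatives are $O(1)$. Combining this with the mean value theorem applied to $\psi_n'$ (for (I)) and with the two-to-infinity bound on the outer vector difference (for (II)) shows $\|\text{(I)}\|_2+\|\text{(II)}\|_2\lesssim_{c,\delta,\lambda}\sqrt{(\log n)/n}$ on this event.

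\textbf{The stochastic error (III) and uniformity.} Using the identity $g(a,t)-g(a',t)=(a-a')[\psi_n'(t)+\psi_n'(1-t)]$, we have $\text{(III)}=\rho_n\halfpower\eye_{p,q}\cdot\frac{1}{n}\sum_{j=1}^n(A_{ij}-p_{0ij})\alpha_{ijn}\bx_{0j}$ with $\alpha_{ijn}=\psi_n'(s_{ij})+\psi_n'(1-s_{ij})$. For a fixed $\bx_i$ one has $|\alpha_{ijn}|\lesssim_\delta\rho_n^{-1}$, so Lemma \ref{lemma:results-with-Aij}(d) with $\alpha_n=\rho_n^{-1}$ yields $\|\frac{1}{n}\sum_{j}(A_{ij}-p_{0ij})\alpha_{ijn}\bx_{0j}\|_2\lesssim\rho_n^{-1}\rho_n\sqrt{(\log n)/(n\rho_n)}$, whence $\|\text{(III)}\|_2\lesssim\rho_n\halfpower\sqrt{(\log n)/(n\rho_n)}=\sqrt{(\log n)/n}$. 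To promote this pointwise bound to a uniform one over the ball I use a covering argument: the map $\bx_i\mapsto\text{(III)}(\bx_i)$ is Lipschitz with constant $\lesssim\rho_n^{-1}$, which is polynomial in $n$ since $n\rho_n\gtrsim\log n$; on an $n^{-2}$-net of the ball the discretization error is $O(n^{-1})$, the net has cardinality $n^{O(d)}$, and a union bound over the net and over $i\in[n]$ — absorbing $\log(\text{net size})\lesssim\log n$ into the Bernstein exponent underlying Lemma \ref{lemma:results-with-Aij}(d) — gives the uniform bound at rate $\sqrt{(\log n)/n}$ with probability at least $1-n^{-c}$.

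\textbf{Main obstacle.} The crux is the interplay between sparsity and the large derivatives of $\psi_n$: because $\psi_n'\asymp\rho_n^{-1}$ and $\psi_n''\asymp\rho_n^{-2}$ near the origin, any estimate that treats all $n$ summands on an equal footing loses powers of $\rho_n$, and the argument only closes because the dangerous large-derivative contributions are attached to the indices with $A_{ij}=1$, of which there are only $\lesssim n\rho_n$, and are further damped by the factor $\|\widetilde\bx_j\|_2\lesssim\rho_n\halfpower$. Tracking this cancellation exactly across (I)--(III) is the delicate part; the Bernstein concentration and the covering argument securing uniformity over the continuum of $\bx_i$ are then routine once the $\rho_n$-bookkeeping is in place.
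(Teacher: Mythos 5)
Your proposal is correct, and it matches the paper's proof in its skeleton: your three-part split (I) inner argument, (II) outer vector, (III) $A_{ij}\mapsto p_{0ij}$ is exactly the paper's six-term decomposition \eqref{eqn:gradient-decomp} (the paper merely separates each of your (I) and (II) into their $A_{ij}$ and $1-A_{ij}$ halves, and your identity $g(a,t)-g(a',t)=(a-a')\{\psi_n'(t)+\psi_n'(1-t)\}$ recombines the paper's fifth and sixth terms), and your handling of (I)--(II) — mean value theorem, $|\psi_n''|\le\tau_n^{-2}\lesssim_\delta\rho_n^{-2}$, $\|\bA\|_\infty\lesssim n\rho_n$ from Lemma \ref{lemma:results-with-Aij}, $\|\widetilde\bx_j\|_2\lesssim\rho_n\halfpower$, and the two-to-infinity bound of Theorem \ref{thm:signed-ase} — is the same $\rho_n$-bookkeeping the paper performs, with the same exact cancellation. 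Where you genuinely diverge is the uniform control of the stochastic term (III). The paper treats $J_{ink}(\bx_i)=\frac{1}{n}\sum_j(A_{ij}-p_{0ij})\psi_n'(\cdot)\rho_n\halfpower[\eye_{p,q}\bx_{0j}]_k$ as an empirical process indexed by the ball: it bounds Orlicz $\psi_1$ norms of increments, pays an entropy/chaining integral (Corollary 2.2.5 of \cite{vaart-wellner-2023-weak}) to control the expected supremum, checks separability, and then applies Talagrand-type concentration (Theorem 12.2 of \cite{boucheron-lugosi-massart-2013}) to the supremum. You instead use pointwise Bernstein via Lemma \ref{lemma:results-with-Aij}(d) — valid at each fixed $\bx_i$, since $\alpha_{ijn}=\psi_n'(s_{ij})+\psi_n'(1-s_{ij})$ is then a deterministic array with $|\alpha_{ijn}|\lesssim_\delta\rho_n^{-1}$ — together with the crude deterministic Lipschitz bound $\lesssim_\delta\rho_n^{-1}$ on $\bx_i\mapsto(\mathrm{III})(\bx_i)$, an $n^{-2}$-net of cardinality $n^{O(d)}$, and a union bound obtained by inflating the exponent $c$ in Lemma \ref{lemma:results-with-Aij} by $2d+2$. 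Both routes close at the rate $\sqrt{(\log n)/n}$ because the index set is a fixed-dimensional ball with metric entropy $O(d\log(1/\epsilon))$; yours is more elementary and bypasses the empirical-process machinery entirely, at the cost of constants depending on $d$ through the net (harmless, since $d$ is fixed by Assumption \ref{assumption:esl-grdpg}(a)) and of needing the Lipschitz constant to be polynomial in $n$, which holds precisely because $\tau_n>\delta^2\rho_n$ and $n\rho_n\gtrsim\log n$. The paper's chaining-plus-concentration argument is heavier but would survive in settings where the index class has nontrivial entropy or the process admits only increment moment bounds rather than a usable pointwise Lipschitz constant.
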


\begin{proof}
In the proof of this lemma, the large probability bounds with probability at least $1 - n^{-c}$ are stated with respect to all $n\geq{}{N}_{c,\delta,\lambda}$ for some large constant integer ${N}_{c,\delta,\lambda}$ that depends on $c,\delta,\lambda$, where $c>0$ is an arbitrary positive constant. Let $p_{0ij} = \rho_n\bx_{0i}\transpose\eye_{p,q}\bx_{0j}$, and let $g_n(t) = \psi_n'(t)$ for simplicity of notation.
Then it is easy to see that $g_n(t)>0$, that $g'_n(t) = - \tau_n^{-2}\mathbbm{1}(t < \tau_n) - t^{-2}\mathbbm{1}(t\in[\tau_n, 1]) -\mathbbm{1}(t > 1)$, which implies that $g_n(t)$ is decreasing in $t$, that $g_n'(t)$ is constant on $t<\tau_n$ or $t > 1$ and increasing on $t\in[\tau_n, 1]$, and that $-1 \leq \tau_n^2 g'_n(t) \leq -\tau_n^2$.

\noindent{}Note that $\sqrt{1-\delta} < 1-\delta/2$, which is because $1-\delta < 1-\delta+\delta^2/4 = (1-\delta/2)^2$ and recall that $\delta\in(0,1/2)$;
and also note that $\delta/2\leq 1-(1-\delta/2)\rho_n$, which is because $\delta/2=(\delta/2)(1-\rho_n+\rho_n)=(\delta/2)(1-\rho_n)+(\delta/2)\rho_n\leq 1-\rho_n+(\delta/2)\rho_n=1-(1-\delta/2)\rho_n$; so we have $\tau_n<1-(1-\delta/2)\rho_n$, which is because $\tau_n<(\delta/2)\rho_n\leq \delta/2$ by assumption.

\noindent{}We have
$\max_{j\in[n]}\sup_{\|\bx_i\|_2\leq \rho_n\halfpower}|\rho_n\halfpower\bx_i\transpose\eye_{p,q}\bx_{0j}| \leq\sqrt{1-\delta}\rho_n < (1-\delta/2)\rho_n$
by assumption,\\
$\max_{j\in[n]}\sup_{\|\bx_i\|_2\leq \rho_n\halfpower}|\bx_i\transpose\bW\transpose\widetilde\bx_j|\leq \rho_n\halfpower\max_{j\in[n]}\|\widetilde\bx_j\|_2 \leq (1-\delta/2)\rho_n$
for all $n\geq N_{c,\delta,\lambda}$ that depends on $c,\delta,\lambda$ with probability at least $1-n^{-c}$ by Lemma \ref{lemma:frequently-used-results},\\
$\max_{j\in[n]}\sup_{\|\bx_i\|_2\leq \rho_n\halfpower}
|\bx_i\transpose\bW\transpose\widetilde\bx_j - \rho_n\halfpower\bx_i\transpose\eye_{p,q}\bx_{0j}|
\leq
C_{c,\delta,\lambda}\rho_n\halfpower\sqrt{(\log n)/n}$
for all $n\geq N_{c,\delta,\lambda}$ and a constant $C_{c,\delta,\lambda}$ that depend on $c,\delta,\lambda$ with probability at least $1-n^{-c}$ by Theorem \ref{thm:signed-ase}, and
$
\max_{j\in[n]}\sup_{\|\bx_i\|_2\leq \rho_n\halfpower}
|\rho_n\halfpower\bx_i\transpose\eye_{p,q}\bx_{0j} + \theta(\bx_i\transpose\bW\transpose\widetilde\bx_j - \rho_n\halfpower\bx_i\transpose\eye_{p,q}\bx_{0j}) |
< (1-\delta/2)\rho_n
$
for all $n\geq N_{c,\delta,\lambda}$ such that $C_{c,\delta,\lambda}\sqrt{(\log n)/(n\rho_n)} < 1-\delta/2 - \sqrt{1-\delta}$ with probability at least $1-n^{-c}$ by the results above and by the assumption that $(\log n)/(n\rho_n)\to 0$, where $\theta\in(0,1)$.

\noindent{}Write
\begin{equation}\label{eqn:gradient-decomp}
\begin{split}
&\frac{1}{n}\bW\transpose\frac{\partial\widehat\ell_{in}}{\partial\bx_i}(\bW\bx_i) - \frac{\partial M_{in}}{\partial\bx_i}(\bx_i) \\
&\quad= \frac{1}{n}\sum_{j=1}^n A_{ij}\left\{ g_n(\bx_i\transpose\bW\transpose\widetilde\bx_j) - g_n(\rho_n\halfpower\bx_i\transpose\eye_{p,q}\bx_{0j}) \right\}\bW\transpose\widetilde\bx_j \\
&\quad\quad- \frac{1}{n}\sum_{j=1}^n (1-A_{ij})\left\{ g_n(1-\bx_i\transpose\bW\transpose\widetilde\bx_j) - g_n(1-\rho_n\halfpower\bx_i\transpose\eye_{p,q}\bx_{0j}) \right\}\bW\transpose\widetilde\bx_j \\
&\quad\quad+ \frac{1}{n}\sum_{j=1}^n A_{ij}g_n(\rho_n\halfpower\bx_i\transpose\eye_{p,q}\bx_{0j})(\bW\transpose\widetilde\bx_j-\rho_n\halfpower\eye_{p,q}\bx_{0j}) \\
&\quad\quad- \frac{1}{n}\sum_{j=1}^n (1-A_{ij})g_n(1-\rho_n\halfpower\bx_i\transpose\eye_{p,q}\bx_{0j}) (\bW\transpose\widetilde\bx_j-\rho_n\halfpower\eye_{p,q}\bx_{0j}) \\
&\quad\quad+ \frac{1}{n}\sum_{j=1}^n (A_{ij}-p_{0ij})g_n(\rho_n\halfpower\bx_i\transpose\eye_{p,q}\bx_{0j})\rho_n\halfpower\eye_{p,q}\bx_{0j} \\
&\quad\quad+ \frac{1}{n}\sum_{j=1}^n (A_{ij}-p_{0ij})g_n(1-\rho_n\halfpower\bx_i\transpose\eye_{p,q}\bx_{0j})\rho_n\halfpower\eye_{p,q}\bx_{0j} ,
\end{split}
\end{equation}
which can be viewed as a sum of six terms. For simplicity of notation, in the remaining of the proof of this lemma, the large probability bounds are stated with respect to all $n\geq{}{N}_{c,\delta,\lambda}$ for some large constant integer ${N}_{c,\delta,\lambda}$ that depends on $c,\delta,\lambda$.

\noindent{}For the first term, with probability at least $1-n^{-c}$,
\begin{align*}
&\sup_{\|\bx_i\|_2\leq \rho_n\halfpower} \left\Vert \frac{1}{n}\sum_{j=1}^n A_{ij}\left\{ g_n(\bx_i\transpose\bW\transpose\widetilde\bx_j) - g_n(\rho_n\halfpower\bx_i\transpose\eye_{p,q}\bx_{0j}) \right\}\bW\transpose\widetilde\bx_j \right\Vert_2 \\
&\quad\leq
\frac{1}{n} \|\bA\|_\infty \cdot \sup_{\|\bx_i\|_2\leq 1} \max_{j\in[n]}\left|g_n'(\rho_n\halfpower\bx_i\transpose\eye_{p,q}\bx_{0j} + \theta(\bx_i\transpose\bW\transpose\widetilde\bx_j - \rho_n\halfpower\bx_i\transpose\eye_{p,q}\bx_{0j}))\right| \\
&\quad\quad\quad\quad \cdot \left| \bx_i\transpose\bW\transpose\widetilde\bx_j - \rho_n\halfpower\bx_i\transpose\eye_{p,q}\bx_{0j} \right| \cdot \max_{j\in[n]} \|\widetilde\bx_j\|_2 \\
&\quad\leq \frac{1}{n} \|\bA\|_\infty \cdot \frac{1}{\tau_n^2} \cdot \sup_{\|\bx_i\|_2\leq \rho_n\halfpower}\|\bx_i\|_2 \cdot \max_{j\in[n]} \left\Vert \bW\transpose\widetilde\bx_j - \rho_n\halfpower {\eye_{p,q}\bx_{0j}} \right\Vert_2 \cdot \max_{j\in[n]} \|\widetilde\bx_j\|_2 
\lesssim_{c,\delta,\lambda} 
 \sqrt{\frac{\log n}{n}},
\end{align*}
by Cauchy-Schwarz inequality, mean value theorem, the properties of the function $g_n(t)$, the assumption that $\delta^2\rho_n<\tau_n$, Theorem \ref{thm:signed-ase}, Lemma \ref{lemma:frequently-used-results}, and Lemma \ref{lemma:results-with-Aij}.

\noindent{}For the second term, with probability at least $1-n^{-c}$,
\begin{align*}
&\sup_{\|\bx_i\|_2\leq \rho_n\halfpower} \left\Vert \frac{1}{n}\sum_{j=1}^n (1-A_{ij})\left\{ g_n(1-\bx_i\transpose\bW\transpose\widetilde\bx_j) - g_n(1-\rho_n\halfpower\bx_i\transpose\eye_{p,q}\bx_{0j}) \right\}\bW\transpose\widetilde\bx_j \right\Vert_2 \\
&\quad\leq 
\sup_{\|\bx_i\|_2\leq \rho_n\halfpower}\max_{j\in[n]} \left|g_n'(1-\rho_n\halfpower\bx_i\transpose\eye_{p,q}\bx_{0j} - \theta(\bx_i\transpose\bW\transpose\widetilde\bx_j - \rho_n\halfpower\bx_i\transpose\eye_{p,q}\bx_{0j}))\right| \\
&\quad\quad\quad\quad \cdot \left| \bx_i\transpose\bW\transpose\widetilde\bx_j - \rho_n\halfpower\bx_i\transpose\eye_{p,q}\bx_{0j} \right| \cdot \max_{j\in[n]} \|\widetilde\bx_j\|_2 \\
&\quad\leq \left|g_n'(1-(1-\delta/2)\rho_n)\right| \cdot \sup_{\|\bx_i\|_2\leq\rho_n\halfpower}\|\bx_i\|_2 \cdot \max_{j\in[n]} \left\Vert \bW\transpose\widetilde\bx_j - \rho_n\halfpower\eye_{p,q}\bx_{0j} \right\Vert_2 \cdot \max_{j\in[n]} \|\widetilde\bx_j\|_2 \\
&\quad\leq \frac{4}{\delta^2} \cdot \sup_{\|\bx_i\|_2\leq\rho_n\halfpower}\|\bx_i\|_2 \cdot \max_{j\in[n]} \left\Vert \bW\transpose\widetilde\bx_j - \rho_n\halfpower\eye_{p,q}\bx_{0j} \right\Vert_2 \cdot \max_{j\in[n]} \|\widetilde\bx_j\|_2 
\lesssim_{c,\delta,\lambda} \rho_n\sqrt{\frac{\log n}{n}},
\end{align*}
by Cauchy-Schwarz inequality, mean value theorem, the properties of the function $g_n(t)$, the result that $\tau_n < (\delta/2)\rho_n \leq \delta/2 \leq 1-(1-\delta/2)\rho_n$ shown above, Theorem \ref{thm:signed-ase}, and Lemma \ref{lemma:frequently-used-results}.

\noindent{}For the third term, with probability at least $1-n^{-c}$,
\begin{align*}
&\sup_{\|\bx_i\|_2\leq \rho_n\halfpower} \left\Vert \frac{1}{n}\sum_{j=1}^n A_{ij}g_n(\rho_n\halfpower\bx_i\transpose\eye_{p,q}\bx_{0j})(\bW\transpose\widetilde\bx_j-\rho_n\halfpower\eye_{p,q}\bx_{0j})\right\Vert_2 \\
&\quad\leq \frac{1}{n}\|\bA\|_\infty \sup_{\|\bx_i\|_2\leq \rho_n\halfpower}\max_{j\in[n]}g_n(\rho_n\halfpower\bx_i\transpose\eye_{p,q}\bx_{0j}) \cdot \max_{j\in[n]}\left\Vert \bW\transpose\widetilde\bx_j-\rho_n\halfpower\eye_{p,q}\bx_{0j} \right\Vert_2 \\
&\quad\leq \frac{1}{n}\|\bA\|_\infty \cdot g_n\left(-(1-\delta/2)\rho_n\right) \cdot \max_{j\in[n]}\left\Vert \bW\transpose\widetilde\bx_j-\rho_n\halfpower\eye_{p,q}\bx_{0j} \right\Vert_2
\lesssim_{c,\delta,\lambda}
\sqrt{\frac{\log n}{n}},
\end{align*}
where the inequalities and equalities follow from Cauchy-Schwarz inequality, triangle inequality, the properties of the function $g_n(t)$, the assumption that $\delta^2 < \tau_n/\rho_n < \delta/2$, Theorem \ref{thm:signed-ase}, Lemma \ref{lemma:frequently-used-results}, and Lemma \ref{lemma:results-with-Aij}.

\noindent{}For the fourth term, with probability at least $1-n^{-c}$,
\begin{align*}
&\sup_{\|\bx_i\|_2\leq \rho_n\halfpower} \left\Vert \frac{1}{n}\sum_{j=1}^n (1-A_{ij})g_n(1-\rho_n\halfpower\bx_i\transpose\eye_{p,q}\bx_{0j})(\bW\transpose\widetilde\bx_j-\rho_n\halfpower\eye_{p,q}\bx_{0j})\right\Vert_2 \\
&\quad\leq \sup_{\|\bx_i\|_2\leq \rho_n\halfpower}\max_{j\in[n]}g_n(1-\rho_n\halfpower\bx_i\transpose\eye_{p,q}\bx_{0j}) \cdot \max_{j\in[n]}\left\Vert \bW\transpose\widetilde\bx_j-\rho_n\halfpower\eye_{p,q}\bx_{0j} \right\Vert_2 \\
&\quad\leq 
\frac{1}{1-(1-\delta/2)\rho_n} \cdot \max_{j\in[n]}\left\Vert \bW\transpose\widetilde\bx_j-\rho_n\halfpower\eye_{p,q}\bx_{0j} \right\Vert_2
\lesssim_{c,\delta,\lambda} \sqrt{\frac{\log n}{n}} ,
\end{align*}
where the inequalities and equalities follow from Cauchy-Schwarz inequality, triangle inequality, the properties of the function $g_n(t)$, the assumption that $\delta^2 < \tau_n/\rho_n < \delta/2$, and Theorem \ref{thm:signed-ase}.

\noindent{}For the fifth term and the sixth term, some methods in empirical processes are needed. We first define a stochastic process indexed by $\bx_i$ in the closed ball that is centered at origin and of radius $\rho_n\halfpower$, then use the results in Chapter 2.2 of \cite{vaart-wellner-2023-weak} to compute the bounds on the Orlicz $\psi_1$ norm for the supremum of the process, and then use a Bernstein-type inequality (Theorem 12.2 in \cite{boucheron-lugosi-massart-2013}) to obtain a tail probability bound for the supremum of the process.
We now show the fifth term. Let $J_{ijk}(\bx_i) = (A_{ij}-p_{0ij})g_n(\rho_n\halfpower\bx_i\transpose\eye_{p,q}\bx_{0j})\rho_n\halfpower x_{0jk}$, and for each $k\in[d]$, define a stochastic processes $J_{ink}(\bx_i) = (1/n)\sum_{j=1}^n J_{ijk}(\bx_i)$, where $\bx_i\in B(\zero_d,\rho_n\halfpower)=\{\bx_i\in\mathbb{R}^d:\|\bx_i\|_2\leq \rho_n\halfpower\}$. Then we have
$\expect\left[ J_{ijk}(\bx_i) - J_{ijk}(\bx_i') \right] = 0$,
\begin{align*}
& |J_{ijk}(\bx_i) - J_{ijk}(\bx_i')| \\
&\quad\leq \max_{j\in[n]} \left|g_n'\left(\theta\rho_n\halfpower\bx_{0j}\transpose\eye_{p,q}\bx_i + (1-\theta)\rho_n\halfpower\bx_{0j}\transpose\eye_{p,q}\bx_i'\right)\right| \cdot \max_{j\in[n]} \rho_n\halfpower \|\bx_{0j}\|_2 \rho_n\halfpower |x_{0jk}| \|\bx_i-\bx_i'\|_2 \\
&\quad\leq \tau_n^{-2}\rho_n\|\bx_i-\bx_i'\|_2
\leq \frac{1}{\delta^4}\rho_n\inverse \|\bx_i-\bx_i'\|_2,
\end{align*}
and similarly, $\expect |J_{ijk}(\bx_i) - J_{ijk}(\bx_i')|^2\leq \delta^{-8}\rho_n^{-1}\|\bx_i - \bx_i'\|_2^2$
for all $\bx_i\,\bx_i'\in B(\zero_d,\rho_n\halfpower)$ and all $j\in[n]$. Then for any $\bx_i,\,\bx_i'\in B(\zero_d,\rho_n\halfpower)$, by Bernstein's inequality,
\begin{align*}
&\prob\left\{ |J_{ink}(\bx_i) - J_{ink}(\bx_i')| \geq t \right\} \\
&\quad\leq 2\exp\left\{ -\min\left( \frac{t^2}{4C_\delta^2(n\rho_n)\inverse\|\bx_i-\bx_i'\|_2^2},\, \frac{t}{(4/3)C_\delta(n\rho_n)\inverse\|\bx_i-\bx_i'\|_2}\right) \right\},
\end{align*}
{where we take $C_\delta = \delta^{-4}$.} We then consider the case where $|J_{ink}(\bx_i) - J_{ink}(\bx_i')|\leq 3C_\delta\|\bx_i-\bx_i'\|_2$ (the sub-Gaussian part) and the case where $|J_{ink}(\bx_i) - J_{ink}(\bx_i')| > 3C_\delta\|\bx_i-\bx_i'\|_2$ (the sub-exponential part) separately, noting that both the two bounds hold for all $t>0$, i.e.,
\begin{align*}
&\prob\Big\{ |J_{ink}(\bx_i) - J_{ink}(\bx_i')| \cdot \one\left( |J_{ink}(\bx_i) - J_{ink}(\bx_i')|\leq 3C_\delta\|\bx_i-\bx_i'\|_2 \right) \geq t \Big\} \\
&\quad\leq 2\exp\left\{ \frac{-t^2}{4C_\delta^2(n\rho_n)\inverse\|\bx_i-\bx_i'\|_2^2} \right\}, \\
&\prob\Big\{ |J_{ink}(\bx_i) - J_{ink}(\bx_i')| \cdot \one\left( |J_{ink}(\bx_i) - J_{ink}(\bx_i')| > 3C_\delta\|\bx_i-\bx_i'\|_2 \right) \geq t \Big\} \\
&\quad\leq 2\exp\left\{ \frac{-t}{(4/3)C_\delta(n\rho_n)\inverse\|\bx_i-\bx_i'\|_2} \right\}.
\end{align*}
Recall that the Orlicz $\psi_p$ norm of a random variable $X$ is $\|X\|_{\psi_p}=\inf\{c>0:\expect[\psi_p(|x|/c)]\leq 1\}$ with $\psi_p(x)=e^{x^p}-1$. Then, by sub-additivity, the fact that $\|X\|_{\psi_1}\leq(1/\sqrt{\log 2})\|X\|_{\psi_2}$ (Problem 2.2.5 in \cite{vaart-wellner-2023-weak}), and Lemma 2.2.1 in \cite{vaart-wellner-2023-weak}, we can bound the Orlicz $\psi_1$ norm of $J_{ink}(\bx_i) - J_{ink}(\bx_i')$, i.e.,
\begin{align*}
&\left\| J_{ink}(\bx_i) - J_{ink}(\bx_i') \right\|_{\psi_1}  \\
&\quad\leq \left\| |J_{ink}(\bx_i) - J_{ink}(\bx_i')| \cdot \one\left( |J_{ink}(\bx_i) - J_{ink}(\bx_i')|\leq 3C_\delta\|\bx_i-\bx_i'\|_2 \right) \right\|_{\psi_1} \\
&\quad\quad + \left\| |J_{ink}(\bx_i) - J_{ink}(\bx_i')| \cdot \one\left( |J_{ink}(\bx_i) - J_{ink}(\bx_i')| > 3C_\delta\|\bx_i-\bx_i'\|_2 \right) \right\|_{\psi_1} \\
&\quad\leq \frac{1}{\sqrt{\log 2}}\left\| |J_{ink}(\bx_i) - J_{ink}(\bx_i')| \cdot \one\left( |J_{ink}(\bx_i) - J_{ink}(\bx_i')|\leq 3C_\delta\|\bx_i-\bx_i'\|_2 \right) \right\|_{\psi_2} \\
&\quad\quad + \left\| |J_{ink}(\bx_i) - J_{ink}(\bx_i')| \cdot \one\left( |J_{ink}(\bx_i) - J_{ink}(\bx_i')| > 3C_\delta\|\bx_i-\bx_i'\|_2 \right) \right\|_{\psi_1} \\
&\quad\leq \sqrt{(12/\log 2)}C_\delta(n\rho_n)\invhalfpower\|\bx_i-\bx_i'\|_2 + 4C_\delta(n\rho_n)\inverse\|\bx_i-\bx_i'\|_2
 = C_\delta K(n) \|\bx_i-\bx_i'\|_2,
\end{align*}
where $C_\delta = 1/\delta^4$ (note that $C_\delta>1$) and $K(n)=\sqrt{(12/\log 2)}(n\rho_n)\invhalfpower + 4(n\rho_n)\inverse$.

\noindent{}Define a metric $d_J(\bx_i, \bx_i')=C_\delta K(n) \|\bx_i-\bx_i'\|_2$ on $B(\zero_d,\rho_n\halfpower)$, then the diameter of $B(\zero_d,\rho_n\halfpower)$ under $d_J$ is $2C_\delta \rho_n\halfpower{}K(n)$, and the packing number of the metric space $(B(\zero_d,\rho_n\halfpower), d_J)$ satisfies $D(\epsilon, d_J)\leq (2C_\delta \rho_n\halfpower{}K(n)/\epsilon)^d$. Then by Corollary 2.2.5 of \cite{vaart-wellner-2023-weak},
\begin{align*}
&\bigg\| \sup_{\bx_i,\,\bx_i'\in B(\zero_d,\rho_n\halfpower)} \bigg|J_{ink}(\bx_i) - J_{ink}(\bx_i')\bigg| \bigg\|_{\psi_1} \\
&\quad\leq C\int_0^{\mathrm{diam}B(\zero_d,\rho_n\halfpower)} \log(1 + D(\epsilon,d_j))\diff\epsilon 
\leq 
2CC_\delta \rho_n\halfpower{}K(n)\int_0^{1} \log\left(2\left(\frac{1}{\epsilon'}\right)^d \right)\diff\epsilon' \\
&\quad= 2CC_\delta \rho_n\halfpower{}K(n) (\log 2 + d) 
< 2(1+d) C C_\delta \rho_n\halfpower{}K(n),
\end{align*}
where $C$ is a constant (related to the function $\psi_1(x)=e^x-1$).

\noindent{}Consider $J_{ijk}(\zero_d)=(A_{ij}-p_{0ij}) 2\tau_n\inverse \rho_n\halfpower x_{0jk}$. We then have
$|J_{ijk}(\zero_d)| \leq 2C_\delta\halfpower\rho_n\invhalfpower$ and $\expect|J_{ijk}(\zero_d)|^2 \leq 4C_\delta$
for all $j\in[n]$. Then by Bernstein's inequality,
\begin{align*}
\prob\left\{ |J_{ink}(\zero_d)| \geq t \right\}
&\leq 2\exp\left\{ -\min\left( \frac{t^2}{16C_\delta n\inverse},\, \frac{t}{(8/3)C_\delta\halfpower n\inverse\rho_n\invhalfpower}\right) \right\},
\end{align*}
from which, similar to the computation for  $\|J_{ink}(\bx_i) - J_{ink}(\bx_i')\|_{\psi_1}$ shown above, we can obtain
$\| J_{ink}(\zero_d) \|_{\psi_1} \leq \sqrt{(48/\log 2)} C_\delta\halfpower n\invhalfpower + 8 C_\delta\halfpower n\inverse\rho_n\invhalfpower 
= 2C_\delta\halfpower \rho_n\halfpower K(n)$.
By triangle inequality, monotonicity of integral, and sub-additivity of norm,
\begin{align*}
\bigg\| \sup_{\bx_i\in B(\zero_d,\rho_n\halfpower)} |J_{ink}(\bx_i)| \bigg\|_{\psi_1}
&\leq \bigg\| \sup_{\bx_i,\,\bx_i'\in B(\zero_d,1)} \bigg|J_{ink}(\bx_i) - J_{ink}(\bx_i')\bigg| \bigg\|_{\psi_1} + \| J_{ink}(\zero_d) \|_{\psi_1}
 \\
&\leq ((1+d) C + 1) 2C_\delta \rho_n\halfpower K(n),
\end{align*}
and we also have
\[
\expect\bigg[ \sup_{\bx_i\in B(\zero_d,\rho_n\halfpower)} |J_{ink}(\bx_i)| \bigg] = \bigg\|\sup_{\bx_i\in B(\zero_d,\rho_n\halfpower)} |J_{ink}(\bx_i)|\bigg\|_1 \leq ((1+d) C + 1) 2C_\delta \rho_n\halfpower K(n),
\]
since $\|X\|_1\leq\|X\|_{\psi_1}$ (recall that $x\leq e^x-1$ for $x\geq 0$).

\noindent{}Similar to the previous steps, it is straightforward to compute that $|(2C_\delta)\inverse\rho_n\halfpower J_{ijk}(\bx_i)|<1$ for all $\bx_i\in B(\zero_d,\rho_n\halfpower)$ and for all $j\in[n]$, and also that
\begin{align*}
\expect\Bigg[ \sup_{\bx_i\in B(\zero_d,\rho_n\halfpower)}\sum_{j=1}^n((2C_\delta)\inverse\rho_n\halfpower J_{ijk}(\bx_i))^2 \Bigg] &< n\rho_n, \\
\sup_{\bx_i\in B(\zero_d,\rho_n\halfpower)}\sum_{j=1}^n\expect\Bigg[ ((2C_\delta)\inverse\rho_n\halfpower J_{ijk}(\bx_i))^2 \Bigg] &< n\rho_n.
\end{align*}
\noindent{}Let $B'(\zero_d,\rho_n\halfpower)$ be the set of all rational points in the closed ball that is centered at origin and of radius $\rho_n\halfpower$, then $B'(\zero_d,\rho_n\halfpower)$ is countable and is a dense subset of $B(\zero_d,\rho_n\halfpower)$. Since $J_{ink}(\bx_i)$ is continuous in $\bx_i$ surely, for any $\bx_i^*\in B(\zero_d,\rho_n\halfpower)$, there exists a sequence $\{\bx_i^{(m)}\}\subset B'(\zero_d,\rho_n\halfpower)$ with $\lim_{m\to\infty}\bx_i^{(m)} = \bx_i^*$ such that $\lim_{m\to\infty}J_{ink}(\bx_i^{(m)}) = J_{ink}(\bx_i^*)$. So $J_{ink}(\bx_i)$ indexed by $\bx_i\in B(\zero_d,\rho_n\halfpower)$ is a separable process, and we have
\begin{align*}
\sup_{\bx_i\in B(\zero_d,\rho_n\halfpower)}J_{ink}(\bx_i) = \sup_{\bx_i\in B'(\zero_d,\rho_n\halfpower)}J_{ink}(\bx_i), \\
\inf_{\bx_i\in B(\zero_d,\rho_n\halfpower)}J_{ink}(\bx_i) = \inf_{\bx_i\in B'(\zero_d,\rho_n\halfpower)}J_{ink}(\bx_i).
\end{align*}

\noindent{}Then, by Theorem 12.2 in \cite{boucheron-lugosi-massart-2013}, we have
\begin{align*}
&\prob\bigg\{ \sup_{\bx_i\in B(\zero_d,\rho_n\halfpower)}J_{ink}(\bx_i) \geq \expect\bigg[ \sup_{\bx_i\in B(\zero_d,\rho_n\halfpower)} |J_{ink}(\bx_i)| \bigg] + t \bigg\}
\leq 
\exp\bigg\{ \frac{-n t^2}{32C_\delta^2 + 4C_\delta\rho_n\invhalfpower t} \bigg\},
\end{align*}
and also
\begin{align*}
& \prob\bigg\{ -\sup_{\bx_i\in B(\zero_d,\rho_n\halfpower)}J_{ink}(\bx_i) \geq \expect\bigg[ \sup_{\bx_i\in B(\zero_d,\rho_n\halfpower)} |J_{ink}(\bx_i)| \bigg] + t \bigg\} \\
&\quad\leq \prob\bigg\{ \sup_{\bx_i\in B(\zero_d,\rho_n\halfpower)}-J_{ink}(\bx_i) \geq \expect\bigg[ \sup_{\bx_i\in B(\zero_d,\rho_n\halfpower)} -J_{ink}(\bx_i) \bigg] + t \bigg\}
\leq 
\exp\bigg\{ \frac{-n t^2}{32C_\delta^2 + 4C_\delta\rho_n\invhalfpower t} \bigg\},
\end{align*}
from both of which we have
\[
\prob\bigg\{ \bigg|\sup_{\bx_i\in B(\zero_d,1)}J_{ink}(\bx_i)\bigg| \geq \expect\bigg[ \sup_{\bx_i\in B(\zero_d,1)} \bigg|J_{ink}(\bx_i)\bigg| \bigg] + t \bigg\} \leq 2\exp\bigg\{ \frac{-n t^2}{32C_\delta^2 + 4C_\delta\rho_n\invhalfpower t} \bigg\}.
\]
Take $t=8C_tC_\delta\sqrt{(\log n)/n}$, and recall the bound on $\expect[\sup_{\bx_i\in B(\zero_d,\rho_n\halfpower)} |J_{ink}(\bx_i)|]$, we have 
\begin{align*}
&\prob\left\{ \left|\sup_{\bx_i\in B(\zero_d,1)}J_{ink}(\bx_i)\right| \geq ((1+d) C + 1) 2C_\delta \rho_n\halfpower K(n) + 8C_tC_\delta\sqrt{\frac{\log n}{n}} \right\}
 \\
&\quad\leq 2\exp\left\{ \frac{-64C_t^2C_\delta^2n (\log n/n)}{32C_\delta^2 + 32C_tC_\delta^2 \sqrt{(\log n)/(n\rho_n)}} \right\} 
= 2n^{-2C_t^2/(1 + C_t\sqrt{(\log n)/(n\rho_n)})}.
\end{align*}
For any $c>0$, let $N_{c,\delta,\lambda}$ be an integer large enough such that $\rho_n\halfpower{}K(n)<\sqrt{(\log n)/n}$ for all $n\geq N_{c,\delta,\lambda}$ (recall that $K(n)=\sqrt{(12/\log 2)}(n\rho_n)\invhalfpower + 4(n\rho_n)\inverse$ and that we assume $(\log n)/(n\rho_n)\to 0$), and then choose $C_t$ large enough such that $(\log 2)/(\log n) - 2C_t^2/(1 + C_t\sqrt{(\log n)/(n\rho_n)}) < -c$ for all $n\geq N_{c,\delta,\lambda}$ (it is decreasing in $C_t$ and in $n$), we have
\[
\bigg|\sup_{\bx_i\in B(\zero_d,\rho_n\halfpower)}J_{ink}(\bx_i)\bigg|\
\leq (((1+d) C + 1) 2C_\delta + 8C_tC_\delta) \sqrt{\frac{\log n}{n}}
\asymp_{c,\delta,\lambda} \sqrt{\frac{\log n}{n}}
\]
with probability at least $1-n^{-c}$ (the embedding dimension implicitly depends on $\lambda$).

\noindent{}By the fact that $\inf_x f(x) \leq \sup_x f(x)$ and $-\inf_x f(x) = \sup_x -f(x)$, the previous computation also gives (omitting some details)
\[
\prob\bigg\{ \bigg|\inf_{\bx_i\in B(\zero_d,\rho_n\halfpower)}J_{ink}(\bx_i)\bigg| \geq \expect\bigg[ \sup_{\bx_i\in B(\zero_d,\rho_n\halfpower)} \bigg|J_{ink}(\bx_i)\bigg| \bigg] + t \bigg\} \leq 2\exp\bigg\{ \frac{-n t^2}{32C_\delta^2 + 4C_\delta\rho_n\invhalfpower t} \bigg\},
\]
which gives
\[
\bigg|\inf_{\bx_i\in B(\zero_d,\rho_n\halfpower)}J_{ink}(\bx_i)\bigg|\
\lesssim_{c,\delta,\lambda} \sqrt{\frac{\log n}{n}},
\]
with probability at least $1-n^{-c}$. Then by the fact that $\sup_x |f(x)| \leq |\sup_x f(x)| + |\inf_x f(x)|$, we have
\[
\sup_{\bx_i\in B(\zero_d,\rho_n\halfpower)}\left|J_{ink}(\bx_i)\right|\
\lesssim_{c,\delta,\lambda} \sqrt{\frac{\log n}{n}}
\]
with probability at least $1-n^{-c}$. So we have
\begin{align*}
\sup_{\bx_i\in B(\zero_d,\rho_n\halfpower)}\left\|\frac{1}{n}\sum_{j=1}^n (A_{ij}-p_{0ij})g_n(\rho_n\halfpower\bx_i\transpose\eye_{p,q}\bx_{0j})\rho_n\halfpower\eye_{p,q}\bx_{0j}\right\|_2 
\lesssim_{c,\delta,\lambda} \sqrt{\frac{\log n}{n}}
\end{align*}
with probability at least $1-n^{-c}$. This shows the bound for the fifth term in \ref{eqn:gradient-decomp}.

\noindent{}By a similar approach, for the sixth term, we have (omitting the details to save space)
\begin{align*}
\sup_{\bx_i\in B(\zero_d,\rho_n\halfpower)}\left\|\frac{1}{n}\sum_{j=1}^n (A_{ij}-p_{0ij})g_n(1-\rho_n\halfpower\bx_i\transpose\eye_{p,q}\bx_{0j})\rho_n\halfpower\eye_{p,q}\bx_{0j}\right\|_2 \lesssim_{c,\delta,\lambda} \rho_n\sqrt{\frac{\log n}{n}}
\end{align*}
with probability at least $1-n^{-c}$.
The conclusion of the lemma then follows from applying triangle inequality and combining the six large probability bounds above, then with a union bound over $i\in[n]$.
\end{proof}

\begin{lemma}[Concentration of the Hessian matrix]
\label{lemma:concentration-hessian}
Suppose Assumption \ref{assumption:esl-grdpg} holds.
Then for any $c > 0$, there exists a constant integer $N_{c,\delta,\lambda}\in\mathbb{N}_+$ and a positive constant $C_{c,\delta,\lambda}$ depending on $c,\delta,\lambda$, such that for all $n\geq N_{c,\delta,\lambda}$,
\[
\max_{i\in[n]}
\sup_{\bx_i:||\bW\transpose\bx_i-\rho_n\halfpower\bx_{0i}||_2\leq \varepsilon_n}
\left\Vert
-\frac{1}{n}\bW\transpose\frac{\partial^2\widehat\ell_{in}}{\partial\bx_i\partial\bx_i\transpose}(\bx_i)\bW - \bG_{0in}
\right\Vert_2
\lesssim_{c,\delta,\lambda} \sqrt{\frac{\log n}{n\rho_n}}
\]
with probability at least $1 - n^{-c}$, where $\varepsilon_n=C_{c,\delta,\lambda}\sqrt{(\log n)/n}$.
\end{lemma}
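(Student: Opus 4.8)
The plan is to differentiate the ESL function \eqref{eqn:def-esl} twice, conjugate by $\bW$, and show that the resulting random Hessian concentrates, uniformly over the ball $\{\bx_i:\|\bW\transpose\bx_i-\rho_n\halfpower\bx_{0i}\|_2\leq\varepsilon_n\}$ and over $i\in[n]$, around the population Fisher information $\bG_{0in}$. Writing $g_n=\psi_n'$, a direct computation gives
\[
-\frac{1}{n}\frac{\partial^2\widehat\ell_{in}}{\partial\bx_i\partial\bx_i\transpose}(\bx_i)
= -\frac{1}{n}\sum_{j=1}^n w_{ij}(\bx_i)\,\widetilde\bx_j\widetilde\bx_j\transpose,
\]
where $w_{ij}(\bx_i)=A_{ij}g_n'(\bx_i\transpose\widetilde\bx_j)+(1-A_{ij})g_n'(1-\bx_i\transpose\widetilde\bx_j)$. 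The deterministic limiting weight is $\bar w_{ij}=p_{0ij}g_n'(p_{0ij})+(1-p_{0ij})g_n'(1-p_{0ij})=-\{p_{0ij}(1-p_{0ij})\}^{-1}$, and by Lemma \ref{lemma:frequently-used-results}(d) the matrix $\bW\transpose\widetilde\bx_j\widetilde\bx_j\transpose\bW$ is close to $\rho_n\eye_{p,q}\bx_{0j}\bx_{0j}\transpose\eye_{p,q}$, so that $-\frac1n\sum_j\bar w_{ij}\rho_n\eye_{p,q}\bx_{0j}\bx_{0j}\transpose\eye_{p,q}$ reassembles into $\bG_{0in}$. Accordingly, I would write each summand of $\bW\transpose(\cdots)\bW-\bG_{0in}$ as
\[
w_{ij}\big(\bW\transpose\widetilde\bx_j\widetilde\bx_j\transpose\bW-\rho_n\eye_{p,q}\bx_{0j}\bx_{0j}\transpose\eye_{p,q}\big)
+(w_{ij}-\bar w_{ij})\,\rho_n\eye_{p,q}\bx_{0j}\bx_{0j}\transpose\eye_{p,q},
\]
and further split $w_{ij}-\bar w_{ij}$ into an \emph{argument-replacement} part, obtained by freezing the arguments of $g_n'$ at $p_{0ij}$, and the \emph{stochastic} part $(A_{ij}-p_{0ij})\{g_n'(p_{0ij})-g_n'(1-p_{0ij})\}$.

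The decisive preliminary is that, on the good event, all arguments of $g_n'$ stay away from the kinks of $\psi_n$: by Lemma \ref{lemma:frequently-used-results}(f),(g), uniformly over the ball one has $\bx_i\transpose\widetilde\bx_j\in[\tfrac{\delta}{2}\rho_n,(1-\tfrac{\delta}{2})\rho_n]$, and since $\tau_n<\tfrac{\delta}{2}\rho_n$ by Assumption \ref{assumption:esl-grdpg}(e), the smooth formulas $g_n'(t)=-t^{-2}$ and $g_n''(t)=2t^{-3}$ are valid there. Hence on the edge ($A_{ij}$) part $|g_n'|\lesssim_\delta\rho_n^{-2}$ and $|g_n''|\lesssim_\delta\rho_n^{-3}$, while the complementary arguments $1-\bx_i\transpose\widetilde\bx_j$ lie near $1$ so that $|g_n'|,|g_n''|\lesssim_\delta 1$ there. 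For the matrix-replacement term I would bound $\sup_{\bx_i}|w_{ij}(\bx_i)|\lesssim_\delta A_{ij}\rho_n^{-2}+1$ and invoke Lemma \ref{lemma:frequently-used-results}(d); for the argument-replacement term I would use a mean-value expansion of $g_n'$ together with the uniform estimate $\sup_{\bx_i}|\bx_i\transpose\widetilde\bx_j-p_{0ij}|\lesssim_{c,\delta,\lambda}\rho_n\halfpower\sqrt{(\log n)/n}$ from Lemma \ref{lemma:frequently-used-results}(e). In both cases the large weights are tied to $A_{ij}$, whose row sum satisfies $\frac1n\sum_j A_{ij}\lesssim_{c,\delta}\rho_n$ by Lemma \ref{lemma:results-with-Aij}(b); combining the powers of $\rho_n$ (for instance $\rho_n\cdot\rho_n^{-2}\cdot\rho_n\halfpower\sqrt{(\log n)/n}$ for the matrix-replacement part, and $\rho_n\cdot\rho_n^{-3}\cdot\rho_n\halfpower\sqrt{(\log n)/n}\cdot\rho_n$ for the argument-replacement part) yields exactly $\sqrt{(\log n)/(n\rho_n)}$, with the non-edge contributions being strictly smaller.

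The stochastic part carries no dependence on $\bx_i$: it equals $-\frac1n\sum_j(A_{ij}-p_{0ij})\alpha_{ijn}\,\rho_n\eye_{p,q}\bx_{0j}\bx_{0j}\transpose\eye_{p,q}$ with deterministic coefficients $\alpha_{ijn}=g_n'(p_{0ij})-g_n'(1-p_{0ij})$ obeying $|\alpha_{ijn}|\lesssim_\delta\rho_n^{-2}$. I would apply Lemma \ref{lemma:results-with-Aij}(e) with $\alpha_n=\rho_n^{-2}$, which controls $\|\frac1n\sum_j(A_{ij}-p_{0ij})\alpha_{ijn}\bx_{0j}\bx_{0j}\transpose\|_2\lesssim_{c,\delta,\lambda}\rho_n^{-2}\cdot\rho_n\sqrt{(\log n)/(n\rho_n)}$; multiplying by the outer $\rho_n$ and using $\|\eye_{p,q}\|_2=1$ again gives $\sqrt{(\log n)/(n\rho_n)}$. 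Since this term is constant in $\bx_i$, its supremum over the ball is immediate. Collecting the three pieces by the triangle inequality and taking a union bound over $i\in[n]$ (each event holding with probability at least $1-n^{-c'}$, with $c'$ chosen to absorb the extra factor of $n$) completes the argument.

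The only delicate point, and the main obstacle, is the bookkeeping of the powers of $\rho_n$: because $\tau_n\asymp\rho_n$, the derivatives blow up as $g_n'\asymp\rho_n^{-2}$ and $g_n''\asymp\rho_n^{-3}$, and the estimate survives only because each such blow-up is invariably multiplied either by an edge indicator $A_{ij}$ (whose row sum is $\asymp n\rho_n$ rather than $n$) or by one of the small $O(\rho_n\halfpower\sqrt{(\log n)/n})$ embedding and argument errors, so that the net powers of $\rho_n$ cancel to leave precisely $\sqrt{(\log n)/(n\rho_n)}$. One must also confirm at the outset that, uniformly over the $\varepsilon_n$-ball, the arguments of $g_n'$ and $g_n''$ never exit $[\tfrac{\delta}{2}\rho_n,1]$, for otherwise the non-smooth pieces of $\psi_n$ would intervene. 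In contrast to Lemma \ref{lemma:concentration-gradient}, no empirical-process or chaining argument is required here: once the argument inside $g_n'$ is frozen at the deterministic value $p_{0ij}$, the only genuinely random term is free of $\bx_i$, so uniformity over the ball follows from deterministic Lipschitz bounds rather than from a maximal inequality.
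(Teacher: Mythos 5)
Your proposal is correct and takes essentially the same route as the paper's proof: the same three ingredients appear — mean-value replacement of the arguments of $\psi_n''$ by $p_{0ij}$, replacement of $\bW\transpose\widetilde\bx_j\widetilde\bx_j\transpose\bW$ by $\rho_n\eye_{p,q}\bx_{0j}\bx_{0j}\transpose\eye_{p,q}$ via Lemma \ref{lemma:frequently-used-results}(d), and the $\bx_i$-free stochastic term $(A_{ij}-p_{0ij})$ with coefficients of order $\rho_n^{-2}$ handled by Lemma \ref{lemma:results-with-Aij}, with identical $\rho_n$ bookkeeping and no chaining needed. The paper merely spreads your three terms over six (splitting edge/non-edge parts and carrying a residual term that vanishes exactly because $p_{0ij}$ and $1-p_{0ij}$ exceed $\tau_n$, which is the same fact you use to identify $\bar w_{ij}=-\{p_{0ij}(1-p_{0ij})\}^{-1}$), so the difference is purely organizational.
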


\begin{proof}
For simplicity of notation,
Let $p_{0ij} = \rho_n\bx_{0i}\transpose\eye_{p,q}\bx_{0j}$. 
A simple algebra shows that
\begin{align*}
&-\frac{1}{n}\bW\transpose\frac{\partial^2\widehat\ell_{in}}{\partial\bx_i\partial\bx_i\transpose}(\bx_i)\bW - \bG_{0in} \\
&\quad = \frac{1}{n} \sum_{j=1}^n A_{ij}\left\{-\psi_n''(\bx_i\transpose\widetilde{\bx}_j) + \psi_n''(p_{0ij})
\right\} \bW\transpose\widetilde\bx_j\widetilde\bx_j\transpose\bW \\
&\qquad + \frac{1}{n} \sum_{j=1}^n (1-A_{ij})\left\{ 
-\psi_n''(1 - \bx_i\transpose\widetilde{\bx}_j) + \psi_n''(1 - p_{0ij})
\right\} \bW\transpose\widetilde\bx_j\widetilde\bx_j\transpose\bW \\
&\qquad + \frac{1}{n} \sum_{j=1}^n (A_{ij}-p_{0ij}) \left\{ 
-\psi_n''(p_{0ij}) + \psi_n''(1 - p_{0ij})
\right\} \left( \bW\transpose\widetilde\bx_j\widetilde\bx_j\transpose\bW - \rho_n\eye_{p,q}\bx_{0j}\bx_{0j}\transpose\eye_{p,q} \right) \\
&\qquad + \frac{1}{n} \sum_{j=1}^n (A_{ij}-p_{0ij}) \left\{ 
-\psi_n''(p_{0ij}) + \psi_n''(1 - p_{0ij})
\right\} \rho_n\eye_{p,q}\bx_{0j}\bx_{0j}\transpose\eye_{p,q} \\
&\qquad + \frac{1}{n} \sum_{j=1}^n \frac{\bW\transpose\widetilde\bx_j\widetilde\bx_j\transpose\bW - \rho_n\eye_{p,q}\bx_{0j}\bx_{0j}\transpose\eye_{p,q}}{p_{0ij}(1-p_{0ij})} \\
&\qquad + \frac{1}{n} \sum_{j=1}^n \left\{ 
-p_{0ij}\psi_n''(p_{0ij}) - (1 - p_{0ij})\psi_n''(1 - p_{0ij})
- \left( \frac{1}{p_{0ij}} + \frac{1}{1-p_{0ij}} \right) \right\} \bW\transpose\widetilde\bx_j\widetilde\bx_j\transpose\bW,
\end{align*}
which can be viewed as a sum of six terms. For simplicity of notation, in the remaining of the proof of this lemma, the large probability bounds are stated with respect to all $n\geq{}{N}_{c,\delta,\lambda}$ for some large constant integer ${N}_{c,\delta,\lambda}$ that depends on $c,\delta,\lambda$.

\noindent{}For the first term, with probability at least $1-n^{-c}$,
\begin{align*}
&\sup_{\bx_i\in B(\rho_n\halfpower\bW\bx_{0i}, \varepsilon_n)}
\left\Vert \frac{1}{n} \sum_{j=1}^n A_{ij}\left\{
-\psi_n''(\bx_i\transpose\widetilde{\bx}_j) + \psi_n''(p_{0ij})
\right\} \bW\transpose\widetilde\bx_j\widetilde\bx_j\transpose\bW \right\Vert_2 \\
&\quad\leq 
\frac{1}{n}||\bA||_\infty \cdot \max_{j\in[n]} \sup_{\bx_i\in B(\rho_n\halfpower\bW\bx_{0i}, \varepsilon_n)} \frac{2|\bx_i\transpose\widetilde\bx_j - p_{0ij}|}{|p_{0ij}+\theta(\bx_i\transpose\widetilde\bx_j - p_{0ij})|^3} \cdot \max_{j\in[n]}||\widetilde\bx_j||_2^2
\asymp_{c,\delta,\lambda} \sqrt{\frac{\log n}{n\rho_n}}
\end{align*}
by from Cauchy--Schwarz inequality, the properties of the function $\psi_n(t)$, mean value theorem, $(\log n)/(n\rho_n) \to 0$, and Lemma \ref{lemma:frequently-used-results}.

\noindent{}For the second term, with probability at least $1-n^{-c}$,
\begin{align*}
&\sup_{\bx_i\in B(\rho_n\halfpower\bW\bx_{0i}, \varepsilon_n)}
\left\Vert \frac{1}{n} \sum_{j=1}^n (1-A_{ij})\left\{ 
-\psi_n''(1 - \bx_i\transpose\widetilde{\bx}_j) + \psi_n''(1 - p_{0ij})
\right\} \bW\transpose\widetilde\bx_j\widetilde\bx_j\transpose\bW \right\Vert_2 \\
&\quad\leq 
\max_{j\in[n]} \sup_{\bx_i\in B(\rho_n\halfpower\bW\bx_{0i}, \varepsilon_n)} \frac{2|\bx_i\transpose\widetilde\bx_j - p_{0ij}|}{|1-(p_{0ij}+\theta(\bx_i\transpose\widetilde\bx_j - p_{0ij}))|^3} \cdot \max_{j\in[n]}||\widetilde\bx_j||_2^2 \\
&\quad\lesssim_{c,\delta,\lambda} \rho_n\halfpower\sqrt{\frac{\log{}n}{n}} \cdot \rho_n
\asymp_{c,\delta,\lambda} \rho_n^2 \sqrt{\frac{\log n}{n\rho_n}},
\end{align*}
where the inequalities follow from Cauchy--Schwarz inequality, the properties of the function $\psi_n(t)$, mean value theorem, $(\log n)/(n\rho_n) \to 0$, and Lemma \ref{lemma:frequently-used-results}.

\noindent{}For the third term, with probability at least $1-n^{-c}$,
\begin{align*}
&\left\Vert \frac{1}{n} \sum_{j=1}^n (A_{ij}-p_{0ij}) \left\{
-\psi_n''(p_{0ij}) + \psi_n''(1 - p_{0ij})
\right\} \left( \bW\transpose\widetilde\bx_j\widetilde\bx_j\transpose\bW - \rho_n\eye_{p,q}\bx_{0j}\bx_{0j}\transpose\eye_{p,q} \right) \right\Vert_2 \\
&\quad\leq \frac{1}{n} \left\Vert \bA-\rho_n\bX_0\eye_{p,q}\bX_0\transpose \right\Vert_\infty \cdot \max_{j\in[n]} \left( \frac{1}{p_{0ij}^2} + \frac{1}{(1-p_{0ij})^2} \right) \cdot \max_{j\in[n]} \left\Vert \bW\transpose\widetilde\bx_j\widetilde\bx_j\transpose\bW - \rho_n\bx_{0j}\bx_{0j}\transpose \right\Vert_2\\
&\quad\lesssim_{c,\delta,\lambda} \frac{1}{n}n\rho_n \cdot \frac{1}{\rho_n^2} \cdot \rho_n\halfpower\sqrt{\frac{\log{}n}{n}} 
\asymp_{c,\delta,\lambda} \sqrt{\frac{\log{}n}{n\rho_n}},
\end{align*}
where the inequalities follow from Cauchy--Schwarz inequality, triangle inequality, Lemma \ref{lemma:frequently-used-results}, and Lemma \ref{lemma:results-with-Aij}.

\noindent{}For the fourth term, note that
$\left\{
-\psi_n''(p_{0ij}) + \psi_n''(1 - p_{0ij})
\right\}\rho_n  
= \{p_{0ij}^{-2} - (1-p_{0ij})^{-2}\}\rho_n
\leq C_\delta \rho_n\inverse,
$
so
\[
\left\Vert \frac{1}{n} \sum_{j=1}^n (A_{ij}-p_{0ij}) \left\{ 
-\psi_n''(p_{0ij}) + \psi_n''(1 - p_{0ij})
\right\} \rho_n\bx_{0j}\bx_{0j}\transpose \right\Vert_2
\lesssim_{c,\delta,\lambda}
\sqrt{\frac{\log{}n}{n\rho_n}}
\]
with probability at least $1-n^{-c}$ by Lemma \ref{lemma:results-with-Aij}.

\noindent{}For the fifth term, with probability at least $1-n^{-c}$,
\[
\left\Vert \frac{1}{n} \sum_{j=1}^n \frac{\bW\transpose\widetilde\bx_j\widetilde\bx_j\transpose\bW - \rho_n\eye_{p,q}\bx_{0j}\bx_{0j}\transpose\eye_{p,q}}{p_{0ij}(1-p_{0ij})} \right\Vert_2
\leq
\frac{\max_{j}\left\Vert\bW\transpose\widetilde\bx_j\widetilde\bx_j\transpose\bW - \rho_n\bx_{0j}\bx_{0j}\transpose\right\Vert_2}{\min_{i,j}p_{0ij}(1-p_{0ij})}
\lesssim_{c,\delta,\lambda}
\sqrt{\frac{\log{}n}{n\rho_n}}
\]
by triangle inequality and Lemma \ref{lemma:frequently-used-results}.

\noindent{}For the sixth term, since $\tau_n<\delta\rho_n$ for all $n$ and $p_{0ij}\in[\delta\rho_n, (1-\delta)\rho_n]$ for all $i,j\in[n]$ by assumption, we have
\[
\frac{1}{n} \sum_{j=1}^n \left\{ 
{ -p_{0ij}\psi_n''(p_{0ij})-(1 - p_{0ij})\psi_n''(1 - p_{0ij})}
- \left( \frac{1}{p_{0ij}} + \frac{1}{1-p_{0ij}} \right) \right\} \bW\transpose\widetilde\bx_j\widetilde\bx_j\transpose\bW
= 0.
\]

\noindent{}The conclusion follows from applying triangle inequality, combining the six bounds above, and applying a union bound over $i\in[n]$.
\end{proof}

\begin{lemma}[Lipschitz property of the Hessian matrix]
\label{lemma:lipschitz-hessian}
Suppose Assumption \ref{assumption:esl-grdpg} holds. Let $\bx_i,\bx_i'\in\mathbb{R}^d$, then there exists a constant $N_{c,\delta,\lambda}\in\mathbb{N}_+$ depending on $c,\delta,\lambda$, such that for all $n \geq N_{c,\delta,\lambda}$,
\[
\frac{1}{n} \left\|\frac{\partial^2\widehat\ell_{in}}{\partial\bx_i\partial\bx_i\transpose}(\bx_i) - \frac{\partial^2\widehat\ell_{in}}{\partial\bx_i\partial\bx_i\transpose}(\bx_i')\right\|_2
\lesssim_{c,\delta,\lambda} \rho_n^{-3/2}\|\bx_i-\bx_i'\|_2
\]
with probability at least $1 - n^{-c}$.
\end{lemma}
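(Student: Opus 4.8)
The plan is to compute the Hessian of the ESL function in closed form and then reduce the entire estimate to the Lipschitz modulus of the scalar second derivative $\psi_n''$. Differentiating \eqref{eqn:def-esl} twice in $\bx_i$ and using $\partial(\bx_i\transpose\widetilde\bx_j)/\partial\bx_i=\widetilde\bx_j$ gives
\[
\frac{\partial^2\widehat\ell_{in}}{\partial\bx_i\partial\bx_i\transpose}(\bx_i)
=\sum_{j=1}^n\Big\{A_{ij}\psi_n''(\bx_i\transpose\widetilde\bx_j)+(1-A_{ij})\psi_n''(1-\bx_i\transpose\widetilde\bx_j)\Big\}\widetilde\bx_j\widetilde\bx_j\transpose.
\]
Subtracting the same expression at $\bx_i'$, the Hessian difference is $\sum_{j=1}^n c_j\,\widetilde\bx_j\widetilde\bx_j\transpose$ with scalar coefficients
\[
c_j=A_{ij}\big\{\psi_n''(\bx_i\transpose\widetilde\bx_j)-\psi_n''(\bx_i'\transpose\widetilde\bx_j)\big\}+(1-A_{ij})\big\{\psi_n''(1-\bx_i\transpose\widetilde\bx_j)-\psi_n''(1-\bx_i'\transpose\widetilde\bx_j)\big\}.
\]
By the triangle inequality for the spectral norm together with $\|\widetilde\bx_j\widetilde\bx_j\transpose\|_2=\|\widetilde\bx_j\|_2^2$, it then suffices to bound $\sum_j|c_j|\,\|\widetilde\bx_j\|_2^2$.

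The key analytic input is that $\psi_n''$ is globally Lipschitz with constant $2/\tau_n^3$. Indeed, as recorded in the proof of Lemma \ref{lemma:concentration-gradient}, $\psi_n''(t)=-\tau_n^{-2}\mathbbm{1}(t<\tau_n)-t^{-2}\mathbbm{1}(t\in[\tau_n,1])-\mathbbm{1}(t>1)$, so $\psi_n''$ is continuous and piecewise $C^1$, with $\psi_n'''(t)=2t^{-3}$ on $(\tau_n,1)$ and $\psi_n'''(t)=0$ on the two quadratic pieces; hence $|\psi_n'''|\le 2/\tau_n^3$ wherever it exists, the worst case occurring just above $t=\tau_n$. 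Applying the mean value theorem on each piece and the Cauchy--Schwarz bound $|(\bx_i-\bx_i')\transpose\widetilde\bx_j|\le\|\bx_i-\bx_i'\|_2\|\widetilde\bx_j\|_2$ (and the identical estimate for the $1-(\cdot)$ arguments), each difference of $\psi_n''$-values is at most $(2/\tau_n^3)\|\bx_i-\bx_i'\|_2\|\widetilde\bx_j\|_2$. Since $A_{ij}+(1-A_{ij})=1$, this yields $|c_j|\le(2/\tau_n^3)\|\bx_i-\bx_i'\|_2\|\widetilde\bx_j\|_2$, and therefore
\[
\frac{1}{n}\left\|\frac{\partial^2\widehat\ell_{in}}{\partial\bx_i\partial\bx_i\transpose}(\bx_i)-\frac{\partial^2\widehat\ell_{in}}{\partial\bx_i\partial\bx_i\transpose}(\bx_i')\right\|_2
\le\frac{2}{\tau_n^3}\|\bx_i-\bx_i'\|_2\cdot\frac{1}{n}\sum_{j=1}^n\|\widetilde\bx_j\|_2^3.
\]

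It remains to control $\|\widetilde\bx_j\|_2$ and to match powers of $\rho_n$. By Lemma \ref{lemma:frequently-used-results}(a), $\max_{j\in[n]}\|\widetilde\bx_j\|_2\le(1-\delta/2)\rho_n\halfpower$ with probability at least $1-n^{-c}$, so $(1/n)\sum_j\|\widetilde\bx_j\|_2^3\le(1-\delta/2)^3\rho_n^{3/2}\lesssim_\delta\rho_n^{3/2}$; this is the only place where randomness (through $\bA$) enters, and it supplies the stated high-probability guarantee. Finally, Assumption \ref{assumption:esl-grdpg}(e) gives $\tau_n>\delta^2\rho_n$, whence $\tau_n^{-3}<\delta^{-6}\rho_n^{-3}$ and $\rho_n^{3/2}/\tau_n^3<\delta^{-6}\rho_n^{-3/2}$, producing exactly the claimed modulus $\rho_n^{-3/2}\|\bx_i-\bx_i'\|_2$. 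The argument is essentially routine once the Hessian is in hand; the only delicate point I anticipate is pinning down the global Lipschitz constant of $\psi_n''$ --- which is governed by the blow-up of $\psi_n'''$ to $2/\tau_n^3$ near the truncation threshold --- and then verifying that the truncation calibration $\tau_n\asymp\rho_n$ from Assumption \ref{assumption:esl-grdpg}(e) converts the crude bound $\rho_n^{3/2}/\tau_n^3$ into precisely the $\rho_n^{-3/2}$ rate.
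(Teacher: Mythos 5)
Your proposal is correct and follows essentially the same route as the paper's proof: the same closed-form Hessian, a mean-value/Lipschitz bound on $\psi_n''$ with modulus of order $\tau_n^{-3}$, the high-probability bound $\max_{j}\|\widetilde\bx_j\|_2\lesssim\rho_n^{1/2}$ from Lemma \ref{lemma:frequently-used-results}, and the calibration $\tau_n>\delta^2\rho_n$ from Assumption \ref{assumption:esl-grdpg}(e). The only (immaterial) difference is that the paper bounds the $A_{ij}$-part separately via $\|\bA\|_\infty\lesssim n\rho_n$, obtaining a sharper intermediate rate $\rho_n^{-1/2}$ for that piece, whereas you use $A_{ij}\le 1$ throughout; since the $(1-A_{ij})$-part dominates at $\rho_n^{-3/2}$ either way, both arguments yield the stated bound.
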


\begin{proof}
Write
\begin{align*}
\frac{1}{n}\frac{\partial^2 \widehat\ell_{in}}{\partial\bx_i\partial\bx_i\transpose}(\bx_i) - \frac{1}{n}\frac{\partial^2 \widehat\ell_{in}}{\partial\bx_i\partial\bx_i\transpose}(\bx_i')
&= \frac{1}{n}\sum_{j=1}^n A_{ij}\left\{\psi_n''(\bx_i\transpose\widetilde\bx_j) - \psi_n''((\bx_i')\transpose\widetilde\bx_j)\right\}\widetilde\bx_j\widetilde\bx_j\transpose  \\
&\quad + \frac{1}{n}\sum_{j=1}^n (1-A_{ij})\left\{\psi_n''(1-\bx_i\transpose\widetilde\bx_j) - \psi_n''(1-(\bx_i')\transpose\widetilde\bx_j)\right\}\widetilde\bx_j\widetilde\bx_j\transpose,
\end{align*}
which can be viewed as a sum of two terms.
For the first term,
\begin{align*}
&\left\|\frac{1}{n}\sum_{j=1}^n A_{ij}\left\{\psi_n''(\bx_i\transpose\widetilde\bx_j) - \psi_n''((\bx_i')\transpose\widetilde\bx_j)\right\}\widetilde\bx_j\widetilde\bx_j\transpose\right\|_2  \\
&\quad\leq \frac{1}{n}\|\bA\|_\infty \cdot \max_{j\in[n]}\left|\psi_n'''(\theta\bx_i\transpose\widetilde\bx_j+(1-\theta)(\bx_i')\transpose\widetilde\bx_j)(\bx_i-\bx_i')\transpose\widetilde\bx_j\right| \cdot \max_{j\in[n]}\|\widetilde\bx_j\|_2^2  \\
&\quad\leq \frac{1}{n}\|\bA\|_\infty \cdot \tau_n^{-3} \cdot \|\bx_i-\bx_i'\|_2 \cdot \max_{j\in[n]}\|\widetilde\bx_j\|_2^3  
\lesssim_{c,\delta,\lambda} 
 \rho_n\invhalfpower \|\bx_i-\bx_i'\|_2
\end{align*}
with probability at least $1 - n^{-c}$.
For the second term,
\begin{align*}
&\left\| \frac{1}{n}\sum_{j=1}^n (1-A_{ij})\left\{\psi_n''(1-\bx_i\transpose\widetilde\bx_j) - \psi_n''(1-(\bx_i')\transpose\widetilde\bx_j)\right\}\widetilde\bx_j\widetilde\bx_j\transpose\right\|_2  \\
&\quad\leq \max_{j\in[n]}\left|\psi_n'''(1-\theta\bx_i\transpose\widetilde\bx_j-(1-\theta)(\bx_i')\transpose\widetilde\bx_j)(\bx_i-\bx_i')\transpose\widetilde\bx_j\right| \cdot \max_{j\in[n]}\|\widetilde\bx_j\|_2^2  \\
&\quad\leq \tau_n^{-3} \cdot \|\bx_i-\bx_i'\|_2 \cdot \max_{j\in[n]}\|\widetilde\bx_j\|_2^3  
\lesssim_{c,\delta,\lambda} \rho_n^{-3/2} \|\bx_i-\bx_i'\|_2
\end{align*}
with probability at least $1 - n^{-c}$.
So we have
\[
\frac{1}{n} \left\|\frac{\partial^2\widehat\ell_{in}}{\partial\bx_i\partial\bx_i\transpose}(\bx_i) - \frac{\partial^2\widehat\ell_{in}}{\partial\bx_i\partial\bx_i\transpose}(\bx_i')\right\|_2
\lesssim_{c,\delta,\lambda} \rho_n^{-3/2}\|\bx_i-\bx_i'\|_2
\]
with probability at least $1 - n^{-c}$.
\end{proof}

\begin{theorem}[One-step estimator]
\label{thm:ose-asymptotic-normality}
Suppose Assumption \ref{assumption:esl-grdpg} holds. Let $\breve\bX$ denote the adjacency spectral embedding, and $\widetilde\bX$ the signature-adjusted adjacency spectral embedding.
Let $p_{0ij} = \rho_n\bx_{0i}\transpose\eye_{p,q}\bx_{0j}$, and $\widetilde{p}_{ij} = \breve\bx_i\transpose\widetilde{\bx}_j$, $i,j\in [n]$.
For each $i\in[n]$, define the one-step estimator $\widehat\bx_i^{\mathrm{(OS)}}$ by
\[
\widehat\bx_i^{\mathrm{(OS)}}
= \breve\bx_i + \left\{\frac{1}{n}\sum_{j=1}^n\frac{\widetilde\bx_j\widetilde\bx_j\transpose}{\widetilde{p}_{ij}(1-\widetilde{p}_{ij})} \right\}\inverse \left\{\frac{1}{n}\sum_{j=1}^n\frac{(A_{ij}-\widetilde{p}_{ij})\widetilde\bx_j}{\widetilde{p}_{ij}(1-\widetilde{p}_{ij})} \right\}.
\]
Then
\[
\bG_{0in}\halfpower(\bW\transpose\widehat\bx_i^{\mathrm{(OS)}}-\rho_n\halfpower\bx_{0i})
= \frac{1}{n}\sum_{j=1}^n \frac{(A_{ij}-p_{0ij})\bG_{0in}\invhalfpower\rho_n\halfpower\eye_{p,q}\bx_{0j}}{p_{0ij}(1-p_{0ij})} + \br_{in}^{\mathrm{(OS)}},
\]
and for any $c>0$, there exist a constant integer $N_{c,\delta,\lambda}$ and a constant $C_{c,\delta,\lambda}$ that depend on $c,\delta,\lambda$ such that for all $1\leq t\leq C_{c,\delta,\lambda}\log n$ and for all $n\geq N_{c,\delta,\lambda}$,
$\|\br_{in}^{\mathrm{(OS)}}\|_2\lesssim_{c,\delta,\lambda} {t^2}/{(n\rho_n\halfpower)}$
with probability at least $1-c_0n^{-c}-c_0e^{-t}$ for some absolute constant $c_0>0$.
Furthermore,
$\sqrt{n}\bG_{0in}\halfpower(\bW\transpose\widehat\bx_i^{\mathrm{(OS)}}-\rho_n\halfpower\bx_{0i})
\overset{\calL}{\to} \mathrm{N}_d(\zero_d, \eye_d)$ as $n\to\infty$.
\end{theorem}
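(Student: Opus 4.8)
The plan is to exploit the fact that, because the denominators $\widetilde p_{ij}(1-\widetilde p_{ij})$ are held fixed at their spectral values, the estimating function underlying the correction is \emph{affine} in $\bx_i$, so the single Newton step solves it exactly and no genuine Taylor remainder for the initializer $\breve\bx_i$ is incurred. Writing $\widehat\bH_{in}=\frac1n\sum_{j}\widetilde\bx_j\widetilde\bx_j\transpose/\{\widetilde p_{ij}(1-\widetilde p_{ij})\}$ and $\bc_{in}=\frac1n\sum_{j}A_{ij}\widetilde\bx_j/\{\widetilde p_{ij}(1-\widetilde p_{ij})\}$, and using $\breve\bx_i\transpose\widetilde\bx_j=\widetilde p_{ij}$, the correction telescopes to $\widehat\bx_i^{\mathrm{(OS)}}=\widehat\bH_{in}\inverse\bc_{in}$. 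Subtracting $\rho_n\halfpower\bW\bx_{0i}$ and aligning by $\bW\transpose$ then gives the truth-centered identity
\[
\bW\transpose\widehat\bx_i^{\mathrm{(OS)}}-\rho_n\halfpower\bx_{0i}
=(\bW\transpose\widehat\bH_{in}\bW)\inverse\,\frac1n\sum_{j=1}^n
\frac{(A_{ij}-\rho_n\halfpower\bx_{0i}\transpose\bW\transpose\widetilde\bx_j)\,\bW\transpose\widetilde\bx_j}{\widetilde p_{ij}(1-\widetilde p_{ij})},
\]
which reduces the whole theorem to controlling the aligned score on the right and the inverse information factor in front.

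First I would handle the information factor. Using $\bW\transpose\widetilde\bx_j=\rho_n\halfpower\eye_{p,q}\bx_{0j}+O(\sqrt{(\log n)/n})$ in $\|\cdot\|\twotoinfinity$ from Theorem \ref{thm:signed-ase}, the denominator control $\widetilde p_{ij}(1-\widetilde p_{ij})\asymp\rho_n$ and $|\widetilde p_{ij}-p_{0ij}|\lesssim\rho_n\halfpower\sqrt{(\log n)/n}$ from Lemma \ref{lemma:frequently-used-results}(b)--(d), together with the positive-definite lower bound on $\frac1n\sum_j\widetilde\bx_j\widetilde\bx_j\transpose$ implied by Lemma \ref{lemma:results-with-Aij}(g), I would obtain $\|\bW\transpose\widehat\bH_{in}\bW-\bG_{0in}\|_2\lesssim_{c,\delta,\lambda}\sqrt{(\log n)/(n\rho_n)}$ and $\|\widehat\bH_{in}\inverse\|_2\lesssim_{c,\delta,\lambda}1$. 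Since $\bG_{0in}$ has eigenvalues bounded away from $0$ and $\infty$ by Assumption \ref{assumption:esl-grdpg}(b),(c), this permits replacing $(\bW\transpose\widehat\bH_{in}\bW)\inverse$ by $\bG_{0in}\inverse$ at a cost that is higher order once multiplied against the (small) score.

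The core of the argument is the expansion of the centered score into the oracle martingale term
\[
\bT_{in}=\frac1n\sum_{j=1}^n\frac{(A_{ij}-p_{0ij})\,\rho_n\halfpower\eye_{p,q}\bx_{0j}}{p_{0ij}(1-p_{0ij})}
\]
plus remainders. Swapping $\bW\transpose\widetilde\bx_j\mapsto\rho_n\halfpower\eye_{p,q}\bx_{0j}$ in the numerator vector and $\widetilde p_{ij}\mapsto p_{0ij}$ in the denominator produces terms bounded directly by Lemma \ref{lemma:results-with-Aij} and Theorem \ref{thm:signed-ase}. The delicate piece is the centering mismatch $p_{0ij}-\rho_n\halfpower\bx_{0i}\transpose\bW\transpose\widetilde\bx_j=-\rho_n\halfpower\bx_{0i}\transpose(\bW\transpose\widetilde\bx_j-\rho_n\halfpower\eye_{p,q}\bx_{0j})$: inserting the linear expansion of Theorem \ref{thm:signed-ase}, namely $\bW\transpose\widetilde\bx_j-\rho_n\halfpower\eye_{p,q}\bx_{0j}=\rho_n\invhalfpower(\bX_0\transpose\bX_0)\inverse\sum_k(A_{jk}-p_{0jk})\bx_{0k}+(\bR_{\widetilde\bX})_{j*}\transpose$, converts the associated part of the score into a double sum over $j,k$ that is \emph{quadratic} in the independent centered entries $\{A_{jk}-p_{0jk}\}$. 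This is the main obstacle: a deterministic bound is off by a factor of order $\sqrt{n\rho_n}$, so one must instead apply a Hanson--Wright/decoupling-type concentration (equivalently a martingale Bernstein bound) to this quadratic form, which is precisely what generates the $t^2$ rate and the $c_0e^{-t}$ probability in the statement; the higher-order residual $\bR_{\widetilde\bX}$ is absorbed using its sharper $\|\cdot\|\twotoinfinity\lesssim(\log n)/(n\rho_n\halfpower)$ bound. Collecting all contributions yields $\|\br_{in}^{\mathrm{(OS)}}\|_2\lesssim_{c,\delta,\lambda}t^2/(n\rho_n\halfpower)$ on the stated event, and multiplying the identity above by $\bG_{0in}\halfpower$ gives the claimed Bahadur representation.

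Finally, taking $t\asymp\log n$ forces $\sqrt n\,\br_{in}^{\mathrm{(OS)}}\overset{\prob}{\to}\zero_d$, so by Slutsky it suffices to establish $\sqrt n\,\bG_{0in}\invhalfpower\bT_{in}\overset{\calL}{\to}\mathrm{N}_d(\zero_d,\eye_d)$, i.e.\ a central limit theorem for $\tfrac1{\sqrt n}\sum_j(A_{ij}-p_{0ij})\bG_{0in}\invhalfpower\rho_n\halfpower\eye_{p,q}\bx_{0j}/\{p_{0ij}(1-p_{0ij})\}$. The summands are independent across $j$, mean zero, and their covariances sum to $\bG_{0in}\invhalfpower\bG_{0in}\bG_{0in}\invhalfpower=\eye_d$, recalling that the Fisher information is $\bG_{0in}=\frac1n\sum_j\rho_n\eye_{p,q}\bx_{0j}\bx_{0j}\transpose\eye_{p,q}/\{p_{0ij}(1-p_{0ij})\}$. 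The Lyapunov ratio is $O\{(n\rho_n)\invhalfpower\}\to0$ under Assumption \ref{assumption:esl-grdpg}(d), so the Lindeberg--Feller theorem applies and the asymptotic normality follows.
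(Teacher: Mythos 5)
First, a point of context: the paper never proves this theorem. It is stated with a remark that it generalizes Theorem 4.7 of \cite{xie-2024-spn-bernoulli} and that the proof is omitted, so your proposal can only be judged on its own merits rather than against an in-paper argument. Much of your skeleton is sound, and one observation is genuinely nice: because the weights $\widetilde p_{ij}(1-\widetilde p_{ij})$ are frozen at their spectral values, the one-step correction telescopes exactly to $\widehat\bx_i^{\mathrm{(OS)}}=\widehat\bH_{in}\inverse\bc_{in}$, so no Taylor remainder in $\breve\bx_i$ appears; the truth-centered identity that follows is correct. The concentration $\|\bW\transpose\widehat\bH_{in}\bW-\bG_{0in}\|_2\lesssim_{c,\delta,\lambda}\sqrt{(\log n)/(n\rho_n)}$ is indeed available (the paper proves it for the same matrix, denoted $\widetilde\bG_{in}$, inside the proof of Theorem \ref{thm:mesle}), and the Lindeberg--Feller/Slutsky endgame with the variance identity $\bG_{0in}\invhalfpower\bG_{0in}\bG_{0in}\invhalfpower=\eye_d$ is correct.

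The genuine gap is in the middle step, where you have the right toolbox but attach the tools to the wrong terms. The terms you dismiss as ``bounded directly by Lemma \ref{lemma:results-with-Aij} and Theorem \ref{thm:signed-ase},'' namely
\begin{align*}
\frac{1}{n}\sum_{j=1}^n (A_{ij}-p_{0ij})\left\{\frac{\bW\transpose\widetilde\bx_j}{\widetilde p_{ij}(1-\widetilde p_{ij})}-\frac{\rho_n\halfpower\eye_{p,q}\bx_{0j}}{p_{0ij}(1-p_{0ij})}\right\},
\end{align*}
cannot be handled that way: Lemma \ref{lemma:results-with-Aij} requires the coefficient array $\alpha_{ijn}$ to be \emph{deterministic} (its proof is Bernstein's inequality over the independent entries of row $i$), whereas the bracket above depends on the whole matrix $\bA$ and is correlated with $A_{ij}$. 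The only direct bound uses $|A_{ij}-p_{0ij}|\le 1$ together with the uniform bracket bound, giving $O(\rho_n\inverse\sqrt{(\log n)/n})$, which exceeds $n^{-1/2}$ and destroys the CLT. These interaction terms (row-$i$ noise multiplied by the spectral error $\bW\transpose\widetilde\bx_j-\rho_n\halfpower\eye_{p,q}\bx_{0j}$, itself linear in $\bA-\bP$) are exactly the second-order chaos of size $\asymp 1/(n\rho_n\halfpower)$ that forces the decoupling/bilinear-form machinery (e.g., Lemma \ref{lemma:decoupling-u-statistics}). Conversely, the centering-mismatch term you single out as delicate has, after substituting the linear expansion and oracle denominators, a leading part that is a \emph{linear} form $\sum_{j,k}c_{jk}(A_{jk}-p_{0jk})$ with deterministic coefficients of size $\asymp 1/(n^2\rho_n\halfpower)$ — plain Bernstein suffices there, and Hanson--Wright is not the relevant tool. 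Finally, your bookkeeping cannot deliver the bound over the full claimed range $1\le t\le C_{c,\delta,\lambda}\log n$: every contribution you control via $\bR_{\widetilde\bX}$ or via high-probability two-to-infinity bounds carries a $\log n$ factor, e.g.\ $\lesssim(\log n)/(n\rho_n\halfpower)$, which is larger than $t^2/(n\rho_n\halfpower)$ whenever $t\ll\sqrt{\log n}$; getting the $t^2$ rate down to $t=1$ with probability $1-c_0n^{-c}-c_0e^{-t}$ requires the finer $t$-calibrated residual decomposition of \cite{xie-2024-spn-bernoulli}, which is presumably why the paper defers to that reference rather than rebuilding the result from Theorem \ref{thm:signed-ase} alone.
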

\begin{remark}
Theorem \ref{thm:ose-asymptotic-normality} is a generalization of Theorem 4.7 in \cite{xie-2024-spn-bernoulli} to generalized random dot product graphs in our settings. The proof is mostly identical to its original version, with slight modifications such as the presence of the signature matrix $\eye_{p,q}$ and the different definition of the orthogonal alignment matrix $\bW$. We omit the proof and use the theorem directly. 
\end{remark}

\begin{lemma}[Theorem 1 in \cite{pena-montgomery-1995-decoupling}]
\label{lemma:decoupling-u-statistics}
Let $\{X_i\}$ be a sequence of independent random variables on a measurable $(S,\mathscr{S})$ space and let $\{X_i^{(1)}\}$, $\{X_i^{(2)}\}$ be two independent copies of $\{X_i\}$. Let $f_{i_1i_2}$ be families of functions of two variables taking $(S\times S)$ into a Banach space $(B,\|\cdot\|)$. Then, for all $n\geq 2$, $t>0$, there exist a numerical constant $C$ such that
\[
\prob\bigg\{\bigg\Vert\sum_{1\leq i_1\neq i_2\leq n}f_{i_1i_2}(X_{i_1}^{(1)},X_{i_2}^{(1)})\bigg\Vert\geq t\bigg\}
\leq C\prob\bigg\{C\bigg\Vert\sum_{1\leq i_1\neq i_2\leq n}f_{i_1i_2}(X_{i_1}^{(1)},X_{i_2}^{(2)})\bigg\Vert\geq t\bigg\}.
\]
\end{lemma}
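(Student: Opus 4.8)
The plan is to establish this tail decoupling inequality by the classical randomization-by-partition argument underlying \cite{pena-montgomery-1995-decoupling}, which converts the coupled U-statistic into a conditionally decoupled one at the cost of absolute constants. First I would introduce auxiliary i.i.d.\ $\mathrm{Bernoulli}(1/2)$ selectors $\eta_1,\ldots,\eta_n$, independent of both $\{X_i^{(1)}\}$ and $\{X_i^{(2)}\}$, inducing the random partition $I_1=\{i:\eta_i=1\}$ and $I_2=\{i:\eta_i=0\}$. Writing $S=\sum_{1\le i_1\neq i_2\le n}f_{i_1i_2}(X_{i_1}^{(1)},X_{i_2}^{(1)})$ for the coupled statistic, I would consider the cross-partition partial sum
\[
T_\eta=\sum_{1\le i_1\neq i_2\le n}\eta_{i_1}(1-\eta_{i_2})f_{i_1i_2}(X_{i_1}^{(1)},X_{i_2}^{(1)}).
\]
Since $\expect_\eta[\eta_{i_1}(1-\eta_{i_2})]=1/4$ whenever $i_1\neq i_2$, averaging over the selectors gives the exact identity $\expect_\eta[T_\eta]=S/4$.

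Two structural observations then carry the argument. The first is \emph{conditional decoupling}: for a fixed value of $\eta$, only terms with $i_1\in I_1$ and $i_2\in I_2$ survive, so $T_\eta$ depends on $\{X_i^{(1)}:i\in I_1\}$ through its first arguments and on $\{X_i^{(1)}:i\in I_2\}$ through its second arguments; because $I_1\cap I_2=\emptyset$ and the $X_i^{(1)}$ are independent, these two blocks are independent, and replacing the second block by the independent copy leaves the conditional law unchanged, i.e.\ $T_\eta\overset{d}{=}\widetilde T_\eta:=\sum_{i_1\neq i_2}\eta_{i_1}(1-\eta_{i_2})f_{i_1i_2}(X_{i_1}^{(1)},X_{i_2}^{(2)})$ given $\eta$. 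The second is \emph{Jensen's inequality}: by convexity of the norm, $\|S\|=4\,\|\expect_\eta T_\eta\|\le 4\,\expect_\eta\|T_\eta\|$, so that $\{\|S\|\ge t\}\subseteq\{\expect_\eta\|T_\eta\|\ge t/4\}$ and therefore $\prob(\|S\|\ge t)\le\prob(\expect_\eta\|T_\eta\|\ge t/4)$.

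The hard part is the final passage from the $\eta$-averaged quantity $\expect_\eta\|T_\eta\|$ to the tail of the single fully decoupled statistic $\widetilde S=\sum_{i_1\neq i_2}f_{i_1i_2}(X_{i_1}^{(1)},X_{i_2}^{(2)})$; this is where the genuine content of the inequality, and the universal constant $C$, resides, since I must show $\prob(\expect_\eta\|T_\eta\|\ge t/4)\le C\,\prob(C\|\widetilde S\|\ge t)$. I would carry this out by a conditional re-randomization of the selectors, exploiting that $\{X_i^{(1)}\}$ and $\{X_i^{(2)}\}$ are now independent of $\eta$, together with a contraction/comparison estimate for Rademacher-type weighted sums, so that the selector weights $\eta_{i_1}(1-\eta_{i_2})$ can be stripped from $\widetilde T_\eta$ up to constants. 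Combining this bound with the two observations above then yields the stated inequality. The principal obstacle is precisely this last step: the remaining steps are either exact identities or an application of Jensen, whereas dominating the averaged norm by the tail of the unweighted decoupled sum is the substantive probabilistic estimate that produces $C$.
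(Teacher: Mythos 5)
The paper itself offers no proof of this lemma: it is imported verbatim as Theorem 1 of de la Peña and Montgomery-Smith (1995), so the only meaningful comparison is with the proof in that reference. Your opening moves do reproduce that proof's first stage correctly: the Bernoulli selectors $\eta_i$, the identity $\expect_\eta[\eta_{i_1}(1-\eta_{i_2})]=1/4$ giving $\expect_\eta T_\eta = S/4$, and the conditional decoupling $T_\eta \overset{d}{=} \widetilde T_\eta$ for each fixed $\eta$ are all exactly as in the original argument.

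However, the step you flag as ``the hard part'' is a genuine gap, and the route you propose for it would fail, for two concrete reasons. First, after the Jensen reduction you are holding the random variable $\expect_\eta\|T_\eta\|$, a function of $X^{(1)}$ alone, and your plan implicitly swaps $T_\eta$ for $\widetilde T_\eta$ inside that $\eta$-average; this is not justified, because the identity $T_\eta \overset{d}{=} \widetilde T_\eta$ holds for each fixed $\eta$ \emph{separately}, not jointly in $\eta$ (e.g.\ $T_\eta$ and $T_{1-\eta}$ are built from the same copy $X^{(1)}$ and are strongly dependent, while $\widetilde T_\eta$ and $\widetilde T_{1-\eta}$ are not), so $\expect_\eta\|T_\eta\|$ and $\expect_\eta\|\widetilde T_\eta\|$ need not be equal in law. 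Second, a contraction/comparison principle for Rademacher-type weighted sums compares expectations of convex functions of weighted and unweighted sums; it does not compare tail probabilities, and, more importantly, what is needed here is a \emph{lower} bound on the $\eta$-probability that the weighted sum is large, which contraction never supplies. The original proof replaces both steps by an anticoncentration argument: fix the $X$'s, choose a norming functional $\xi\in B^{*}$ with $\|\xi\|\le 1$ and $\xi(S)=\|S\|$, note that $\phi(\eta)=\xi(T_\eta)$ is a degree-two polynomial in the selectors with $\expect_\eta\phi = \|S\|/4$, and invoke the Paley--Zygmund inequality together with hypercontractivity of low-degree Bernoulli chaos (comparability of the $L^1$ and $L^2$ norms of $\phi-\expect_\eta\phi$) to conclude $\prob_\eta\{\|T_\eta\|\ge \|S\|/4\}\ge c_0$ for a universal $c_0>0$. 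Integrating this over the event $\{\|S\|\ge t\}$ gives $c_0\,\prob\{\|S\|\ge t\}\le \prob\{\|T_\eta\|\ge t/4\}$, at which point the fixed-$\eta$ distributional identity may legitimately be applied under Fubini; a final, also nontrivial, argument dominating the randomly restricted decoupled sum $\widetilde T_\eta$ by the full decoupled sum $\widetilde S$ (via ghost copies, conditioning, and Jensen) produces the outer constant $C$. Without the anticoncentration lemma and this last domination step, your sketch establishes only the elementary identities and leaves the substantive inequality unproven.
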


\begin{lemma}[A weak law of large numbers]
\label{lemma:grdpg-inid-wlln}
Suppose Assumption \ref{assumption:esl-grdpg} holds.
Let
\[
Z\equiv Z(\bA) =
\sum_{i=1}^n\bigg\Vert \frac{1}{n}\sum_{j=1}^n \frac{(A_{ij}-p_{0ij})\bG_{0in}\inverse\rho_n\halfpower\eye_{p,q}\bx_{0j}}{p_{0ij}(1-p_{0ij})} \bigg\Vert_2^2,
\]
then $Z=\expect[Z]+o_{\prob}(1)$, where $\expect[Z]=(1/n)\sum_{i=1}^n\mathrm{tr}(\bG_{0in}\inverse)$.
\end{lemma}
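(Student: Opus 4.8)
The plan is to prove this weak law by a second-moment (Chebyshev) argument, exploiting the fact that $Z$ is a quadratic form in the independent centered Bernoulli variables $\bar{A}_{ij} := A_{ij} - p_{0ij}$, $1 \le i \le j \le n$. Writing $w_{ij} = \{p_{0ij}(1-p_{0ij})\}\inverse$ and $\bs_i = (1/n)\sum_{j=1}^n \bar{A}_{ij}\, w_{ij}\, \bG_{0in}\inverse \rho_n\halfpower \eye_{p,q}\bx_{0j}$, we have $Z = \sum_{i=1}^n \|\bs_i\|_2^2$. For the mean, since $\bar{A}_{ij}$ and $\bar{A}_{ik}$ are independent and mean-zero for $j \ne k$, only the diagonal $j=k$ terms survive the expectation, giving $\expect\|\bs_i\|_2^2 = (1/n^2)\sum_j w_{ij}\,\rho_n\, \bx_{0j}\transpose\eye_{p,q}\bG_{0in}^{-2}\eye_{p,q}\bx_{0j}$. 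Applying the trace identity $\bx_{0j}\transpose\eye_{p,q}\bG_{0in}^{-2}\eye_{p,q}\bx_{0j} = \mathrm{tr}(\bG_{0in}^{-2}\eye_{p,q}\bx_{0j}\bx_{0j}\transpose\eye_{p,q})$ and recognizing the Fisher information $\bG_{0in} = (1/n)\sum_j w_{ij}\,\rho_n\,\eye_{p,q}\bx_{0j}\bx_{0j}\transpose\eye_{p,q}$ collapses the sum to $\expect\|\bs_i\|_2^2 = (1/n)\mathrm{tr}(\bG_{0in}\inverse)$, so $\expect[Z] = (1/n)\sum_i \mathrm{tr}(\bG_{0in}\inverse)$ as claimed. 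Throughout, Assumption \ref{assumption:esl-grdpg}(b),(c) guarantees $w_{ij} \asymp \rho_n\inverse$ and that $\bG_{0in}$ has eigenvalues bounded above and below by constants depending on $\delta,\lambda$, which pins down the order of every factor.

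Next I would split $Z = Z_1 + Z_2$ into its diagonal part $Z_1 = (1/n^2)\sum_i\sum_j \bar{A}_{ij}^2 w_{ij}^2\, c_{ijj}$ and its off-diagonal part $Z_2 = (1/n^2)\sum_i \sum_{j \ne k} \bar{A}_{ij}\bar{A}_{ik} w_{ij}w_{ik} c_{ijk}$, where $c_{ijk} = \rho_n\bx_{0j}\transpose\eye_{p,q}\bG_{0in}^{-2}\eye_{p,q}\bx_{0k} \asymp \rho_n$, noting $\expect[Z]=\expect[Z_1]$ and $\expect[Z_2]=0$. For $Z_1$, the summands $\bar{A}_{ij}^2$ indexed by distinct edges $\{i,j\}$ are independent with $\var(\bar{A}_{ij}^2) \asymp \rho_n$, so a direct variance computation gives $\var(Z_1) \asymp n^2 \cdot n^{-4}\cdot \rho_n^{-4}\cdot\rho_n^2\cdot\rho_n = (n^2\rho_n)\inverse$, which tends to $0$ because Assumption \ref{assumption:esl-grdpg}(d) forces $n\rho_n \to \infty$; hence $Z_1 - \expect[Z_1] = o_{\prob}(1)$ by Chebyshev.

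The crux is to show $\expect[Z_2^2] = \var(Z_2) \to 0$. Expanding, $\expect[Z_2^2]$ is a sum over $(i,j,k)$ and $(i',j',k')$ of $\expect[\bar{A}_{ij}\bar{A}_{ik}\bar{A}_{i'j'}\bar{A}_{i'k'}]$ times bounded weights; since the $\bar{A}$'s over distinct edges are independent and centered, a term survives only when the four edges $\{i,j\},\{i,k\},\{i',j'\},\{i',k'\}$ pair up into two distinct edges each appearing twice. The key combinatorial observation is that the first two edges both contain $i$ while the last two both contain $i'$, so a surviving pairing with $i \ne i'$ would force two distinct edges to each equal $\{i,i'\}$, a contradiction; hence only $i'=i$ with $\{j',k'\}=\{j,k\}$ contributes (the residual self-loop configurations, $O(n^2)$ in number, are of strictly lower order). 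This reduces the fourth-moment sum to $\expect[Z_2^2] \asymp n^{-4}\sum_i\sum_{j\ne k} w_{ij}^2 w_{ik}^2 c_{ijk}^2\,\var(A_{ij})\var(A_{ik}) \asymp n^{-4}\cdot n^3\cdot\rho_n^{-4}\cdot\rho_n^2\cdot\rho_n^2 = n\inverse \to 0$. Combining, $Z-\expect[Z] = (Z_1-\expect[Z_1]) + Z_2 \to 0$ in $L^2$, hence in probability. The \emph{main obstacle} is precisely this second-moment analysis of $Z_2$: the symmetry constraint $A_{ij}=A_{ji}$ couples $\bs_i$ and $\bs_{i'}$ through the shared edge $\{i,i'\}$, and one must verify carefully that this dependence produces no surviving cross terms in $\expect[Z_2^2]$; should the combinatorial bookkeeping prove delicate, the decoupling inequality of Lemma \ref{lemma:decoupling-u-statistics} offers an alternative route by reducing the degenerate U-statistic $Z_2$ to one built from independent copies of the adjacency entries.
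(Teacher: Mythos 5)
Your proof is correct, but it takes a genuinely different route from the paper's on the crucial step. Both arguments start from the same decomposition of $Z$ into a diagonal part and an off-diagonal (cross-term) part, and both compute $\expect[Z]$ the same way, collapsing $\bx_{0j}\transpose\eye_{p,q}\bG_{0in}^{-2}\eye_{p,q}\bx_{0j}$ via the trace identity against the definition of $\bG_{0in}$. The divergence is in how the two pieces are controlled. The paper handles the diagonal part with Bernstein's inequality and the cross terms with the P\~{n}a--Montgomery decoupling inequality (its Lemma on decoupling of U-statistics) followed by a conditional Bernstein bound, yielding quantitative high-probability rates, $O(\sqrt{(\log n)/(n^2\rho_n)})$ and $O(\sqrt{(\log n)/(n\rho_n)})$ with probability $1-n^{-c}$, consistent with the high-probability framework used throughout the paper. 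You instead run a pure second-moment (Chebyshev) argument: a direct variance bound $\var(Z_1)\asymp(n^2\rho_n)^{-1}$ for the diagonal part, and for $Z_2$ a fourth-moment expansion whose combinatorics you resolve correctly --- the key observation that a surviving pairing with $i\neq i'$ would force the two distinct edges $\{i,j\},\{i,k\}$ to both equal $\{i,i'\}$ is exactly right, and it yields $\expect[Z_2^2]\lesssim n^{-1}$. Your route is more elementary and self-contained (no decoupling lemma, no Orlicz or exponential-inequality machinery), and it handles the symmetry $A_{ij}=A_{ji}$ transparently at the level of unordered edges, whereas the paper's decoupling step indexes the noise by ordered pairs and is somewhat more delicate on this point; what you give up is the rate and the $1-n^{-c}$ probability guarantee, obtaining only the $o_{\prob}(1)$ statement --- which is, however, all the lemma asserts and all that its application in Theorem \ref{thm:mesle} requires.
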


\begin{proof}
Recall $p_{0ij}=\rho_n\bx_{0i}\transpose\eye_{p,q}\bx_{0j}$ and
$\bG_{0in} = (1/n)\sum_{j=1}^n {\rho_n\eye_{p,q}\bx_{0j}\bx_{0j}\transpose\eye_{p,q}}/\{p_{0ij}(1-p_{0ij})\}$.
For $i,j\in[n]$, let $E_{ij}=A_{ij}-p_{0ij}$, and
$\bgamma_{ij} = {\bG_{0in}\inverse\rho_n\halfpower\eye_{p,q}\bx_{0j}}/\{n p_{0ij}(1-p_{0ij})\}$,
then we can write
\begin{align*}
Z &= \sum_{i=1}^n \bigg\Vert\sum_{j=1}^n E_{ij}\bgamma_{ij}\bigg\Vert_2^2 
= \sum_{i=1}^n\sum_{j=1}^n E_{ij}^2\|\bgamma_{ij}\|_2^2 + \sum_{i=1}^n\sum_{a=1}^n\sum_{b=1}^n E_{ia}E_{ib}\bgamma_{ia}\transpose\bgamma_{ib}\one(a\neq b).
\end{align*}
We have $\expect[E_{ij}]=0$, $\expect[E_{ij}^2]=p_{0ij}(1-p_{0ij})$, and $\|\bgamma_{ij}\|_2\asymp_{\delta,\lambda}1/(n\rho_n^{0.5})$ by assumption and the result that $\|\bG_{0in}\inverse\|_2\asymp_{\delta,\lambda}1$ which is shown in \eqref{eqn:fisher-information-order} on page \pageref{eqn:fisher-information-order} (in the proof of asymptotic normality part of Theorem \ref{thm:mesle}). By Bernstein's inequality,
\begin{align*}
&\prob\bigg\{\bigg|\sum_{i=1}^n\sum_{j=1}^n E_{ij}^2\|\bgamma_{ij}\|_2^2 - \expect\bigg[\sum_{i=1}^n\sum_{j=1}^n E_{ij}^2\|\bgamma_{ij}\|_2^2\bigg]\bigg| \geq t\bigg\}  \\
&\quad\leq 2\exp\bigg\{ \frac{-3t^2}{6\sum_{i=1}^n\sum_{j=1}^n\var(E_{ij}^2\|\bgamma_{ij}\|_2^2) + 2\max_{i,j\in[n]}\|\bgamma_{ij}\|_2^2t} \bigg\} 
\leq 2\exp\bigg\{\frac{-n^2\rho_nt}{C_{\delta,\lambda}+C_{\delta,\lambda}t}\bigg\},
\end{align*}
from which by taking $t=C_{c,\delta,\lambda}\sqrt{(\log n)/n^2\rho_n}$ we have
\[
\bigg|\sum_{i=1}^n\sum_{j=1}^n E_{ij}^2\|\bgamma_{ij}\|_2^2 - \expect\bigg[\sum_{i=1}^n\sum_{j=1}^n E_{ij}^2\|\bgamma_{ij}\|_2^2\bigg]\bigg|
\lesssim_{c,\delta,\lambda} \sqrt{\frac{\log n}{n^2\rho_n}}
\]
with probability at least $1-n^{-c}$.
We then use Lemma \ref{lemma:decoupling-u-statistics} to deal with the sum of cross terms. Consider the sequence of random variables $\{E_{(i,j)}:(i,j)\in[n]^2\}$, with two independent copies $\{E_{ij}\}$ and $\{\bar{E}_{ij}\}$, and equip the index set $[n]^2$ with the lexicographic order, i.e., for $x=(x_1,x_2)$ and $y=(y_1,y_2)$ in $[n]^2$, we have $x<y$ if either $x_1<y_1$ or $x_1=y_1$ and $x_2<y_2$, and we have $x=y$ if $x_1=y_1$ and $x_2=y_2$. And consider the family of functions $f_{(i_1,a),(i_2,b)}(E_{(i_1,a)},E_{(i_2,b)})=E_{(i_1,a)}E_{(i_2,b)}\bgamma_{(i_1,a)}\transpose\bgamma_{(i_2,b)}\one(i_1=i_2)$. Then
\begin{align*}
\sum_{i=1}^n\sum_{a=1}^n\sum_{b=1}^n E_{ia}E_{ib}\bgamma_{ia}\transpose\bgamma_{ib}\one(a\neq b) 
= \sum_{(1,1)\leq(i_1,a)\neq(i_2,b)\leq(n,n)}f_{(i_1,a),(i_2,b)}(E_{(i_1,a)},E_{(i_2,b)}).
\end{align*}
By Lemma \ref{lemma:decoupling-u-statistics}, conditional probability, and Bernstein's inequality,
\begin{align*}
&\prob\bigg\{\bigg| \sum_{i=1}^n\sum_{a=1}^n\sum_{b=1}^n E_{ia}E_{ib}\bgamma_{ia}\transpose\bgamma_{ib}\one(a\neq b) \bigg|\geq t\bigg\}  \\
&\quad\leq C\prob\bigg\{C\bigg| \sum_{(1,1)\leq(i_1,a)\neq(i_2,b)\leq(n,n)}f_{(i_1,a),(i_2,b)}(E_{(i_1,a)},\bar{E}_{(i_2,b)}) \bigg|\geq t\bigg\}  \\
&\quad\leq \expect\bigg[2C\exp\bigg\{ \frac{-3t^2}{6C^2\sum_{i,a,b}|\bar{E}_{ib}|^2|\bgamma_{ia}\transpose\bgamma_{ib}|^2\var(E_{ia}) + 2C\max_{i,a,b}|\bar{E}_{ib}\bgamma_{ia}\transpose\bgamma_{ib}|t} \bigg\}\bigg]  
\\&\quad
\leq 2C\exp\bigg\{ \frac{-n\rho_nt^2}{C_{\delta,\lambda} + C_{\delta,\lambda}t/n} \bigg\},
\end{align*}
from which by taking $t=C_{c,\delta,\lambda}\sqrt{(\log n)/n\rho_n}$ we have
\[
\bigg| \sum_{i=1}^n\sum_{a=1}^n\sum_{b=1}^n E_{ia}E_{ib}\bgamma_{ia}\transpose\bgamma_{ib}\one(a\neq b) \bigg|
\lesssim_{c,\delta,\lambda} \sqrt{\frac{\log n}{n\rho_n}}
\]
with probability at least $1-n^{-c}$. Note that
$\expect[Z] = \expect[\sum_{i=1}^n\sum_{j=1}^n E_{ij}^2\|\bgamma_{ij}\|_2^2]$  by independence of $E_{ij}$.
So by combining several previous results, we have
\begin{align*}
Z &= \sum_{i=1}^n\sum_{j=1}^n E_{ij}^2\|\bgamma_{ij}\|_2^2 + \sum_{i=1}^n\sum_{a=1}^n\sum_{b=1}^n E_{ia}E_{ib}\bgamma_{ia}\transpose\bgamma_{ib}\one(a\neq b)  
= \expect[Z] + o_{\prob}(1).
\end{align*}
Finally, a simple algebra shows that
\begin{align*}
\expect[Z]
&= \frac{1}{n^2}\sum_{i=1}^n\sum_{j=1}^n \frac{\rho_n\bx_{0j}\transpose\eye_{p,q}\bG_{0in}^{-2}\eye_{p,q}\bx_{0j}}{p_{0ij}(1-p_{0ij})}  
= \frac{1}{n^2}\sum_{i=1}^n \mathrm{tr}\left\{\sum_{j=1}^n \frac{\rho_n\bx_{0j}\transpose\eye_{p,q}\bG_{0in}^{-2}\eye_{p,q}\bx_{0j}}{p_{0ij}(1-p_{0ij})}\right\}  \\
&= \frac{1}{n}\sum_{i=1}^n \mathrm{tr}\left\{\frac{1}{n}\sum_{j=1}^n \frac{\rho_n\eye_{p,q}\bx_{0j}\bx_{0j}\transpose\eye_{p,q}}{p_{0ij}(1-p_{0ij})}\bG_{0in}^{-2}\right\}  
= \frac{1}{n}\sum_{i=1}^n\mathrm{tr}\left\{\bG_{0in}\inverse\right\}.
\end{align*}
\end{proof}

\section{Proofs of the Main Results}

\subsection{Proof of Theorem \ref{thm:mesle}}

\begin{proof}
 For simplicity of notation, in the proof of this theorem, the large probability bounds with probability at least $1 - n^{-c}$ are stated with respect to all $n\geq{}{N}_{c,\delta,\lambda}$ for some large constant integer ${N}_{c,\delta,\lambda}$ that depends on $c,\delta,\lambda$, where $c>0$ is an arbitrary positive constant. Also, the results that hold for a single $i\in[n]$ with probability at least $1-n^{-c}$ can be strengthened to hold for all $i\in[n]$ by taking a union bound over $i\in[n]$.

\noindent{}
\textit{Proof of existence and uniqueness.} Note that we have the average ESL function for vertex $i$
\[
\frac{1}{n}\widehat\ell_{in}(\bx_i) = \frac{1}{n}\sum_{j=1}^n \left\{A_{ij}\psi_n(\bx_i\transpose\widetilde\bx_j) + (1-A_{ij})\psi_n(1-\bx_i\transpose\widetilde\bx_j)\right\},
\]
and define the population counterpart of the average of the ESL function
\[
M_{in}(\bx_i) = \frac{1}{n}\sum_{j=1}^n \left\{p_{0ij}\psi_n(\rho_n\halfpower\bx_i\transpose\eye_{p,q}\bx_{0j}) + (1-p_{0ij})\psi_n(1-\rho_n\halfpower\bx_i\transpose\eye_{p,q}\bx_{0j})\right\}.
\]
Note that $\psi_n''(t)\in[-\tau_n^{-2}, -1]$ for all $t\in\mathbb{R}$. Therefore,
\begin{align*}
-\frac{\partial^2 M_{in}}{\partial\bx_i\partial\bx_i\transpose}(\bx_i)
&= -\frac{1}{n}\sum_{j=1}^n \left\{p_{0ij}\psi_n''(\rho_n\halfpower\bx_i\transpose\eye_{p,q}\bx_{0j}) + (1-p_{0ij})\psi_n''(1 - \rho_n\halfpower\bx_i\transpose\eye_{p,q}\bx_{0j})\right\}\\ 
&\quad\times \rho_n\eye_{p,q}\bx_{0j}\bx_{0j}\transpose\eye_{p,q} \\
&\succeq \frac{1}{n}\sum_{j=1}^n \rho_n\eye_{p,q}\bx_{0j}\bx_{0j}\transpose\eye_{p,q} 
\succeq \rho_n\lambda_d\bigg(\frac{1}{n}\sum_{j=1}^n \eye_{p,q}\bx_{0j}\bx_{0j}\transpose\eye_{p,q}\bigg) 
\succeq \lambda\rho_n\eye_d,\\
-\frac{1}{n}\frac{\partial^2\widehat\ell_{in}}{\partial\bx_i\partial\bx_i\transpose}(\bx_i)
&= -\frac{1}{n} \sum_{j=1}^n \left\{A_{ij}\psi_n''(\bx_i\transpose\widetilde\bx_j) + (1-A_{ij})\psi_n''(1 -\bx_i\transpose\widetilde\bx_j) \right\} \widetilde\bx_j\widetilde\bx_j\transpose \\
&\succeq \frac{1}{n} \sum_{j=1}^n \widetilde\bx_j\widetilde\bx_j\transpose 
\succeq \lambda_d\left(\frac{1}{n} \sum_{j=1}^n \widetilde\bx_j\widetilde\bx_j\transpose\right)\eye_d,
\end{align*}
in which by Theorem 5.2 in \cite{lei-rinaldo-2015-sbm} and Weyl's inequality,
\[
\lambda_d\left(\frac{1}{n} \sum_{j=1}^n \widetilde\bx_j\widetilde\bx_j\transpose\right)
= \lambda_d\left(\frac{1}{n} \widetilde\bX\transpose\widetilde\bX\right)
= \sigma_d(\frac{1}{n} \bA)
\geq \frac{1}{2}\sigma_d\left(\frac{1}{n}  \rho_n\bX_0\bX_0\transpose \right)
\geq \frac{1}{2}\lambda\rho_n
> 0
\]
with probability at least $1-n^{-c}$, so we have
\begin{equation}
\label{eqn:hessian-strongly-convex-bound}
-\frac{1}{n}\frac{\partial^2\widehat\ell_{in}}{\partial\bx_i\partial\bx_i\transpose}(\bx_i)
\succeq \frac{1}{2}\lambda\rho_n\eye_d
\end{equation}
for all $\bx_i\in\mathbb{R}^d$ with probability at least $1-n^{-c}$, i.e., $(1/n)\widehat\ell_{in}(\bx_i)$ is strongly concave over $\mathbb{R}^d$ with probability at least $1-n^{-c}$. This implies that $\argmax_{\bx_i\in\mathbb{R}^d}\widehat{\ell}_{in}(\bx_i)$ exists and is unique because $\widehat{\ell}_{in}(\bx_i)$ is clearly bounded from above.

\noindent{} 
Next, we let $\widehat{\bx}_i = \argmax_{\|\bx_i\|_2\leq \rho_n^{1/2}}\widehat{\ell}_{in}(\bx_i)$ be the local maximizer of $\widehat{\ell}_{in}(\bx_i)$ in the closed ball $\{\bx_i\in\mathbb{R}^d:\|\bx_i\|_2\leq \rho_n\halfpower\}$.
In $\{\bx_i:\,\|\bx_i\|_2\leq \rho_n\halfpower\}$,
\begin{align*}
-\frac{\partial^2 M_{in}}{\partial\bx_i\partial\bx_i\transpose}(\bx_i)
&= \frac{1}{n}\sum_{j=1}^n \left\{
-p_{0ij}\psi_n''(\rho_n^{1/2}\bx_i\transpose\eye_{p, q}\bx_{0j}) - (1 - p_{0ij})\psi_n''(1 - \rho_n^{1/2}\bx_i\transpose\eye_{p, q}\bx_{0j})
\right\}\\ 
&\quad\times\rho_n\eye_{p,q}\bx_{0j}\bx_{0j}\transpose\eye_{p,q} \\
&\succeq \frac{1}{n}\sum_{j=1}^n \frac{\delta\rho_n}{(1-\delta)\rho_n^2}\rho_n\eye_{p,q}\bx_{0j}\bx_{0j}\transpose\eye_{p,q} 
\succeq \frac{\delta}{1-\delta}\lambda_d\left(\frac{1}{n}\sum_{j=1}^n \eye_{p,q}\bx_{0j}\bx_{0j}\transpose\eye_{p,q}\right)\eye_d 
\\
&
= \frac{\delta\lambda}{1-\delta}\eye_d.
\end{align*}

\noindent{}It is easy to see that $\widehat\ell_{in}(\bx_i)$ is continuous over $\mathbb{R}^d$, so there exists a maximizer of $\widehat\ell_{in}(\bx_i)$ in $\{\bx_i:\|\bx_i\|_2\leq\rho_n\halfpower\}$ which is a compact set. Let $\widehat\bx_i$ denote this maximizer of $\widehat\ell_{in}(\bx_i)$ in $\{\bx_i:\|\bx_i\|_2\leq\rho_n\halfpower\}$, and over this set, by Taylor's theorem,
\begin{align*}
&M_{in}(\rho_n\halfpower\bx_{0i}) - M_{in}(\bW\transpose\widehat\bx_i)
\geq \frac{\delta\lambda}{1-\delta}\left\Vert \bW\transpose\widehat\bx_i - \rho_n\halfpower\bx_{0i} \right\Vert_2^2,
\end{align*}
and also by Taylor's theorem and by Lemma \ref{lemma:concentration-gradient}
\begin{align*}
& M_{in}(\rho_n\halfpower\bx_{0i}) - M_{in}(\bW\transpose\widehat\bx_i) \\
&\quad \leq \left\Vert \frac{1}{n}\bW\transpose\frac{\partial\widehat\ell_{in}}{\partial\bx_i}(\bW\bar\bx_i) - \frac{\partial M_{in}}{\partial\bx_i}(\bar\bx_i) \right\Vert_2 \cdot \left\Vert \bW\transpose\widehat\bx_i - \rho_n\halfpower\bx_{0i} \right\Vert_2
\lesssim_{c,\delta,\lambda} \rho_n\halfpower\sqrt{\frac{\log n}{n}}
\end{align*}
for all $i\in[n]$ with probability at least $1-n^{-c}$. Combining the two inequalities above, we have
\[
\max_{i\in[n]} \left\Vert \bW\transpose\widehat\bx_i - \rho_n\halfpower\bx_{0i} \right\Vert_2
\lesssim_{c,\delta,\lambda} (\rho_n)^{1/4}\left(\frac{\log n}{n}\right)^{1/4}
\]
with probability at least $1-n^{-c}$. Then by taking $N_{c,\delta,\lambda}$ large enough such that\\ 
$C_{c,\delta,\lambda}((\log n)/(n\rho_n))^{1/4}<1-\delta/2-\sqrt{1-\delta}$ for all $n\geq N_{c,\delta,\lambda}$, we have
\begin{align*}
\max_{i\in[n]} \|\widehat\bx_i\|_2 
&\leq \max_{i\in[n]} \left\Vert \bW\transpose\widehat\bx_i - \rho_n\halfpower\bx_{0i} \right\Vert_2 + \max_{i\in[n]} \|\rho_n\halfpower\bx_{0i}\|_2  
< (1-\delta/2)\rho_n\halfpower,
\end{align*}
which implies that $\widehat\bx_i$ lies in the interior of the ball $\{\bx_i:\|\bx_i\|_2\leq\rho_n\halfpower\}$, i.e., $\widehat\bx_i$ is a local maximizer, with probability at least $1-n^{-c}$. Since $\widehat\ell_{in}(\bx_i)$ is strongly concave over $\mathbb{R}^d$ with probability at least $1-n^{-c}$, the local maximizer $\widehat\bx_i$ is the unique global maximizer with probability at least $1-n^{-c}$. This implies that, for all $i\in[n]$ with probability at least $1-n^{-c}$, $\|\widehat{\bx}_i\|_2 < (1-\delta/2)\rho_n^{1/2}$.

\noindent\textit{Proof of Consistency.} Over this event that happens with probability at least $1-n^{-c}$, we have $\|\widehat\bx_i\|_2<(1-\delta/2)\rho_n\halfpower$, and $\widehat\bx_i$ is the unique global maximizer and thus $(\partial\widehat\ell_{in})/(\partial\bx_i)(\widehat\bx_i)=\zero_d$.
\noindent{}Now, by mean value theorem,
\[
\frac{\partial M_{in}}{\partial\bx_i}(\bW\transpose\widehat\bx_i)
= \frac{\partial^2 M_{in}}{\partial\bx_i\partial\bx_i\transpose}(\bar\bx_i) (\bW\transpose\widehat\bx_i-\rho_n\halfpower\bx_{0i})
\]
where $\bar\bx_i = \theta\bW\transpose\widehat\bx_i + (1-\theta)\rho_n\halfpower\bx_{0i}$ for some $\theta\in(0,1)$, then it follows that
\begin{align*}
\left\| \frac{\partial M_{in}}{\partial\bx_i}(\bW\transpose\widehat\bx_i) \right\|_2
&= \left\| \frac{\partial^2 M_{in}}{\partial\bx_i\partial\bx_i\transpose}(\bar\bx_i) (\bW\transpose\widehat\bx_i-\rho_n\halfpower\bx_{0i}) \right\|_2 
\geq \frac{\delta\lambda}{1-\delta} \left\| \bW\transpose\widehat\bx_i-\rho_n\halfpower\bx_{0i} \right\|_2,
\end{align*}
and by Lemma \ref{lemma:concentration-gradient},
\begin{align*}
&\left\| \frac{\partial M_{in}}{\partial\bx_i}(\bW\transpose\widehat\bx_i) \right\|_2\\
&\quad\leq \left\{\left\| \frac{\partial M_{in}}{\partial\bx_i}(\bW\transpose\widehat\bx_i) \right\|_2 - \left\| \frac{\partial M_{in}}{\partial\bx_i}(\rho_n\halfpower\bx_{0i}) \right\|_2\right. 
 + \left.\left\| \frac{\partial \widehat\ell_{in}}{\partial\bx_i}(\rho_n\halfpower\bW\bx_{0i}) \right\|_2 - \left\| \frac{\partial \widehat\ell_{in}}{\partial\bx_i}(\widehat\bx_i) \right\|_2\right\} \\
&\quad \leq 2 \max_{i\in[n]}\sup_{\|\bx_i\|_2\leq \rho_n\halfpower} \left\Vert \frac{1}{n}\bW\transpose\frac{\partial\widehat\ell_{in}}{\partial\bx_i}(\bW\bx_i) - \frac{\partial M_{in}}{\partial\bx_i}(\bx_i) \right\Vert_2 
\lesssim_{c,\delta,\lambda} \sqrt{\frac{\log n}{n}}
\end{align*}
with probability at least $1-n^{-c}$. Combining the two inequalities above, we have
\[
\max_{i\in[n]}\left\| \bW\transpose\widehat\bx_i-\rho_n\halfpower\bx_{0i} \right\|_2 \lesssim_{c,\delta,\lambda} \sqrt{\frac{\log n}{n}}
\]
with probability at least $1-n^{-c}$. The consistency is thus proved.

\noindent
\textit{Proof of Asymptotic normality.}
Let $\breve\bx_i$ denote the adjacency spectral embedding, $i\in[n]$, and let $\widetilde{p}_{ij}=\breve\bx_i\transpose\widetilde\bx_j$, $p_{0ij}=\rho_n\bx_{0i}\transpose\eye_{p,q}\bx_{0j}$, $i,j\in[n]$. Define the 
and its plug-in estimate of $\bG_{0in}$ with the adjacency spectral embedding by
$\widetilde\bG_{in} = (1/n)\sum_{j=1}^n {\widetilde\bx_j\widetilde\bx_j\transpose}/\{\widetilde{p}_{ij}(1-\widetilde{p}_{ij})\}$.
It is not hard to see that
\begin{equation}
\label{eqn:fisher-information-order}
\begin{split}
\lambda_1(\bG_{0in}) &\leq \frac{1}{\delta(1-\delta)}\lambda_1(\frac{1}{n}\bX_0\transpose\bX_0) \leq \frac{1}{n\delta(1-\delta)}\|\bX_0\|\frobenius^2 \leq \frac{1}{\delta}, \\
\lambda_d(\bG_{0in}) &\geq \frac{1}{1-\delta} \lambda_d(\frac{1}{n}\bX_0\transpose\bX_0) \geq \frac{\lambda}{1-\delta},
\end{split}
\end{equation}
i.e., $\bG_{0in}$ is a positive definite matrix with eigenvalues bounded away from $0$ and $\infty$, and by Lemma \ref{lemma:frequently-used-results}, Theorem 5.2 in \cite{lei-rinaldo-2015-sbm}, and Weyl's inequality, we also have
$\lambda_1(\widetilde\bG_{in})
\leq 2/\delta$ and $\lambda_d(\widetilde\bG_{in})\geq \lambda/\{2(1 - \delta/2)\}$,
i.e., $\widetilde\bG_{in}$ is a positive definite matrix with eigenvalues bounded away from $0$ and $\infty$ with probability at least $1-n^{-c}$. Now we have
\begin{align*}
\left\Vert\bW\transpose\widetilde\bG_{in}\bW - \bG_{0in}\right\Vert_2
&\leq \left\Vert \frac{1}{n}\sum_{j=1}^n \left\{\frac{1}{\widetilde{p}_{ij}(1-\widetilde{p}_{ij})} - \frac{1}{p_{0ij}(1-p_{0ij})}\right\}\bW\transpose\widetilde\bx_j\widetilde\bx_j\transpose\bW \right\Vert_2 \\
&\quad + \left\Vert \frac{1}{n}\sum_{j=1}^n \frac{1}{p_{0ij}(1-p_{0ij})}\left(\bW\transpose\widetilde\bx_j\widetilde\bx_j\transpose\bW - \rho_n\eye_{p,q}\bx_{0j}\bx_{0j}\transpose\eye_{p,q}\right)\right\Vert_2 \\
&\leq \max_{j\in[n]}\left\vert\frac{(1-2\bar{p}_{ij})(\widetilde{p}_{ij} - p_{0ij})}{\bar{p}_{ij}^2(1-\bar{p}_{ij})^2}\right\vert \cdot \max_{j\in[n]}\|\widetilde\bx_j\|_2^2  \\
&\quad + \max_{j\in[n]}\left\vert\frac{1}{p_{0ij}(1-p_{0ij})}\right\vert \cdot \max_{j\in[n]} \left\Vert \bW\transpose\widetilde\bx_j\widetilde\bx_j\transpose\bW - \rho_n\eye_{p,q}\bx_{0j}\bx_{0j}\transpose\eye_{p,q}\right\Vert_2 \\
&\lesssim_{c,\delta,\lambda} \frac{1}{\rho_n^2} \cdot \rho_n\halfpower\sqrt{\frac{\log n}{n}} \cdot \rho_n + \frac{1}{\rho_n} \cdot \rho_n\halfpower\sqrt{\frac{\log n}{n}} 
\asymp_{c,\delta,\lambda} \sqrt{\frac{\log n}{n\rho_n}}
\end{align*}
with probability at least $1-n^{-c}$, where the inequalities follows from triangle inequality, mean value theorem, Cauchy-Schwarz inequality, and Lemma \ref{lemma:frequently-used-results}. And we have
\[
\left\Vert\frac{1}{n}\bW\transpose\frac{\partial^2 \widehat\ell_{in}}{\partial\bx_i\partial\bx_i\transpose}(\breve\bx_i)\bW + \bG_{0in}\right\Vert_2
\lesssim_{c,\delta,\lambda} \sqrt{\frac{\log n}{n\rho_n}}
\]
with probability at least $1-n^{-c}$, by Theorem \ref{thm:signed-ase} (bound for $\breve\bx_i$) and Lemma \ref{lemma:concentration-hessian}. So, by triangle inequality and the previous two bounds, we have
\[
\left\Vert\frac{1}{n}\frac{\partial^2 \widehat\ell_{in}}{\partial\bx_i\partial\bx_i\transpose}(\breve\bx_i) + \widetilde\bG_{in}\right\Vert_2
\lesssim_{c,\delta,\lambda} \sqrt{\frac{\log n}{n\rho_n}}
\]
with probability at least $1-n^{-c}$.

\noindent{}Now, we show the asymptotic normality of the maximizer of the ESL function by showing that $\widehat\bx_i$ and $\widehat\bx_i^{\mathrm{(OS)}}$ are close enough and then applying Slutsky's theorem to utilize the asymptotic normality of the one-step estimator $\widehat\bx_i^{\mathrm{(OS)}}$. For each $k\in[d]$, apply Taylor's theorem to $(1/n)(\partial\ell_{in})/(\partial\bx_i)(\widehat\bx_i)=0$ at $\bx_i=\breve\bx_i$ to obtain
\begin{align*}
0 &= \frac{1}{n}\frac{\partial\widehat\ell_{in}}{\partial x_{ik}}(\breve\bx_i) + \frac{1}{n}\frac{\partial^2\widehat\ell_{in}}{\partial x_{ik}\partial\bx_i\transpose}(\breve\bx_i) (\widehat\bx_i - \breve\bx_i)
 +\frac{1}{2}(\widehat\bx_i - \breve\bx_i)\transpose \frac{1}{n}\frac{\partial^3\widehat\ell_{in}}{\partial x_{ik}\partial\bx_i\partial\bx_i\transpose}(\bar\bx_i) (\widehat\bx_i - \breve\bx_i),
\end{align*}
where $\bar\bx_i=\theta\widehat\bx_i + (1-\theta)\breve\bx_i$ for some $\theta\in(0,1)$. By triangle inequality, Theorem \ref{thm:signed-ase}, and the consistency result that has been shown above, $\|\bW\bar\bx_i - \rho_n\halfpower\bx_{0i}\|_2\lesssim_{c,\delta,\lambda}\sqrt{(\log n)/n}$ with probability at lease $1-n^{c}$.
It is easy to see that, over such an event,
\[
\frac{1}{n}\frac{\partial^3\widehat\ell_{in}}{\partial x_{ik}\partial\bx_i\partial\bx_i\transpose}(\bar{\bx}_i)
= \frac{2}{n} \sum_{j=1}^n \left\{ \frac{A_{ij}}{(\bar{\bx}_i\transpose\widetilde\bx_j)^3} - \frac{1-A_{ij}}{(1-\bar{\bx}_i\transpose\widetilde\bx_j)^3} \right\} \widetilde{x}_{ik}\widetilde\bx_j\widetilde\bx_j\transpose
\]
Then, by Lemma \ref{lemma:frequently-used-results}, Lemma \ref{lemma:results-with-Aij}, and Cauchy-Schwarz inequality,
\begin{align*}
\left\Vert\frac{1}{n}\frac{\partial^3\widehat\ell_{in}}{\partial x_{ik}\partial\bx_i\partial\bx_i\transpose}(\bar\bx_i)\right\Vert_2
&\leq \left\{\frac{2}{n}\|\bA\|_\infty \cdot \max_{j\in[n]}\frac{1}{(\bar\bx_i\transpose\widetilde\bx_j)^3} + 2\max_{j\in[n]}\frac{1}{(1-\bar\bx_i\transpose\widetilde\bx_j)^3}\right\} \cdot \max_{j\in[n]}\|\widetilde\bx_j\|_2^3 \\
&\lesssim_{c,\delta,\lambda} \left\{\frac{1}{n}n\rho_n \cdot \frac{1}{\rho_n^3} + 1\right\} \cdot \rho_n^{3/2} 
\asymp_{c,\delta,\lambda} \rho_n\invhalfpower,
\end{align*}
with probability at lease $1-n^{c}$, and also by triangle inequality, Theorem \ref{thm:signed-ase}, and the consistency result that has been shown above,
\begin{align*}
\|\widehat\bx_i - \breve\bx_i\|_2
&\leq \left\Vert\bW\widehat\bx_i-\rho_n\halfpower\bx_{0i}\right\Vert_2 + \left\Vert\bW\breve\bx_i-\rho_n\halfpower\bx_{0i}\right\Vert_2 
\lesssim_{c,\delta,\lambda} \sqrt{\frac{\log n}{n}},
\end{align*}
with probability at lease $1-n^{c}$.
So the Taylor expansion mentioned above can be written as
\[
\left(-\frac{1}{n}\frac{\partial^2\widehat\ell_{in}}{\partial\bx_i\partial\bx_i\transpose}(\breve\bx_i) + \bR_{in1} \right)(\widehat\bx_i - \breve\bx_i) = \frac{1}{n}\frac{\partial\widehat\ell_{in}}{\partial\bx_i}(\breve\bx_i),
\]
where $\bR_{in1}$ is a random matrix with $\|\bR_{in1}\|_2\lesssim_{c,\delta,\lambda}\sqrt{(\log n)/(n\rho_n)}$ with probability at lease $1-n^{c}$. By the approximation of $(1/n)(\partial^2\widehat\ell_{in})/(\partial\bx_i\partial\bx_i\transpose)(\breve\bx_i)$ and $\widetilde\bG_{in}$ that has been shown above, Lemma \ref{lemma:frequently-used-results} (for the large probability bounds on $\widetilde{p}_{ij}$), and the definition of $(1/n)(\partial\widehat\ell_{in})/(\partial\bx_i)(\breve\bx_i)$, we have
\[
\left(\widetilde\bG_{in} + \bR_{in2}\right)(\widehat\bx_i - \breve\bx_i)
= \frac{1}{n}\sum_{j=1}^n \frac{A_{ij}-\widetilde{p}_{ij}}{\widetilde{p}_{ij}(1-\widetilde{p}_{ij})}\widetilde\bx_j,
\]
where $\bR_{in2}$ is a random matrix with $\|\bR_{in2}\|_2\lesssim_{c,\delta,\lambda}\sqrt{(\log n)/(n\rho_n)}$ with probability at lease $1-n^{c}$. Now we have $\|\widetilde\bG_{in}\inverse\bR_{in2}\|_2\lesssim_{c,\delta,\lambda}\sqrt{(\log n)/(n\rho_n)}$ with probability at lease $1-n^{c}$.
Write
\begin{align*}
\widehat\bx_i - \breve\bx_i
&= \left(\eye_d + \widetilde\bG_{in}\inverse\bR_{in2}\right)\inverse \widetilde\bG_{in}\inverse\frac{1}{n}\sum_{j=1}^n \frac{A_{ij}-\widetilde{p}_{ij}}{\widetilde{p}_{ij}(1-\widetilde{p}_{ij})}\widetilde\bx_j  \\
&= \sum_{m=0}^\infty\left(-\widetilde\bG_{in}\inverse\bR_{in2}\right)^m (\widehat\bx_i^{\mathrm{(OS)}} - \breve\bx_i)  
= \widehat\bx_i^{\mathrm{(OS)}} - \breve\bx_i + \sum_{m=1}^\infty\left(-\widetilde\bG_{in}\inverse\bR_{in2}\right)^m (\widehat\bx_i^{\mathrm{(OS)}} - \breve\bx_i),  \\
\end{align*}
then write
\begin{align*}
\|\widehat\bx_i - \widehat\bx_i^{\mathrm{(OS)}}\|_2
\leq 
\frac{\|\widetilde\bG_{in}\inverse\|_2 \|\bR_{in2}\|_2}{1 - \|\widetilde\bG_{in}\inverse\|_2 \|\bR_{in2}\|_2} \|\widehat\bx_i^{\mathrm{(OS)}} - \breve\bx_i\|_2,
\end{align*}
and under the assumption that $(\log n)/(n\rho_n)\to 0$, by Theorem \ref{thm:signed-ase} and Theorem \ref{thm:ose-asymptotic-normality} (take $t=C_{c,\delta,\lambda}\log(n\rho_n)$), we have
\begin{align*}
\|\widehat\bx_i^{\mathrm{(OS)}} - \breve\bx_i\|_2
&\leq \|\bW\widehat\bx_i^{\mathrm{(OS)}} - \rho_n\halfpower\bx_{0i}\|_2 + \|\bW\breve\bx_i - \rho_n\halfpower\bx_{0i}\|_2 
\lesssim_{c,\delta,\lambda} \sqrt{\frac{\log n}{n}}
\end{align*}
with probability at least $1-(n\rho_n)^{-c}$, so
$\|\widehat\bx_i - \widehat\bx_i^{\mathrm{(OS)}}\|_2 \lesssim_{c,\delta,\lambda}\log n/(n\rho_n\halfpower)$ with probability at least $1-(n\rho_n)^{-c}$. By Theorem \ref{thm:ose-asymptotic-normality} and Slutsky's theorem, we have
\[
\bG_{0in}\halfpower(\bW\transpose\widehat\bx_i-\rho_n\halfpower\bx_{0i})
= \frac{1}{n}\sum_{j=1}^n \frac{(A_{ij}-p_{0ij})\bG_{0in}\invhalfpower\rho_n\halfpower\eye_{p,q}\bx_{0j}}{p_{0ij}(1-p_{0ij})} + \br_{in},
\]
where
\[
\|\br_{in}\|_2 \lesssim_{c,\delta,\lambda} \frac{\log n}{n\rho_n\halfpower} + \frac{(\log(n\rho_n))^2}{n\rho_n\halfpower}
\]
with probability at least $1-(n\rho_n)^{-c}$, and
$\sqrt{n}\bG_{0in}\halfpower(\bW\transpose\widehat\bx_i-\rho_n\halfpower\bx_{0i}) \overset{\calL}{\to} \mathrm{N}_d(\zero_d,\,\eye_d)$.

\noindent
\textit{Proof of consistency of global error.}
Under the assumption that $(\log n)^4/(n\rho_n)\to 0$, by Theorem \ref{thm:signed-ase} and Theorem \ref{thm:ose-asymptotic-normality} (take $t=C_{c,\delta,\lambda}\log n$ in Theorem \ref{thm:ose-asymptotic-normality}), we have
\begin{align*}
\|\widehat\bx_i^{\mathrm{(OS)}} - \breve\bx_i\|_2
&\leq \|\bW\widehat\bx_i^{\mathrm{(OS)}} - \rho_n\halfpower\bx_{0i}\|_2 + \|\bW\breve\bx_i - \rho_n\halfpower\bx_{0i}\|_2 
\lesssim_{c,\delta,\lambda} \sqrt{\frac{\log n}{n}}
\end{align*}
with probability at least $1-n^{-c}$, so
$\|\widehat\bx_i - \widehat\bx_i^{\mathrm{(OS)}}\|_2 \lesssim_{c,\delta,\lambda}{\log n}/{(n\rho_n\halfpower)}$
with probability at least $1-n^{-c}$, then by Theorem \ref{thm:ose-asymptotic-normality} and Slutsky's theorem, we have
\[
\bG_{0in}\halfpower(\bW\transpose\widehat\bx_i-\rho_n\halfpower\bx_{0i})
= \frac{1}{n}\sum_{j=1}^n \frac{(A_{ij}-p_{0ij})\bG_{0in}\invhalfpower\rho_n\halfpower\eye_{p,q}\bx_{0j}}{p_{0ij}(1-p_{0ij})} + \br_{in},
\]
where
$\|\br_{in}\|_2 \lesssim_{c,\delta,\lambda} {(\log n)^2}/{(n\rho_n\halfpower)}$
with probability at least $1-n^{-c}$.
Now write
\begin{align*}
\left\Vert\widehat\bX\bW - \rho_n\halfpower\bX_0\right\Vert\frobenius^2
&= \sum_{i=1}^n\left\Vert \frac{1}{n}\sum_{j=1}^n \frac{(A_{ij}-p_{0ij})\bG_{0in}\inverse\rho_n\halfpower\eye_{p,q}\bx_{0j}}{p_{0ij}(1-p_{0ij})} \right\Vert_2^2 + \sum_{i=1}^n\left\Vert\bG_{0in}\invhalfpower\br_{in}\right\Vert_2^2  \\
&\quad + 2\sum_{i=1}^n\left<\frac{1}{n}\sum_{j=1}^n \frac{(A_{ij}-p_{0ij})\bG_{0in}\inverse\rho_n\halfpower\eye_{p,q}\bx_{0j}}{p_{0ij}(1-p_{0ij})},\,\bG_{0in}\invhalfpower\br_{in}\right>,
\end{align*}
which can be viewed as a sum of three terms.
For the first term, by Lemma \ref{lemma:grdpg-inid-wlln},
\begin{align*}
\sum_{i=1}^n\left\Vert \frac{1}{n}\sum_{j=1}^n \frac{(A_{ij}-p_{0ij})\bG_{0in}\inverse\rho_n\halfpower\eye_{p,q}\bx_{0j}}{p_{0ij}(1-p_{0ij})} \right\Vert_2^2  
&= \frac{1}{n}\sum_{i=1}^n\mathrm{tr}(\bG_{0in}\inverse) + o_{\prob}(1).
\end{align*}
For the second term, recalling that $\bG_{0in}$ is a positive definite matrix with eigenvalues bounded away from $0$ and $\infty$, we have
\begin{align*}
\sum_{i=1}^n\left\Vert\bG_{0in}\invhalfpower\br_{in}\right\Vert_2^2
&\lesssim_{\delta,\lambda} n\max_{i\in[n]}\|\br_{in}\|_2^2  
\lesssim_{c,\delta,\lambda} \frac{(\log n)^4}{n\rho_n},
\end{align*}
with probability at least $1-n^{-c}$.
For the third term, by triangle inequality and Cauchy-Schwarz inequality (twice),
\begin{align*}
&\left|\sum_{i=1}^n\left<\frac{1}{n}\sum_{j=1}^n \frac{(A_{ij}-p_{0ij})\bG_{0in}\inverse\rho_n\halfpower\eye_{p,q}\bx_{0j}}{p_{0ij}(1-p_{0ij})},\,\bG_{0in}\invhalfpower\br_{in}\right>\right| \\
&\quad \leq \left(\sum_{i=1}^n \left\|\frac{1}{n}\sum_{j=1}^n \frac{(A_{ij}-p_{0ij})\bG_{0in}\inverse\rho_n\halfpower\eye_{p,q}\bx_{0j}}{p_{0ij}(1-p_{0ij})}\right\|_2^2\right)\halfpower \left(\sum_{i=1}^n \left\|\bG_{0in}\invhalfpower\br_{in}\right\|_2^2\right)\halfpower  
= o_{\prob}(1).
\end{align*}
Hence, we conclude that
$\Vert\widehat\bX\bW - \rho_n\halfpower\bX_0\Vert\frobenius^2 = (1/n)\sum_{i=1}^n\mathrm{tr}(\bG_{0in}\inverse) + o_{\prob}(1)$.
\end{proof}

\subsection{Proof of Theorem \ref{thm:bernstein-von-mises}}

\begin{proof}
It is sufficient to prove that
\begin{align*}
&\int_{\mathbb{R}^d} \left|\pi_i(\bx_i)\exp\left\{\widehat\ell_{in}(\bx_i)-\widehat\ell_{in}(\widehat\bx_i)\right\}\right.\\ 
&\quad\quad\left. - \pi_i(\rho_n\halfpower\bW\bx_{0i})\exp\left\{-\frac{n}{2}(\bx_i-\widehat\bx_i)\transpose\bW\bG_{0in}\bW\transpose(\bx_i-\widehat\bx_i)\right\}\right|\diff\bx_i  
\lesssim_{c,\delta,\lambda} n^{-d/2}\frac{1}{\log n}
\end{align*}
with probability at least $1-n^{-c}$.
Recall that the posterior density associated with the ESL function for vertex $i$ is
$\pi_{in}(\bx_i\mid\bA) = e^{\widehat\ell_{in}(\bx_i)-\widehat\ell_{in}(\widehat\bx_i)}\pi_i(\bx_i)/d_{in}$,
where $d_{in}=\int_{\mathbb{R}^d}\exp\{\widehat\ell_{in}(\bx_i)-\widehat\ell_{in}(\widehat\bx_i)\}\pi_i(\bx_i)\diff\bx_i$ is the normalizing constant.
Let $\eta_n = C_{c,\delta,\lambda}\sqrt{\log n}$, and partition $\mathbb{R}^d$ as $\calA_{1in}\cup\calA_{2in}$ where
\[
\calA_{1in} = \left\{\bx_i\in\mathbb{R}^d: \sqrt{n}\|\bx_i-\widehat\bx_i\|_2 \leq \eta_n\right\},
\quad \calA_{2in} = \left\{\bx_i\in\mathbb{R}^d: \sqrt{n}\|\bx_i-\widehat\bx_i\|_2 > \eta_n\right\}.
\]
For the integral over $\calA_{2in}$,
\begin{align*}
&\int_{\calA_{2in}} \left|\pi_i(\bx_i)\exp\left\{\widehat\ell_{in}(\bx_i)-\widehat\ell_{in}(\widehat\bx_i)\right\}\right.\\
&\qquad\left. - \pi_i(\rho_n\halfpower\bW\bx_{0i})\exp\left\{-\frac{n}{2}(\bx_i-\widehat\bx_i)\transpose\bW\bG_{0in}\bW\transpose(\bx_i-\widehat\bx_i)\right\}\right|\diff\bx_i  \\
&\quad \leq \int_{\calA_{2in}}\pi_i(\bx_i)\exp\left\{\widehat\ell_{in}(\bx_i)-\widehat\ell_{in}(\widehat\bx_i)\right\}\diff\bx_i\\ 
&\quad\quad  + \int_{\calA_{2in}}\pi_i(\rho_n\halfpower\bW\bx_{0i})\exp\left\{-\frac{n}{2}(\bx_i-\widehat\bx_i)\transpose\bW\bG_{0in}\bW\transpose(\bx_i-\widehat\bx_i)\right\}\diff\bx_i,
\end{align*}
of which for the first term, by \eqref{eqn:hessian-strongly-convex-bound} on page \pageref{eqn:hessian-strongly-convex-bound} and the assumption that $\rho_n=1$, we have
\begin{align*}
&\int_{\calA_{2in}}\pi_i(\bx_i)\exp\left\{\widehat\ell_{in}(\bx_i)-\widehat\ell_{in}(\widehat\bx_i)\right\}\diff\bx_i  \\
&\quad \leq C\int_{\calA_{2in}}\exp\left\{\widehat\ell_{in}(\bx_i)-\widehat\ell_{in}(\widehat\bx_i)\right\}\diff\bx_i  
\leq C\int_{\calA_{2in}}\exp\left\{-\frac{n\lambda}{4}\|\bx_i-\widehat\bx_i\|_2^2\right\}\diff\bx_i  \\
&\quad = Cn^{-d/2}\int_{\|\bt_i\|_2\geq\eta_n}\exp\left\{-\frac{\lambda}{4}\|\bt_i\|_2^2\right\}\diff\bt_i 
\lesssim_{\lambda} n^{-d/2}\eta_n^{d-2}\exp\{-(\lambda/4)\eta_n^2\} 
\lesssim_{\lambda} n^{-d/2}\eta_n^{-2}
\end{align*}
with probability at least $1-n^{-c}$, and for the second term, by \eqref{eqn:fisher-information-order} on page \pageref{eqn:fisher-information-order}, we have
\begin{align*}
&\int_{\calA_{2in}}\pi_i(\rho_n\halfpower\bW\bx_{0i})\exp\left\{-\frac{n}{2}(\bx_i-\widehat\bx_i)\transpose\bW\bG_{0in}\bW\transpose(\bx_i-\widehat\bx_i)\right\}\diff\bx_i  \\
&\quad \leq C\int_{\calA_{2in}}\exp\left\{-\frac{\lambda n}{2}\|\bx_i-\widehat\bx_i\|_2^2\right\}\diff\bx_i  
= Cn^{-d/2}\int_{\|\bt_i\|_2\geq\eta_n}\exp\left\{-\frac{\lambda}{2}\|\bt_i\|_2^2\right\}\diff\bt_i \\
&\quad \lesssim_{\lambda} n^{-d/2}\eta_n^{d-2}\exp\{-(\lambda/2)\eta_n^2\} 
\lesssim_{\lambda} n^{-d/2}\eta_n^{-2}
\end{align*}
with probability at least $1-n^{-c}$.
Over $\calA_{1in}$,
\begin{align*}
\left\|\bW\transpose\bx_i-\rho_n\halfpower\bx_{0i}\right\|_2
&\leq \left\|\bW\transpose\widehat\bx_i-\rho_n\halfpower\bx_{0i}\right\|_2 + \|\bx_i-\widehat\bx_i\|_2  
\lesssim_{c,\delta,\lambda} \sqrt{\frac{\log n}{n}}
\end{align*}
with probability at least $1-n^{-c}$ by Theorem \ref{thm:mesle} and the definition of $\eta_n$.
For the integral over $\calA_{1in}$, with the change of variable $\bt_i=\sqrt{n}\bW\transpose(\bx_i-\widehat\bx_i)$,
\begin{align*}
&\int_{\calA_{1in}} \left|\pi_i(\bx_i)\exp\left\{\widehat\ell_{in}(\bx_i)-\widehat\ell_{in}(\widehat\bx_i)\right\}\right.\\ 
&\quad \quad \quad \left. - \pi_i(\rho_n\halfpower\bW\bx_{0i})\exp\left\{-\frac{n}{2}(\bx_i-\widehat\bx_i)\transpose\bW\bG_{0in}\bW\transpose(\bx_i-\widehat\bx_i)\right\}\right|\diff\bx_i  \\
&= \int_{\calA_{1in}} \left|\pi_i(\bx_i)\exp\left\{\frac{1}{2}(\bx_i-\widehat\bx_i)\transpose\frac{\partial^2\widehat\ell_{in}}{\partial\bx_i\partial\bx_i\transpose}(\bar\bx_i)(\bx_i-\widehat\bx_i)\right\}\right.\\ 
&\quad \quad \quad \left. - \pi_i(\rho_n\halfpower\bW\bx_{0i})\exp\left\{-\frac{n}{2}(\bx_i-\widehat\bx_i)\transpose\bW\bG_{0in}\bW\transpose(\bx_i-\widehat\bx_i)\right\}\right|\diff\bx_i  \\
&= n^{-d/2}\int_{\|\bt_i\|_2\leq\eta_n} \left|\pi_i(\widehat\bx_i+\frac{\bW\bt_i}{\sqrt{n}})\exp\left\{\frac{1}{2}\bt_i\transpose\bW\transpose\frac{1}{n}\frac{\partial^2\widehat\ell_{in}}{\partial\bx_i\partial\bx_i\transpose}(\bar\bx_i)\bW\bt_i\right\}\right.\\ 
&\quad \quad \quad \quad \quad \quad \quad \quad \left. - \pi_i(\rho_n\halfpower\bW\bx_{0i})\exp\left\{-\frac{1}{2}\bt_i\transpose\bG_{0in}\bt_i\right\}\right|\diff\bt_i  \\
&= n^{-d/2}\int_{\|\bt_i\|_2\leq\eta_n} \left|\pi_i(\widehat\bx_i+\frac{\bW\bt_i}{\sqrt{n}})\exp\left\{\frac{1}{2}\bt_i\transpose\bW\transpose\frac{1}{n}\frac{\partial^2\widehat\ell_{in}}{\partial\bx_i\partial\bx_i\transpose}(\bar\bx_i)\bW\bt_i\right\}\right.\\ 
&\quad \quad \quad \quad \quad \quad \quad \quad \left. - \pi_i(\widehat\bx_i+\frac{\bW\bt_i}{\sqrt{n}})\exp\left\{-\frac{1}{2}\bt_i\transpose\bG_{0in}\bt_i\right\}\right|\diff\bt_i  \\
&\quad + n^{-d/2}\int_{\|\bt_i\|_2\leq\eta_n} \left|\pi_i(\widehat\bx_i+\frac{\bW\bt_i}{\sqrt{n}})\exp\left\{-\frac{1}{2}\bt_i\transpose\bG_{0in}\bt_i\right\}
- \pi_i(\rho_n\halfpower\bW\bx_{0i})e^{-\bt_i\transpose\bG_{0in}\bt_i/2}\right|\diff\bt_i,  \\
\end{align*}
which is a sum of two terms, in which $\bar\bx_i$ is a convex combination of $\bx_i$ and $\widehat\bx_i$. For the first term of this integral over $\calA_{1in}$ we have
\begin{align*}
&\int_{\|\bt_i\|_2\leq\eta_n} \left|\pi_i(\widehat\bx_i+\frac{\bW\bt_i}{\sqrt{n}})\exp\left\{\frac{1}{2}\bt_i\transpose\bW\transpose\frac{1}{n}\frac{\partial^2\widehat\ell_{in}}{\partial\bx_i\partial\bx_i\transpose}(\bar\bx_i)\bW\bt_i\right\}\right.\\ 
&\quad \quad \quad \quad \left. - \pi_i(\widehat\bx_i+\frac{\bW\bt_i}{\sqrt{n}})\exp\left\{-\frac{1}{2}\bt_i\transpose\bG_{0in}\bt_i\right\}\right|\diff\bt_i  \\
&\quad \leq C\int_{\|\bt_i\|_2\leq\eta_n} \left|\exp\left\{\frac{1}{2}\bt_i\transpose\bW\transpose\frac{1}{n}\frac{\partial^2\widehat\ell_{in}}{\partial\bx_i\partial\bx_i\transpose}(\bar\bx_i)\bW\bt_i\right\} - \exp\left\{-\frac{1}{2}\bt_i\transpose\bG_{0in}\bt_i\right\}\right|\diff\bt_i  \\
&\quad = C\int_{\|\bt_i\|_2\leq\eta_n} \left|\exp\left\{\frac{1}{2}\bt_i\transpose\left(\bW\transpose\frac{1}{n}\frac{\partial^2\widehat\ell_{in}}{\partial\bx_i\partial\bx_i\transpose}(\bar\bx_i)\bW+\bG_{0in}\right)\bt_i\right\} - 1\right|e^{-\bt_i\transpose\bG_{0in}\bt_i/2}\diff\bt_i  \\
&\quad \leq C\int_{\|\bt_i\|_2\leq\eta_n} \left(\exp\left\{\frac{1}{2}C_{c,\delta,\lambda}\sqrt{\frac{\log n}{n\rho_n}}\eta_n^2\right\} - 1\right)e^{-\bt_i\transpose\bG_{0in}\bt_i/2}\diff\bt_i  \\
&\quad\lesssim_{c,\delta,\lambda} \sqrt{\frac{\log n}{n\rho_n}}\eta_n^2 \int_{\|\bt_i\|_2\leq\eta_n}e^{-\bt_i\transpose\bG_{0in}\bt_i/2}\diff\bt_i  
\asymp_{c,\delta,\lambda} \sqrt{\frac{(\log n)^3}{n}}
\end{align*}
with probability at least $1-n^{-c}$ by Lemma \ref{lemma:concentration-hessian}, and for the second term of this integral over $\calA_{1in}$ we have
\begin{align*}
&\int_{\|\bt_i\|_2\leq\eta_n} \left|\pi_i(\widehat\bx_i+\frac{\bW\bt_i}{\sqrt{n}})\exp\left\{-\frac{1}{2}\bt_i\transpose\bG_{0in}\bt_i\right\} - \pi_i(\rho_n\halfpower\bW\bx_{0i})\exp\left\{-\frac{1}{2}\bt_i\transpose\bG_{0in}\bt_i\right\}\right|\diff\bt_i \\
&\quad= \int_{\|\bt_i\|_2\leq\eta_n} \left|\nabla\pi_i\left(\theta\bx_i+(1-\theta)\rho_n\halfpower\bW\bx_{0i}\right)\transpose(\bx_i-\rho_n\halfpower\bW\bx_{0i})\right|\exp\left\{-\frac{1}{2}\bt_i\transpose\bG_{0in}\bt_i\right\}\diff\bt_i \\
&\quad\lesssim_{c,\delta,\lambda} \sqrt{\frac{\log n}{n}}\int_{\|\bt_i\|_2\leq\eta_n}\exp\left\{-\frac{1}{2}\bt_i\transpose\bG_{0in}\bt_i\right\}\diff\bt_i 
\asymp_{c,\delta,\lambda} \sqrt{\frac{\log n}{n}}
\end{align*}
with probability at least $1-n^{-c}$ by Assumption \ref{assumption:prior}.

\noindent{} Hence, the integral in the statement of this theorem is bounded by
\begin{align*}
&\int_{\mathbb{R}^d} \left|\pi_i(\bx_i)\exp\left\{\widehat\ell_{in}(\bx_i)-\widehat\ell_{in}(\widehat\bx_i)\right\}\right.\\ 
&\quad\quad\left. - \pi_i(\rho_n\halfpower\bW\bx_{0i})\exp\left\{-\frac{n}{2}(\bx_i-\widehat\bx_i)\transpose\bW\bG_{0in}\bW\transpose(\bx_i-\widehat\bx_i)\right\}\right|\diff\bx_i  \\
&\quad\lesssim_{c,\delta,\lambda} n^{-d/2}\frac{1}{\log n} + n^{-d/2}\sqrt{\frac{(\log n)^3}{n}} + n^{-d/2}\sqrt{\frac{\log n}{n}} 
\asymp_{c,\delta,\lambda} n^{-d/2}\frac{1}{\log n}
\end{align*}
with probability at least $1-n^{-c}$.
\end{proof}

\subsection{Proof of Theorem \ref{thm:vb-posterior}}
The proof of Theorem \ref{thm:vb-posterior} can be done by the triangle inequality, Pinsker's inequality, and Lemma \ref{lemma:KL-subgaussian-error} below. 
\begin{lemma}\label{lemma:KL-subgaussian-error}
Suppose the conditions in Theorem \ref{thm:bernstein-von-mises} hold. 
Then
\[
D_\mathrm{{KL}}(\phi_d(\bx_i|\widehat\bx_i,(n\bW\bG_{0in}\bW\transpose)\inverse)\|\pi_{in}(\bx_i|\bA))
\lesssim_{c,\delta,\lambda}  \frac{1}{\log n}.
\]
\end{lemma}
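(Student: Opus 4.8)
The plan is to unpack both densities and reduce the KL divergence to a cross term plus a ratio of normalizing constants. Writing the unnormalized Gaussian integrand as $g(\bx_i)=\exp\{-\tfrac{n}{2}(\bx_i-\widehat\bx_i)\transpose\bW\bG_{0in}\bW\transpose(\bx_i-\widehat\bx_i)\}$ with normalizer $Z_g=(2\pi)^{d/2}\det(n\bW\bG_{0in}\bW\transpose)^{-1/2}$, and recalling that $\pi_{in}(\bx_i\mid\bA)=h(\bx_i)/d_{in}$ with $h(\bx_i)=\exp\{\widehat\ell_{in}(\bx_i)-\widehat\ell_{in}(\widehat\bx_i)\}\pi_i(\bx_i)$, one obtains
\[
D_{\mathrm{KL}}(\phi_d\,\|\,\pi_{in})=\int_{\mathbb{R}^d}\phi_d(\bx_i)\log\frac{g(\bx_i)}{h(\bx_i)}\,\diff\bx_i+\log\frac{d_{in}}{Z_g}.
\]
First I would control the normalizing-constant ratio. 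Integrating the pointwise $L^1$ bound already proved in Theorem \ref{thm:bernstein-von-mises} gives $|d_{in}-\pi_i(\rho_n\halfpower\bW\bx_{0i})Z_g|\lesssim_{c,\delta,\lambda}n^{-d/2}/\log n$; since \eqref{eqn:fisher-information-order} forces $Z_g\asymp_{\delta,\lambda}n^{-d/2}$ and Assumption \ref{assumption:prior}(a) keeps $\pi_i(\rho_n\halfpower\bW\bx_{0i})$ bounded away from $0$, this yields $\log(d_{in}/Z_g)=\log\pi_i(\rho_n\halfpower\bW\bx_{0i})+O(1/\log n)$.

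Next I would expand the cross term. Using $\log(g/h)=-\tfrac{n}{2}(\bx_i-\widehat\bx_i)\transpose\bW\bG_{0in}\bW\transpose(\bx_i-\widehat\bx_i)-\{\widehat\ell_{in}(\bx_i)-\widehat\ell_{in}(\widehat\bx_i)\}-\log\pi_i(\bx_i)$ and Taylor-expanding $\widehat\ell_{in}$ about its maximizer $\widehat\bx_i$ (so the gradient term vanishes), the two quadratic pieces combine into $\tfrac12(\bx_i-\widehat\bx_i)\transpose\{-n\bW\bG_{0in}\bW\transpose-\tfrac{\partial^2\widehat\ell_{in}}{\partial\bx_i\partial\bx_i\transpose}(\bar\bx_i)\}(\bx_i-\widehat\bx_i)$. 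Splitting the domain as $\calA_{1in}\cup\calA_{2in}$ exactly as in the proof of Theorem \ref{thm:bernstein-von-mises}, on $\calA_{1in}$ the interpolation point $\bar\bx_i$ lies within $O(\sqrt{(\log n)/n})$ of $\rho_n\halfpower\bW\bx_{0i}$, so Lemma \ref{lemma:concentration-hessian} bounds the central matrix in spectral norm by $n\sqrt{(\log n)/(n\rho_n)}$; since $\expect_{\phi_d}\|\bx_i-\widehat\bx_i\|_2^2=\mathrm{tr}((n\bW\bG_{0in}\bW\transpose)^{-1})\lesssim_{\delta,\lambda}1/n$, this part contributes $\lesssim_{c,\delta,\lambda}\sqrt{(\log n)/n}$. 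On $\calA_{2in}$ the crude global bounds $\psi_n''\in[-\tau_n^{-2},-1]$ and Lemma \ref{lemma:frequently-used-results} give $\|\tfrac{\partial^2\widehat\ell_{in}}{\partial\bx_i\partial\bx_i\transpose}\|_2\lesssim n$, while the Gaussian assigns only polynomially small mass to $\{\sqrt n\,\|\bx_i-\widehat\bx_i\|_2>\eta_n\}$ with $\eta_n\asymp\sqrt{\log n}$, so that contribution is negligible. Finally, the $-\log\pi_i(\bx_i)$ piece integrates, via a first-order Taylor expansion of $\log\pi_i$ around $\rho_n\halfpower\bW\bx_{0i}$ and Assumption \ref{assumption:prior}(b)--(c) together with $\expect_{\phi_d}\bx_i=\widehat\bx_i$ and the MESLE consistency of Theorem \ref{thm:mesle}, to $-\log\pi_i(\rho_n\halfpower\bW\bx_{0i})+O(\sqrt{(\log n)/n})$.

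Combining the three pieces, the two occurrences of $\log\pi_i(\rho_n\halfpower\bW\bx_{0i})$ cancel and every remaining error is $O(1/\log n)$, which proves the claim. The main obstacle is the quadratic cross term: the Hessian concentration of Lemma \ref{lemma:concentration-hessian} holds only in a shrinking neighborhood of the truth, so one must carefully peel off the Gaussian tail and verify that the crude global control of the ESL Hessian, combined with the exponentially small Gaussian mass beyond radius $\eta_n/\sqrt n$, renders the $\calA_{2in}$ contribution negligible. Keeping the $\rho_n=1$ normalization explicit throughout is what makes the central spectral-norm estimate $n\sqrt{(\log n)/(n\rho_n)}$ small relative to the $1/n$ scale of the Gaussian covariance, ensuring the dominant $1/\log n$ term comes solely from the normalizing-constant ratio.
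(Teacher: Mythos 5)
Your proposal is correct and follows essentially the same route as the paper's own proof: the same decomposition of the KL divergence into a quadratic cross term plus a log normalizing-constant ratio (controlled, as in the paper, by the unnormalized $L^1$ bound established inside the proof of Theorem \ref{thm:bernstein-von-mises} — the stated TV bound between normalized densities alone could not pin down $d_{in}$), the same Taylor expansions of $\widehat\ell_{in}$ at $\widehat\bx_i$ and of $\log\pi_i$ at $\rho_n\halfpower\bW\bx_{0i}$, and the same split into $\calA_{1in}\cup\calA_{2in}$ with $\eta_n\asymp\sqrt{\log n}$ and Lemma \ref{lemma:concentration-hessian} on the inner region. The only deviation is on $\calA_{2in}$, where the paper invokes the Hessian Lipschitz property (Lemma \ref{lemma:lipschitz-hessian}) to get the cubic bound $n\|\bx_i-\widehat\bx_i\|_2^3$, while you use the cruder global bound $\psi_n''\in[-\tau_n^{-2},-1]$ (valid since $\rho_n=1$ keeps $\tau_n$ bounded below) to get $n\|\bx_i-\widehat\bx_i\|_2^2$; both are annihilated by the exponentially small Gaussian mass beyond radius $\eta_n/\sqrt{n}$, so your variant is equally valid and in fact dispenses with one lemma.
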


\begin{proof}
Note that Theorem \ref{thm:bernstein-von-mises} implies that
\begin{align*}
\left|\frac{\int_{\mathbb{R}^d}\pi_i(\bx_i)\exp\left\{\widehat\ell_{in}(\bx_i)-\widehat\ell_{in}(\widehat\bx_i)\right\}\diff\bx_i}{\pi_{in}(\rho_n\halfpower\bW\bx_{0i})\det\{2\pi(n\bW\bG_{0in}\bW\transpose)\inverse\}\halfpower} - 1\right|
\lesssim_{c,\delta,\lambda} \frac{1}{\log n}
\end{align*}
with probability at least $1-n^{-c}$.
\begin{align*}
&\log\left(\frac{\phi_d(\bx_i|\widehat\bx_i,(n\bW\bG_{0in}\bW\transpose)\inverse)}{\pi_{in}(\bx_i|\bA)}\right)  \\
&\quad= - \widehat\ell_{in}(\bx_i) + \widehat\ell_{in}(\widehat\bx_i) - \frac{n}{2}(\bx_i-\widehat\bx_i)\transpose\bW\bG_{0in}\bW\transpose(\bx_i-\widehat\bx_i)  \\
&\qquad + \log\left\{\frac{\int_{\mathbb{R}^d}\pi_i(\bx_i)\exp\left\{\widehat\ell_{in}(\bx_i)-\widehat\ell_{in}(\widehat\bx_i)\right\}\diff\bx_i}{\pi_i(\rho_n\halfpower\bW\bx_{0i})\det\{2\pi(n\bW\bG_{0in}\bW\transpose)\inverse\}\halfpower}\right\} - \log\pi_i(\bx_i) + \log\pi_i(\rho_n\halfpower\bW\bx_{0i}) \\
&\quad= - \frac{n}{2}(\bx_i-\widehat\bx_i)\transpose\left\{\frac{1}{n}\frac{\partial^2\widehat\ell_{in}}{\partial\bx_i\partial\bx_i\transpose}(\bx_i') + \bW\bG_{0in}\bW\transpose\right\}(\bx_i-\widehat\bx_i) \\
&\qquad + \log\left(1+\frac{1}{\log n}\theta_{in}\right) - \frac{\partial\log\pi_i}{\partial\bx_i}(\rho_n\halfpower\bW\bx_{0i})(\bx_i-\rho_n\halfpower\bW\bx_{0i}) \\
&\qquad - (\bx_i-\rho_n\halfpower\bW\bx_{0i})\transpose\frac{\partial^2\log\pi_i}{\partial\bx_i\partial\bx_i\transpose}(\bx_i'')(\bx_i-\rho_n\halfpower\bW\bx_{0i}),
\end{align*}
where $\bx_i'$ and $\bx_i''$ lie between $\bx_i$ and $\rho_n\halfpower\bW\bx_{0i}$, and $\theta_{in}=O(1)$ with probability at least $1-n^{-c}$ by Theorem \ref{thm:bernstein-von-mises}. Let $\eta_n = C_{c,\delta,\lambda}\sqrt{\log n}$, and partition $\mathbb{R}^d$ as $\calA_{1in}\cup\calA_{2in}$ where
\[
\calA_{1in} = \left\{\bx_i\in\mathbb{R}^d: \sqrt{n}\|\bx_i-\widehat\bx_i\|_2 \leq \eta_n\right\},
\quad \calA_{2in} = \left\{\bx_i\in\mathbb{R}^d: \sqrt{n}\|\bx_i-\widehat\bx_i\|_2 > \eta_n\right\}.
\]
On $\calA_{1in}$,
\[
\|\bW\transpose\bx_i-\rho_n\halfpower\bx_{0i}\|_2 \leq \|\bx_i-\widehat\bx_i\|_2 + \|\bW\transpose\widehat\bx_i-\rho_n\halfpower\bx_{0i}\|_2 
\lesssim_{c,\delta,\lambda} \sqrt{\frac{\log n}{n}}
\]
with probability at least $1-n^{-c}$ by Theorem \ref{thm:mesle}, and the same bound also holds for $\|\bW\transpose\bx_i'-\rho_n\halfpower\bx_{0i}\|_2$ and $\|\bW\transpose\bx_i''-\rho_n\halfpower\bx_{0i}\|_2$. Then by Lemma \ref{lemma:concentration-hessian}, Taylor's theorem, and Assumption \ref{assumption:prior},
\begin{align*}
\sup_{\bx_i\in\calA_{1in}}\left|\log\left(\frac{\phi_d(\bx_i|\widehat\bx_i,(n\bW\bG_{0in}\bW\transpose)\inverse)}{\pi_{in}(\bx_i|\bA)}\right)\right|  
&\lesssim_{c,\delta,\lambda} \sqrt{\frac{(\log n)^3}{n}} + \frac{1}{\log n} + \sqrt{\frac{\log n}{n}}  
\asymp_{c,\delta,\lambda} \frac{1}{\log n}
\end{align*}
with probability at least $1-n^{-c}$. So the integral over $\calA_{1in}$
\begin{align*}
\left|\int_{\calA_{1in}} \log\left(\frac{\phi_d(\bx_i|\widehat\bx_i,(n\bW\bG_{0in}\bW\transpose)\inverse)}{\pi_{in}(\bx_i|\bA)}\right)\phi_d(\bx_i|\widehat\bx_i,(n\bW\bG_{0in}\bW\transpose)\inverse)\diff\bx_i \right|
\lesssim_{c,\delta,\lambda} \frac{1}{\log n}
\end{align*}
with probability at least $1-n^{-c}$. We then consider the integral over $\calA_{2in}$. Note that over $\calA_{2in}$, by Theorem \ref{thm:mesle},
\begin{align*}
\|\bx_i'-\widehat\bx_i\|_2
&\leq \|\bx_i'-\bx_i\|_2 + \|\bx_i-\widehat\bx_i\|_2  
\leq \|\bx_i - \rho_n\halfpower\bW\bx_{0i}\|_2 + \|\bx_i-\widehat\bx_i\|_2  \\
&\leq 2\|\bx_i-\widehat\bx_i\|_2 + \|\bW\transpose\widehat\bx_i-\rho_n\halfpower\bx_{0i}\|_2  
\lesssim_{c,\delta,\lambda} \|\bx_i-\widehat\bx_i\|_2
\end{align*}
with probability at least $1-n^{-c}$. Then by triangle inequality, Lemma \ref{lemma:concentration-hessian}, and Lemma \ref{lemma:lipschitz-hessian}, over $\calA_{2in}$ we have
\begin{align*}
& \left|\frac{n}{2}(\bx_i-\widehat\bx_i)\transpose\left\{\frac{1}{n}\frac{\partial^2\widehat\ell_{in}}{\partial\bx_i\partial\bx_i\transpose}(\bx_i') + \bW\bG_{0in}\bW\transpose\right\}(\bx_i-\widehat\bx_i)\right| \\
&\quad\leq \left|\frac{n}{2}(\bx_i-\widehat\bx_i)\transpose\left\{\frac{1}{n}\frac{\partial^2\widehat\ell_{in}}{\partial\bx_i\partial\bx_i\transpose}(\widehat\bx_i) + \bW\bG_{0in}\bW\transpose\right\}(\bx_i-\widehat\bx_i)\right|  \\
&\quad\quad + \left|\frac{n}{2}(\bx_i-\widehat\bx_i)\transpose\left\{\frac{1}{n}\frac{\partial^2\widehat\ell_{in}}{\partial\bx_i\partial\bx_i\transpose}(\bx_i') - \frac{1}{n}\frac{\partial^2\widehat\ell_{in}}{\partial\bx_i\partial\bx_i\transpose}(\widehat\bx_i)\right\}(\bx_i-\widehat\bx_i)\right| \\
&\quad\lesssim_{c,\delta,\lambda} n\sqrt{\frac{\log n}{n}}\|\bx_i-\widehat\bx_i\|_2^2 + n\|\bx_i-\widehat\bx_i\|_2^2 \cdot \|\bx_i'-\widehat\bx_i\|_2  
\lesssim_{c,\delta,\lambda} n\|\bx_i-\widehat\bx_i\|_2^3
\end{align*}
with probability at least $1-n^{-c}$. And by Assumption \ref{assumption:prior} we have
\begin{align*}
&\left|\frac{\partial\log\pi_i}{\partial\bx_i}(\rho_n\halfpower\bW\bx_{0i})(\bx_i-\rho_n\halfpower\bW\bx_{0i}) + (\bx_i-\rho_n\halfpower\bW\bx_{0i})\transpose\frac{\partial^2\log\pi_i}{\partial\bx_i\partial\bx_i\transpose}(\bx_i'')(\bx_i-\rho_n\halfpower\bW\bx_{0i})\right|  \\
&\quad\leq C(\|\bx_i-\rho_n\halfpower\bW\bx_{0i}\|_2 + \|\bx_i-\rho_n\halfpower\bW\bx_{0i}\|_2^2) \\
&\quad\leq C(\|\bx_i-\widehat\bx_i\|_2 + \|\bW\transpose\widehat\bx_i - \rho_n\halfpower\bx_{0i}\|_2) \\
&\quad\quad + C(\|\bx_i-\widehat\bx_i\|_2^2 + \|\bW\transpose\widehat\bx_i - \rho_n\halfpower\bx_{0i}\|_2^2 + 2\|\bx_i-\widehat\bx_i\|_2\cdot\|\bW\transpose\widehat\bx_i - \rho_n\halfpower\bx_{0i}\|_2)  \\
&\quad\lesssim_{c,\delta,\lambda} \|\bx_i-\widehat\bx_i\|_2 + \|\bx_i-\widehat\bx_i\|_2^2
\end{align*}
with probability at least $1-n^{-c}$. So the integral over $\calA_{2in}$
\begin{align*}
&\left|\int_{\calA_{2in}} \log\left(\frac{\phi_d(\bx_i|\widehat\bx_i,(n\bW\bG_{0in}\bW\transpose)\inverse)}{\pi_{in}(\bx_i|\bA)}\right)\phi_d(\bx_i|\widehat\bx_i,(n\bW\bG_{0in}\bW\transpose)\inverse)\diff\bx_i \right|  \\
&\quad\lesssim_{c,\delta,\lambda} \frac{1}{\sqrt{n}} \int_{\sqrt{n}\|\bx_i-\widehat\bx_i\|_2>\eta_n}(\sqrt{n}\|\bx_i-\widehat\bx_i\|_2)^3\phi_d(\bx_i|\widehat\bx_i,(n\bW\bG_{0in}\bW\transpose)\inverse)\diff\bx_i\\ 
&\qquad  + \frac{1}{\log{n}} \int_{\sqrt{n}\|\bx_i-\widehat\bx_i\|_2>\eta_n}\phi_d(\bx_i|\widehat\bx_i,(n\bW\bG_{0in}\bW\transpose)\inverse)\diff\bx_i 
\lesssim_{c,\delta,\lambda} \frac{1}{\log n}
\end{align*}
with probability at least $1-n^{-c}$. This completes the proof.
\end{proof}


\subsection{Proof of Theorem \ref{thm:vb-posterior-mean}}

\begin{proof}
Let $Q_{in}^*$ denote the variational posterior distribution $N(\bx_i^*,\bSigma_{in}^*)$, with density $q_{in}^*(\bx_i)$, and let $N_{in}^*$ denote the normal distribution $N(\widehat\bx_i,\{n\bW\bG_{0in}\bW\transpose\}\inverse)$, with density $\phi_d(\bx_i|\widehat\bx_i,(n\bW\bG_{0in}\bW\transpose)\inverse)$.
Note that Theorem \ref{thm:bernstein-von-mises} implies that
\begin{align*}
\int_{\mathbb{R}^d}\left|\pi_{in}(\bx_i|\bA) - \phi_d(\bx_i|\widehat\bx_i,(n\bW\bG_{0in}\bW\transpose)\inverse)\right| \diff\bx_i
\lesssim_{c,\delta,\lambda} \frac{1}{\log n}
\end{align*}
with probability at least $1-n^{-c}$, and Theorem \ref{thm:vb-posterior} implies that
\[
\int_{\mathbb{R}^d} \left|q_{in}^*(\bx_i) - \phi_d(\bx_i|\widehat\bx_i,(n\bW\bG_{0in}\bW\transpose)\inverse)\right| \diff\bx_i
\lesssim_{c,\delta,\lambda} \sqrt{\frac{1}{\log n}}
\]
with probability at least $1-n^{-c}$ by triangle inequality.

\noindent{}Let $\varphi_Q(\bt)$ denote the characteristic function of a distribution $Q$. We have
\begin{align*}
\sup_{\bt\in\mathbb{R}^d}|\varphi_{Q_{in}^*}(\bt) - \varphi_{N_{in}^*}(\bt)|
&= \sup_{\bt\in\mathbb{R}^d}\left|\int_{\mathbb{R}^d}e^{i\bt\transpose\bx_i} \{q_{in}^*(\bx_i) - \phi_d(\bx_i|\widehat\bx_i,(n\bW\bG_{0in}\bW\transpose)\inverse)\} \diff\bx_i\right|  \\
&\leq \int_{\mathbb{R}^d} \left|q_{in}^*(\bx_i) - \phi_d(\bx_i|\widehat\bx_i,(n\bW\bG_{0in}\bW\transpose)\inverse)\right| \diff\bx_i 
\lesssim_{c,\delta,\lambda} \sqrt{\frac{1}{\log n}}
\end{align*}
with probability at least $1-n^{-c}$. Recall that both $Q_{in}^*$ and $N_{in}^*$ are $d$-dimensional normal distributions, so
\begin{align*}
\varphi_{Q_{in}^*}(\bt) &= \exp\left\{i\bt\transpose\bx_i^* - \frac{1}{2}\bt\transpose\bSigma_{in}^*\bt\right\}, \quad
\varphi_{N_{in}^*}(\bt) = \exp\left\{i\bt\transpose\widehat\bx_i - \frac{1}{2}\bt\transpose\{n\bW\bG_{0in}\bW\transpose\}\inverse\bt\right\}.
\end{align*}
By triangle inequality and the fact that $|\exp\{ix\}|=1$ for all $x\in\mathbb{R}$, we have
\[
\sup_{\bt\in\mathbb{R}^d}\left|\exp\left(- \frac{1}{2}\bt\transpose\bSigma_{in}^*\bt\right) - \exp\left( - \frac{1}{2}\bt\transpose\{n\bW\bG_{0in}\bW\transpose\}\inverse\bt\right)\right|
\lesssim_{c,\delta,\lambda} \sqrt{\frac{1}{\log n}}
\]
with probability at least $1-n^{-c}$. Also note that, for $\bu\in\mathbb{R}^d$ with $\|\bu\|_2=1$,
\[
\frac{1}{2}\bu\transpose(n\bW\bG_{0in}\bW\transpose)\inverse\bu
\asymp_{\delta,\lambda} \frac{1}{n},
\]
since $\bG_{0in}$ has all eigenvalues positive and bounded away from $0$ and $+\infty$ by \eqref{eqn:fisher-information-order} on page \pageref{eqn:fisher-information-order}.
Then,
\begin{align*}
&\exp\left\{-\frac{1}{2}\bt\transpose(n\bW\bG_{0in}\bW\transpose)\inverse\bt\right\}|\exp(i\bt\transpose\bx_i^*) - \exp(i\bt\transpose\widehat\bx_i)| \\
&\quad= \left|\exp\left\{i\bt\transpose\bx_i^*-\frac{1}{2}\bt\transpose(n\bW\bG_{0in}\bW\transpose)\inverse\bt\right\} - \exp\left\{i\bt\transpose\widehat\bx_i-\frac{1}{2}\bt\transpose(n\bW\bG_{0in}\bW\transpose)\inverse\bt\right\}\right|  \\
&\quad\leq \left|\exp\left\{i\bt\transpose\bx_i^*-\frac{1}{2}\bt\transpose(n\bW\bG_{0in}\bW\transpose)\inverse\bt\right\} - \exp\left\{i\bt\transpose\bx_i^*-\frac{1}{2}\bt\transpose\bSigma_{in}^*\bt\right\}\right|  \\
&\quad\quad +\left|\exp\left\{i\bt\transpose\bx_i^*-\frac{1}{2}\bt\transpose\bSigma_{in}^*\bt\right\} - \exp\left\{i\bt\transpose\widehat\bx_i-\frac{1}{2}\bt\transpose(n\bW\bG_{0in}\bW\transpose)\inverse\bt\right\}\right|  \\
&\quad\leq \left|\exp\left\{-\frac{1}{2}\bt\transpose(n\bW\bG_{0in}\bW\transpose)\inverse\bt\right\} - \exp\left\{-\frac{1}{2}\bt\transpose\bSigma_{in}^*\bt\right\}\right| + |\varphi_{N_{in}^*}(\bt) - \varphi_{Q_{in}^*}(\bt)|,
\end{align*}
which, by taking $\bt=\sqrt{n}\bu$, implies that
\[
|\exp(i\bu\transpose\sqrt{n}\bx_i^*) - \exp(i\bu\transpose\sqrt{n}\widehat\bx_i)|
\lesssim_{c,\delta,\lambda} \sqrt{\frac{1}{\log n}}
\]
for all $\|\bu\|_2=1$, $\bu\in\mathbb{R}^d$, with probability at least $1-n^{-c}$. Namely, 
$\sqrt{n}(\widehat\bx_i - \bx_i^*) = o_{\prob}(1)$.
By Theorem \ref{thm:mesle} and Slutsky's theorem,
$\sqrt{n}\bG_{0in}\halfpower(\bW\transpose\bx_i^*-\rho_n\halfpower\bx_{0i}) \overset{\calL}{\to} \mathrm{N}_d(\zero_d,\,\eye_d)$.
\end{proof}

\subsection{Proof of Theorem \ref{thm:vb-objective-fun}}

\begin{proof}
We borrow the idea in the proof of Theorem 11 in \cite{xu-campbell-2023} to proof the strong convexity.
By Assumption \ref{assumption:prior}, $-\log\pi_i(\bx_i)$ is convex in $\bx_i$. By Exercise 12.21 in \cite{abadir-magnus-2005}, $-\log\det(\bL_i)$ is convex in $\bL_i$.
By \eqref{eqn:hessian-strongly-convex-bound} in the proof of Theorem \ref{thm:mesle} on page \pageref{eqn:hessian-strongly-convex-bound}, $-\widehat\ell(\bx_i)$ is strongly convex in $\bx_i$ with strong convexity parameter $n\lambda\rho_n$ with probability at least $1-n^{-c}$. Let $\bD_n=n\lambda\rho_n\eye_d$, and note that
\begin{align*}
\expect\left[\frac{1}{2}\left(\bmu_i+\frac{1}{\sqrt{n}}\bL_i\bZ\right)\transpose \bD_n\left(\bmu_i+\frac{1}{\sqrt{n}}\bL_i\bZ\right)\right]
&= \frac{1}{2}\bmu_i\transpose \bD_n\bmu_i + \frac{1}{2n}\mathrm{tr}\left(\bL_i\transpose \bD_n\bL_i\right).
\end{align*}
Then by the linearity of expectation and the strong convexity of $-\widehat\ell_{in}(\bx_i)$, the function
\begin{align*}
&\expect_\bZ\left[- \widehat\ell_{in}\left(\bmu_i+\frac{1}{\sqrt{n}}\bL_i\bZ\right) - \frac{1}{2}\left(\bmu_i+\frac{1}{\sqrt{n}}\bL_i\bZ\right)\transpose \bD_n\left(\bmu_i+\frac{1}{\sqrt{n}}\bL_i\bZ\right)\right]  \\
&\quad = - \expect_\bZ\left[\widehat\ell_{in}\left(\bmu_i+\frac{1}{\sqrt{n}}\bL_i\bZ\right)\right] - \frac{1}{2}\bmu_i\transpose \bD_n\bmu_i - \frac{1}{2n}\mathrm{tr}\left(\bL_i\transpose \bD_n\bL_i\right)
\end{align*}
is convex in $(\bmu_i,\bL_i)\in\mathbb{R}^d\times\calL_{d\times d}$ with probability at least $1-n^{-c}$.
So the function
\begin{align*}
F_{in}(\bmu_i,\bL_i) - \frac{1}{2}\bmu_i\transpose \bD_n\bmu_i - \frac{1}{2n}\mathrm{tr}\left(\bL_i\transpose \bD_n\bL_i\right)
\end{align*}
is convex in $(\bmu_i,\bL_i)\in\mathbb{R}^d\times\calL_{d\times d}$, with probability at least $1-n^{-c}$, which equivalently means that the function $F_{in}(\bmu_i,\bL_i)$ is strongly convex in $(\bmu_i,\bL_i)\in\mathbb{R}^d\times\calL_{d\times d}$, with probability at least $1-n^{-c}$.

\noindent{}We next prove the interchange of derivative and expectation.
By the definition of $\widehat\ell(\bx_i)$,
\[
\|\nabla_{\bx_i}\widehat\ell(\bx_i)\|_2 \leq \sum_{j=1}^n \frac{1}{\tau_n^2}(|\bx_i\transpose\widetilde\bx_j-0.5|+2)\|\widetilde\bx_j\|_2,
\]
which is polynomial in $\bx_i$, so $\|\nabla_{\bmu_i,\bL_i}\widehat\ell_{in}\left(\bmu_i+\frac{1}{\sqrt{n}}\bL_i\bZ\right)\|_2$ is dominated by a function that is integrable with respect to the standard normal measure. By Assumption \ref{assumption:prior},
\begin{align*}
\|\nabla_{\bx_i}\log\pi_i(\bx_i)\|_2
&= \left\|\frac{\partial}{\partial\bx_i}\log\pi_i(\bx_{0i}) + \frac{\partial^2}{\partial\bx_i\partial\bx_i\transpose}\log\pi_i(\bar\bx_i)(\bx_i-\bx_{0i})\right\|_2  
\leq C + C\|\bx_i-\bx_{0i}\|_2,
\end{align*}
which is polynomial in $\bx_i$, so $\|\nabla_{\bmu_i,\bL_i}\log\pi_i\left(\bmu_i+\frac{1}{\sqrt{n}}\bL_i\bZ\right)\|_2$ is dominated by a function that is integrable with respect to the standard normal measure. Hence,
\begin{align*}
\expect_\bZ\left[\nabla_{\bmu_i,\bL_i}\widehat\ell_{in}\left(\bmu_i+\frac{1}{\sqrt{n}}\bL_i\bZ\right)\right] &= \nabla_{\bmu_i,\bL_i}\expect_\bZ\left[\widehat\ell_{in}\left(\bmu_i+\frac{1}{\sqrt{n}}\bL_i\bZ\right)\right]  \\
\expect_\bZ\left[\nabla_{\bmu_i,\bL_i}\log\pi_i\left(\bmu_i+\frac{1}{\sqrt{n}}\bL_i\bZ\right)\right] &= \nabla_{\bmu_i,\bL_i}\expect_\bZ\left[\log\pi_i\left(\bmu_i+\frac{1}{\sqrt{n}}\bL_i\bZ\right)\right].
\end{align*}
So $\nabla_{\bmu_i,\bL_i}F_{in}(\bmu_i,\bL_i) = \expect_\bZ\left[\nabla_{\bmu_i,\bL_i}f_{in}(\bmu_i,\bL_i)\right]$.
\end{proof}

\bibliographystyle{apalike} 
\bibliography{reference_VB,reference_VBGraph}       

\end{document}